 % !TeX spellcheck = en_US
\documentclass[11pt]{article}

\includeonly{JTB_Intro_ArXiv.tex, JTB_Supplementary_material.tex}
%\includeonly{JTB_ELS_Main_content.tex}
%\includeonly{JTB_Supplementary_material.tex}

\usepackage[english]{babel}
\usepackage[utf8]{inputenc}

\usepackage{enumerate}
\usepackage{fontenc}
\usepackage{xfrac}
\usepackage{placeins}
\usepackage{amsbsy,amscd, amsthm, amsmath,amsfonts,amssymb}
\usepackage{graphics,graphicx,epsfig}
\usepackage{hyperref}
\usepackage{bbm, natbib}
\usepackage{colortbl}
\usepackage[normalem]{ulem} %added by Viktor 
% to crossing out text
\usepackage{cancel, comment} % xcancel
\usepackage{float}
\usepackage{afterpage}
\usepackage{marginnote}
\usepackage{xcolor}
\definecolor{darkgreen}      {cmyk}{0.90,0.00,0.90,0.10}

\setcounter{secnumdepth}{3}
\setcounter{tocdepth}{0}

\newtheorem{prop}{Proposition}[subsection]
\newtheorem{lem}[prop]{Lemma}% [subsection]

%%%%%%%%%%%%%%%%%%%%%%%%%
% indicator functions, including parenthesis
\newcommand{\idc}[1]  { \mathbf{1}_{\left \lbrace #1 \right \rbrace}  }
%intervals for integers
\newcommand{\II}[1]	{[\![#1]\!]}
%usuals symbols  with spacing
\newcommand{\bv}  {\ensuremath{\, \big| \,} }
\newcommand{\rA}{\mathbin{\scalebox{0.7}{$\rightarrow$} }}
\newcommand{\lA}{\mathbin{\scalebox{0.7}{$\leftarrow$}  }}
%spacing
\newcommand{\hcm}[1]  {\hspace{#1 cm}}
\newcommand{\vcm}[1]  {\vspace{#1 cm}}
%Calligraphic letters
%of \mathbb type 
\newcommand{\bE}  {\ensuremath{\mathbb{E}}}

\newcommand{\bR}  {\ensuremath{\mathbb{R}}}
%of  mathcal type
\newcommand{\cT}  {\ensuremath{\mathcal{T}}}
\newcommand{\cZ}  {\ensuremath{\mathcal{Z}}}

\newcommand{\Wmat}{W}

\let\OLDthebibliography\thebibliography 
\renewcommand\thebibliography[1]{ 
	\OLDthebibliography{#1} 
	\setlength{\parskip}{0pt} 
	\setlength{\itemsep}{5pt plus 0.3ex} 
}

\setlength{\unitlength}{1cm} %\setlength{\topmargin}{0.1in}
\setlength{\textwidth}{6in}
\setlength{\oddsidemargin}{0.3in} \setlength{\evensidemargin}{0.3in}
\setlength{\tabcolsep}{3pt}
\begin{document}
	
\title{
	Inter-city infections and the role of size heterogeneity\\ in containment strategies
}
\author{Viktor Bezborodov
	\thanks{Wrocław University of Science and Technology, Faculty of Information and Communication Technology, 
		ul. Janiszewskiego 11/17, 50-372 Wrocław,
		Poland; E-mail:
		\texttt{viktor.bezborodov@pwr.edu.pl}} 
	\and Tyll Krueger
	\thanks{Wrocław University of Science and Technology, Faculty of Information and Communication Technology, 
		ul. Janiszewskiego 11/17, 50-372 Wrocław,
		Poland; E-mail: 
		\texttt{tyll.krueger@pwr.edu.pl}} 
	\and Cornelia Pokalyuk
	\thanks{University of Lübeck,
		Institute for Mathematics,
		Ratzeburger Allee 160, D-23562, Lübeck, Germany;
		E-mail:\texttt{cornelia.pokalyuk@uni-luebeck.de}} 
	\and Piotr Szyma\'nski
	\thanks{Wrocław University of Science and Technology, Department of Computational Intelligence,
		Wrocław, Poland; E-mail: 
		\texttt{piotr.szymanski@pwr.edu.pl}}
	\and
	Aurélien Velleret
	\thanks{Université Paris-Saclay, Université d’Evry Val d’Essonne, CNRS, LaMME, \ UMR 8071
		\ 91037 Evry, France; E-mail:
		\texttt{aurelien.velleret@nsup.org}, corresponding author}
}

\date{}

\maketitle
\section*{Abstract}

This study examines the effectiveness of regional lockdown strategies in mitigating pathogen spread across regional units, termed cities hereinafter. We develop simplified models to analyze infection spread across cities within a country during an epidemic wave. Isolation of a city is initiated when infection numbers within the city surpass defined thresholds. We compare two strategies: 
strategy~\((P)\) consists in prescribing thresholds proportionally to city sizes,
while the same threshold is used for all cities under strategy~\((U)\).  Given the heavy-tailed distribution of city sizes, strategy \((P)\)  may result in more secondary infections from larger cities than strategy~\((U)\).

Random graph models are constructed to represent  infection spread as a percolation process. In particular, we consider a model in which mobility between cities only depends on city sizes. We assess the relative efficiency of the two strategies by comparing the ratios of the  number of individuals under isolation to the total number of infections by the end of the epidemic wave under strategy~\((P)\) and~\((U)\). Additionally, we derive analytical formulas for disease prevalence and basic reproduction numbers.

Our models are calibrated using mobility data from France, Poland and Japan, validated through simulation. The findings indicate that mobility between cities in France and Poland is mainly determined by city sizes. However, a poor fit was  observed with Japanese data, highlighting the importance to include other factors like e.g. geography for some countries in modeling. Our analysis suggest similar effectiveness for both strategies in France and Japan, while strategy~\((U)\)  demonstrates distinct merits in Poland.

\textbf{Keywords: } Epidemic spreading; Stochastic processes; SIR epidemic model; Random graphs; Containment strategies; Transportation networks; Lockdowns.

\textbf{AMS Subject Classification:} 92D30, 05C80, 05C82,  60J80, 60G55

\section{Introduction}

The spread of  SARS-CoV-2 in early 2020 
and the subsequent invasions of more pathogenic mutants of the original strain
highlighted the crucial need for effective pandemic control methods.
 Since neither drugs nor vaccines were available at the beginning of the pandemic, non-pharmaceutical interventions were arranged to restrict contacts between infected and uninfected individuals and slow down the pace of the  epidemic.
 
In China and New Zealand,  the first epidemic wave could essentially be stopped with regional containment strategies, \cite{MB20}. These countries followed so called zero-COVID strategies where a regional lockdown was already imposed when only a small number of individuals got infected.  For example in China, the application of this strategy resulted in approximately 60 million people (the residents of Hubei province) being under a very strict lockdown during the first wave in 2020, see \cite{WS20}.  Some precaution measures were in addition applied outside Hubei province, like mandatory mask wearing and certain social distancing controls, sometimes proactively even in the absence of detected cases.

Many European 
countries coordinated regional non-pharmaceutical interventions based on the relative incidence rates, which were calculated from the daily number of positive cases identified in individual counties, see e.g. \cite{JarvisEtAl2021} and \cite{HotspotGermany} for respective rules in England and  Germany and \cite{HaleEtAl21} for a general overview on the pandemic policies applied in different countries.
 These strategies  have been only partially successful  and quickly ended in national lockdowns.

 In general, population sizes differ greatly between counties. Heavy-tailed distributions, like the Zipf law or a log-normal distribution, fit very well the distribution of the county
 population sizes for many countries, for France, Poland and Japan see e.g. Figure~\ref{fig_Heavy_Tailed_Dist}. As a consequence, when mitigation strategies are based on relative incidences, many individuals need to be tested positive
 in counties with large population sizes  until measures are applied.
In  counties with small population on the other hand, size thresholds for interventions are already exceeded  when only a relatively small number of individuals gets tested positive. 
 This is particularly problematic, since pathogens often start to spread first within metropolitan areas with large population sizes and only afterwards hit the rural countryside which typically make up counties with relatively small or intermediate population sizes, see for example the initial spread of COVID-19 in the US \cite{Stier2020COVID19AR}.

 \begin{figure}[t]
\begin{center}
\includegraphics[width = \textwidth]{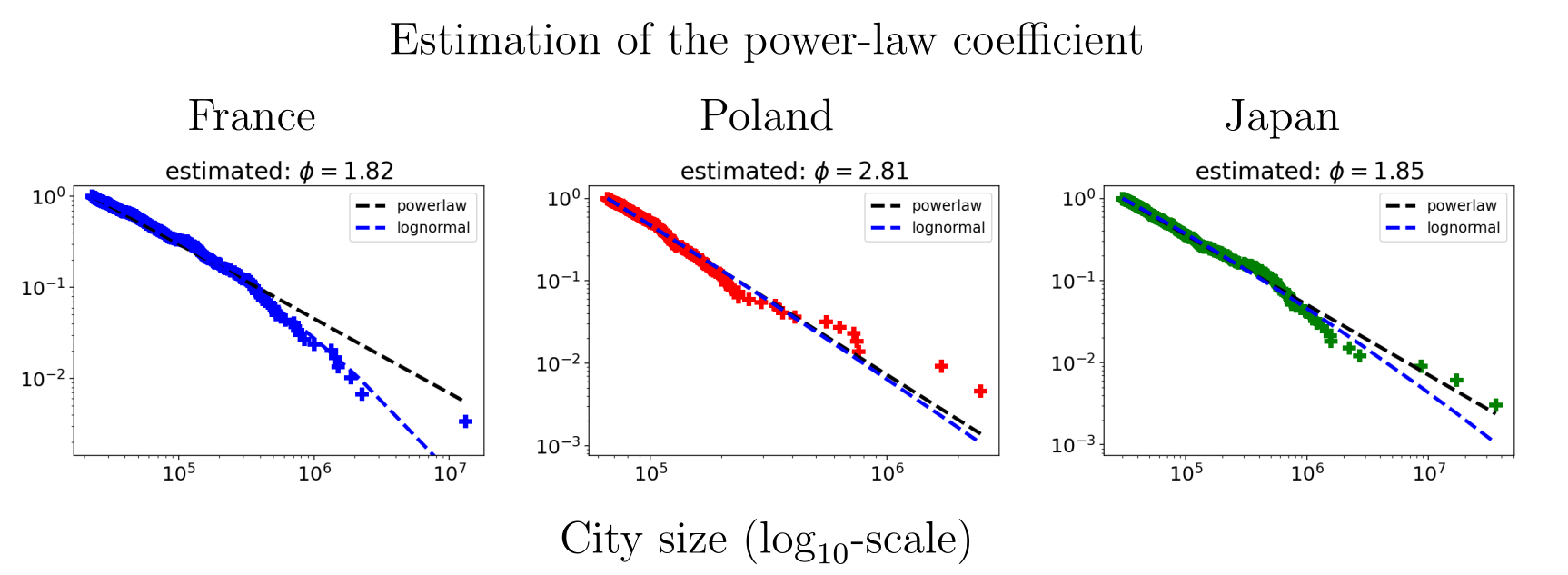}
\end{center}
\vcm{-0.7}
     \caption{Heavy-tailed distributions were fitted to the size distributions of areas of attraction in France and Poland as well as of prefectures in Japan.}
 \label{fig_Heavy_Tailed_Dist}
 \end{figure}

\begin{comment}Content of the figure
\begin{tabular}{c@{ \hskip -0.45in }c@{ \hskip -0.45in }c}
& Estimation of the power-law coefficient&
\vcm{0.2}\\
France & Poland & Japan\\
\includegraphics[draft = false, width = 0.30\textwidth, height = 0.17\textheight]{}
&\includegraphics[draft = false, width = 0.30\textwidth, height = 0.17\textheight]{}
&\includegraphics[draft = false, width = 0.30\textwidth, height = 0.17\textheight]{}\\
  & City size ($\log_{10}$-scale)&
\end{tabular}
\end{comment}

\medskip
To compare the consequences  of the two types of mitigation strategies applied during the COVID-19 pandemic, we  consider in this paper two different criteria at which interventions are imposed, either
when a certain proportion of individuals (later called strategy~\((P)\))
or when a certain fixed number (later called strategy~\((U)\))   gets tested positive in a regional unit.

Specifying thresholds that are determined solely by the unit sizes 
is not merely a simplification for modeling purposes; it mirrors the actual measures implemented during the COVID-19 pandemic 
(e.g. in Germany and New Zealand), 
where authorities often relied on regional population metrics to guide intervention strategies.
Such a choice was primarily motivated by the convenience of implementing them on a national scale,
rather than by a firm evidence of their superiority in regulating propagation.

 We build a simplified model in which 
 the regional units are represented as the vertices of a random graph. 
 From now on, these units are referred to as cities for simplicity. 
 We only  follow  the  spread of the infection  between  cities  and not within a city.  
 The infection spreads according to a SIR model between cities in the sense that each city can participate in the epidemic only once and is isolated 
 (put under lockdown) 
  till the end of the epidemic
 after a certain number of citizens get infected.
A description of preventive measures in this way greatly simplifies the analysis.
Even though complete isolation of a city is a rather extreme scenario,  it is a common pattern that the spread of an arising pathogen occurs in waves. Successive waves may occur if preventive measures are applied that slow down 
pathogen propagation
 and/or if the newly arising pathogen causes at least temporal immunity. Here, we consider a single wave and for simplicity approximate this wave by a SIR-process.

From an infected city, the disease is transmitted to other cities depending on the connectivity between the cities. We consider here two model variants, which we term transportation graph (abbreviated as TG) and kernel graph (abbreviated as KG) in the following. For each pair of cities $(i,j)$ in the transportation graph, the  probability of transmitting the disease from city $i$ to city $j$ is based on the strength of mobility from city $i$ to city $j$. We estimate the strength of mobility between pairs of cities for France, Poland and Japan using commuting data from these countries. 
For the kernel graph, we simplify the model further. We assume that the transmission probability of the infection from city $i$ to city $j$ depends only on the sizes $x_i$ and $x_j$ of the two cities. More precisely, we assume that the transmission probability can be expressed in terms of a kernel function $\kappa_{a, b}(x_i,x_j)$ with parameters $a$ and $b$.
This function $\kappa_{a, b}$ expresses the (country-wide) relationship
between the size of a city and the probability of traveling to and out of that city. The parameters $a$ and $b$ are estimated for France, Japan and Poland,
with commuting data from these countries (as for the transportation graph).

A primary objective of this study is to compare the effectiveness of the two regulation strategies 
 and also to judge the adequacy of model reduction in this scope.
We measure the effectiveness of a strategy by calculating the number of people subject to isolation in relation to a prescribed number of individuals affected by the disease.
The smaller this ratio the more efficient the strategy, since  less persons need to be isolated. 
We analytically benchmark the two containment strategies~\((P)\) and~\((U)\) against each
other in the scenario 
where cities have an equal chance of being contaminated
 by infected visitors
or by inhabitants who got infected during travel
(corresponding to the parameter choice $a = 1+b$). 
In that case, we can determine under
reasonable and simple conditions on the parameters which strategy performs best when
assuming that the numbers of individuals who get infected are under both strategies the
same, see Section \ref{sec_comp}. In the general case, we calculate numerically the theoretical infection
probabilities to compare the two strategies.

According to the estimated best fit parameters for France and Japan, both strategies
perform almost equally well, while strategy~\((U)\)  is preferred over strategy~\((P)\) for Poland.
This result was not expected.
To validate our approximation of the empirical mobility matrix by an (at most) rank-2-
kernel matrix, we simulated epidemics according to the fitted kernel model  as well as with
the fitted TG model. When comparing for the two types of simulations,
several statistics like the probability that a city gets infected during an epidemic  or the
probability that a city of a certain size generates an outbreak, we find overall a
pretty large correspondence for France. For Poland, the correspondence is in general weaker. For
Japan, pretty large difference can be observed.
A possible explanation could be that 
the geographic structure between the cities of Japan
plays a significant role, 
which is neglected in the kernel model.

Before  presenting the detailed methodologies in Sections \ref{sec_model}, \ref{sec_analysis_br} and \ref{sec_data_analysis}, which cover modeling, the probabilistic analysis of the branching process, and data analysis respectively, we give
an overview of our key findings
in Section~\ref{sec_main}. In this overview, we summarize the essential insights and illustrate them with simulation results.
In the final Section~\ref{sec_disc}, we discuss and synthesize our results and offer an outlook on potential future research directions.

\section{Overview on the main results}
\label{sec_main}

By presenting our key findings in this section, 
we aim at emphasizing some essential takeaways without  requiring the reader to engage with the underlying mathematical framework or the data calibration processes,
which will be described in the subsequent sections.

\begin{figure}
\begin{center}
		\includegraphics[width = \textwidth]{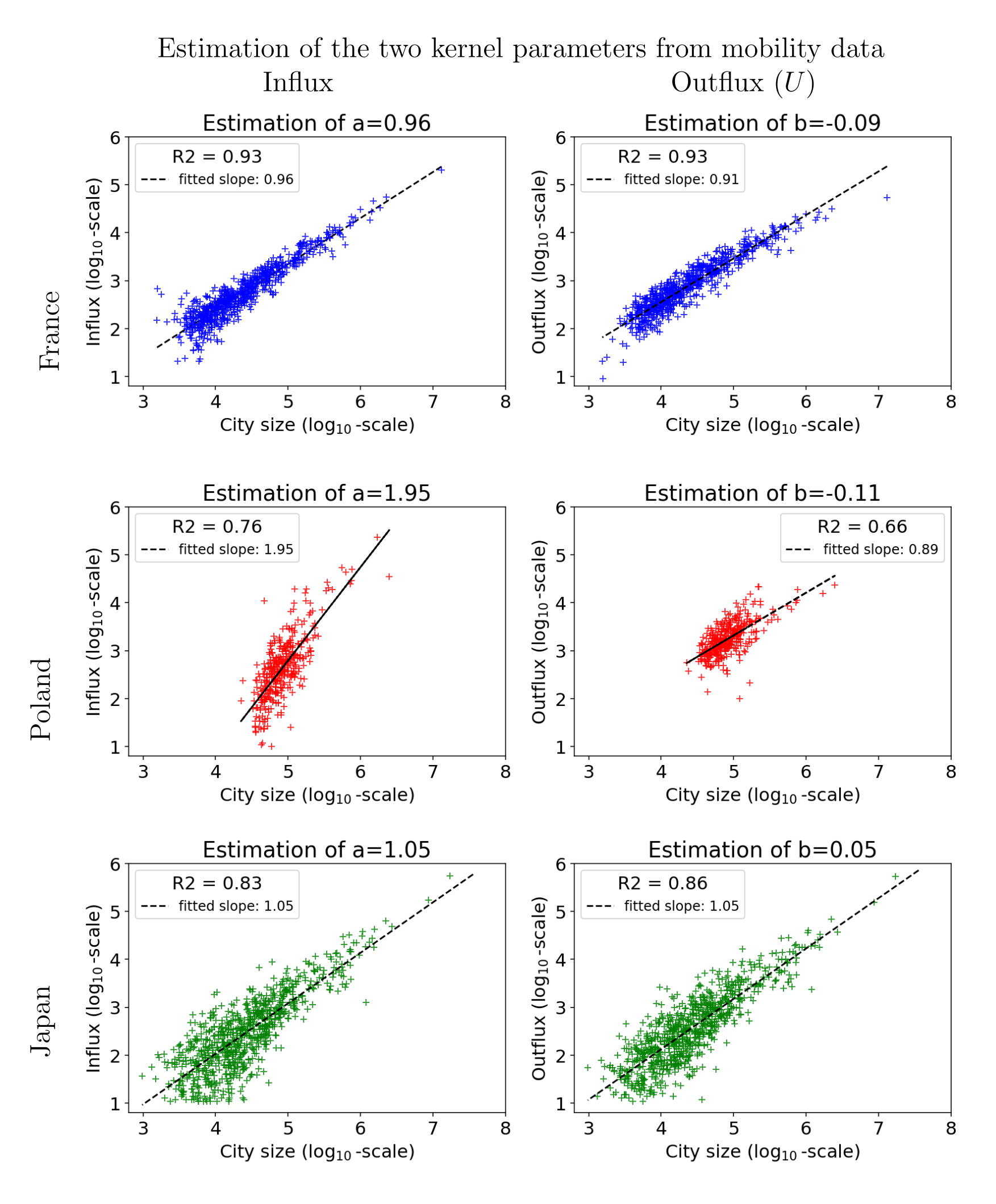}
\end{center}
\vcm{-0.7}
	\caption{
    Inbound travel counts (left column) and outbound travel counts for the estimation of $a$ and $b$, resp., for France, Poland and Japan. }
      \label{fig_Estimate_A_and_B}
  \end{figure}
\begin{comment}
    \begin{tabular}{lc@{ \hskip -0.08in }c}
&\multicolumn{2}{c}{			Estimation of the two kernel parameters from mobility data}\\
& Influx & Outflux \((U)\)\\
\vcm{0.2}
\rotatebox{90}{\hcm{1}  France}
&\includegraphics[draft = false, width = 0.44\textwidth]{}
&\includegraphics[draft = false, width = 0.44\textwidth]{}\\
\rotatebox{90}{\hcm{1} Poland}
&\includegraphics[draft = false, width = 0.44\textwidth]{}
&\includegraphics[draft = false, width = 0.44\textwidth]{}\\
\rotatebox{90}{\hcm{1.6} Japan}
&\includegraphics[draft = false, width = 0.44\textwidth]{}
&\includegraphics[draft = false, width = 0.44\textwidth]{}
\end{tabular}
\end{comment}

As mentioned in the introduction, we build simplifying models that mimic the spread of an infection between cities, see Section~\ref{sec_model} for details. 
The epidemic spread
is encoded by a directed random graph, termed the epidemic graph,
where the nodes of the graph are the cities. 
The set of cities that  would  eventually get infected if an infection process starts in city $i$
is given by the forward connected component of city $i$, 
which is the set of cities $j$ for which there is a directed path going from $i$ to $j$.
 
The probability for an edge  between  city $i$ and $j$ to be present in this graph depends essentially on 
two quantities:
the number of individuals that get infected in city $i$ (from its infection till its isolation) and  the travel from city $i$ to city $j$ and back.

We consider two city isolation strategies,
reflecting different criteria on the number of people infected in the city before the lockdown is implemented. 
Under strategy~\((P)\),  the number of infected persons is 
proportional to the city size.
Under strategy~\((U)\),  the number of people that get infected in an infected city is for all cities the same.  So we assume that the number of infected individuals in an infected city until  isolation is  in total
$L(x)= px$ under strategy~\((P)\) and
$L(x)= L$ under strategy~\((U)\).

Each infected person (from city $i$) can directly transmit the disease to city $j$ by traveling there and infecting citizens of that city. On the other hand, people from city $j$ can contract the disease in city $i$ 
while traveling if they meet an infected person. We use commuting data to estimate the number of individuals traveling 
from city $i$ to city $j$ and back and store this information in a matrix $(M_{i,j})_{i,j}$. 
So if city $i$ is infected   the number of individuals bringing the disease to city $j$ is  on average %}
\begin{align}\label{eq:transmission_matrix}
 k_{C} \frac{L(x_i)}{x_i} (M_{i,j} + M_{j,i})
 \end{align}
where $k_{C}$ is reflecting the intensity of travel,  contact and virulence of the pathogen.
The first summand is accounting for infected citizens from city $i$ who travel to city $j$ and the second summand for citizens from city $j$ who travel to city $i$ and contract  the disease from infected individuals there. 

We have assumed for clarity and simplicity that the number of individuals
bringing the disease is proportional to 
the total fraction of positive cases in infected city $i$, i.e. $L(x_i)/x_i$.
Though it would be meaningful to state it instead in proportion 
to the integral of the incidence up to the time of isolation,
both quantities are expected to scale in the same way with $x_i$.
By adjusting $k_C$, we thus expect the proportionality to $L(x_i)/x_i$
to be relevant as well,
as explained in Section~\ref{sec_exp_form_prob} 
of the Supplementary Material.

To generate the epidemic graph, 
we thus retain any directed edge $(i, j)$ from the complete graph 
with some probability $p_{ij}$, akin to percolation processes,
%with an independent sampling of the edges,
where 
\begin{equation}\label{m_pM_ij}
p_{ij} = 1- \exp\left[-k_C\cdot ( M_{ij} + M_{ji}) \cdot \left(\frac{L(x_i)}{x_i} \right)\right]\,.
\end{equation}
The exponential form of $p_{ij}$
is derived from an approximation
based on large city sizes,
see Section~\ref{sec_exp_form_prob} for more details.

Whenever
epidemic graphs are based on matrices $(M_{i,j})_{i,j}$ which respect the pairwise estimates of mobility between cities, we call them \emph{transportation graphs} (TG, for short), see also Section~\ref{sec_model} and in particular Subsection~\ref{sec_TG}.

We used mobility data from France, Poland and Japan to analyze which parameter regimes are empirically relevant for the matrix $(M_{i,j})_{i,j}$.
We exploited commuting data, usually at the municipal level, that were publicly available from statistical offices from France, Poland and Japan.
Before application, the data were processed by aggregating data into regional units tailored to each country’s functional urban areas. For further details, refer to Section~\ref{Sec:MobilityData}.
Since mobility between nearby cities might be stronger influenced by proximity of the cities instead of the city sizes and we aim to depict with our model rather the spread of a disease on a nationwide level rather than on a regional level, we filtered the traveler counts by excluding mobility between municipalities closer to each other than 30 km (abbreviated as D30+ in the figures). The rationale for this threshold is that the mean commuting time in European countries is roughly about 40 to 45 minutes per day, see \cite{NadalEtAl2022}, and that e.g. in Germany more than 78\% of all persons in employment have a distance of at most 25 km to work and roughly 5\% travel more than 50 km or more to work \cite{CommuterDistance}.

Infection processes based on transportation graphs can be simulated, but they are in general too complex to be analyzed theoretically. Hence, we reduce the model further by assuming that
the traveling probabilities to and from a city both depend only on the city size. In Figure~\ref{fig_Estimate_A_and_B}, inbound and outbound travel counts are plotted for France, Poland and Japan. We see in these plots that it is very natural
to consider in the kernel a power-law relation between inbound travels and
city size and similarly for the relation between outbound travels and city size.
We thus
consider mobility matrices of the form 
$$M_{i,j} = k_M \ x_i^{1+b} \ x_j^a,$$
for some $a,b \in \mathbb{R}$ and 
some constant $k_M$, scaling the traveling intensity appropriately.
Here $x_i^b$ 
is (asymptotically) proportional to the probability that an individual of city $i$ leaves the city (when the city is infected)
and $x_j^a$ is (asymptotically) proportional to the probability that an individual that is traveling chooses  city $j$ as a destination. So the average number of
 people that are traveling from city $i$ to a city $j$ is (asymptotically) proportional to $ x_i\, x_i^b\,  x_j^a =  x_i^{1+b}\, x_j^a$.
 See Section~\ref{sec_kernel} for more details on the kernel model.

The chosen kernel greatly synthesizes the migration data.
The two parameters $a$ and $b$ can be estimated separately from respectively the inbound and the outbound travel counts, see Figure~\ref{fig_Estimate_A_and_B}.
Since the model assumes a 
	linear dependence on  the log scale,
we implement a least square regression for each parameter,
described more precisely in Section~\ref{EstimatingAandB}. We call the epidemic graphs that are based on this model \emph{kernel graph} (KG, for short).
In contrast to the transportation graph,
	  a kernel graph can be produced 
	for an arbitrary number of cities or a different city size distribution, while keeping the same mobility parameters $a$ and $b$.
 \afterpage{\clearpage}

We scale $k_M$ with $1/N$ to approximate  the initial spread via a branching process, as well as backward in time potential routes over which a city can get infected,
as explained in Sections~\ref{sec_fwd_br} and \ref{sec_bwd_br}.
Remark that classical theory that investigates the quality of this approximation 
relies on certain moment conditions on the kernel to be fulfilled that, in our context,
might ''asymptotically" not be appropriate for our data
on city size distribution.
In case we consider as a proxy a powerlaw distribution,
the condition is simply stated in terms of a lower-bound 
of the powerlaw coefficient
as compared to $a$ and $b$.
See Section~\ref{sec_inf_cities} for more details
and Section~\ref{EstimatingPhi} about the  inference of such a powerlaw coefficient from our data.
\smallskip

We use the branching process approximations to derive the following implicit formula for the probabilities $\pi_P(x)$ (resp. $\pi_U(x)$)  that a city of size $x$ gets infected during a (large) outbreak (in the following termed infection probability) under strategy~\((U)\) ( resp. under strategy~\((P)\)): %and the probability that a city initiates an outbreak (in the following termed outbreak probability)
\begin{equation}\label{piP}
	\pi _P(x) 
	= 1- \exp\Big[  - k_B\, p_\vee  \Big( x^a \int_0^\infty y^{b+1} \pi_P(y)    \beta (dy) 
	+  x^{b+1} \int_0^\infty y^{a} \pi _P(y) \beta (dy) \Big) \Big],
\end{equation}
and
\begin{equation}\label{piU}
	\pi_U(x) 
	= 1- \exp\Big[ - k_B\, L_\vee  \Big( x^a \int_0^\infty y^b \pi_U(y)   \beta (dy) 
	+  x^{b+1} \int_0^\infty y^{a-1} \pi_U(y)  \beta (dy) \Big) \Big],
\end{equation}
where $\beta$ denotes the city size distribution and $k_B = N\, k_C\, k_M$, see Section~\ref{sec:infection probability} for more details.
Similarly, the asymptotic outbreak probabilities $(\eta _P(x))_{x\ge 0}$ and $(\eta_U(x))_{x\ge 0}$,
that are the probabilities that a city of size $x$ initiates an outbreak
for resp. Strategies~\((P)\) and \((U)\),
are  approximated via the following recursive formulas:
\begin{equation}\label{etaP}
	\eta_P(x) 
	= 1- \exp\Big[  - k_B\, p_\vee  \Big( x^a \int_0^\infty y^{b+1} \eta_P(y)    \beta (dy) 
	+  x^{b+1} \int_0^\infty y^{a} \eta_P(y) \beta (dy) \Big)\Big],
\end{equation}
and
\begin{equation}\label{etaU}
	\eta_U(x) 
	= 1- \exp\Big[ - k_B\, L_\vee  \Big( x^{a-1} \int_0^\infty y^{1+b} \eta_U(y)   \beta (dy) 
	+  x^{b} \int_0^\infty y^{a} \eta_U(y)  \beta (dy) \Big) \Big],
\end{equation}
see Section~\ref{sec:probability trigger outbreak} for more details.
The recursive nature of the equations gives rise to an iterative procedure to compute
numerically these quantities, see Section~\ref{sec_KB_errors} in the Supplementary Material
for the details of the implementation.

To compare the efficiency of both strategies, we express the expected number of eventually infected individuals
 in terms of the infection probabilities $\pi_P$ and $\pi_U$,
as well as the expected number of individuals eventually under isolation.
Under our assumption
for strategy~\((P)\), the expected number of eventually infected individuals is derived from the branching approximation as:
\begin{align}\label{defI_P}
	I_P := \int_0^\infty p_\vee\cdot x\cdot \pi_P(x) \beta(dx).
\end{align}
while for strategy~\((P)\), the expected number of eventually infected individuals is
\begin{align}\label{defI_U}
	I_U := \int_0^\infty L_\vee\cdot \pi_U(x) \beta(dx),
\end{align}

Similarly, the expected number of individuals eventually under isolation
is (under our assumptions) given by
\begin{equation}
%\begin{split}
Q_P := \int_0^\infty x \cdot \pi_P(x) \beta(dx), \qquad Q_U := \int_0^\infty x \cdot \pi_U(x) \beta(dx).
 %\end{split}
	\label{QDef}
\end{equation}

If the same number $I_\star$ of individuals gets infected under both strategies 
 (by  adjusting only the parameters $L_\vee$ and $p_\vee$), we consider the strategy that requires fewer individuals to be isolated as the more efficient one.
In the case where $a=b+1$, we find that strategy~\((U)\) is more effective than strategy~\((P)\) if $a>1$,
that strategies ~\((U)\)  and~\((P)\) perform equally well if $a=1$ (and $b=0$) and that strategy~\((P)\) outperforms strategy~\((U)\) otherwise, as stated in Proposition~\ref{ineq}, see more generally Section \ref{sec_comp} .
It is important to note that these rankings hold true regardless of the value of $I_\star$.

Since for France and Japan, $a\approx 1+b$ and $a\approx 1$, see Figure~\ref{fig_Estimate_A_and_B}, we conclude that both strategies perform roughly equally well for these countries.
\smallskip

\begin{figure}[t!]
\begin{center}   
    \includegraphics[width = \textwidth]{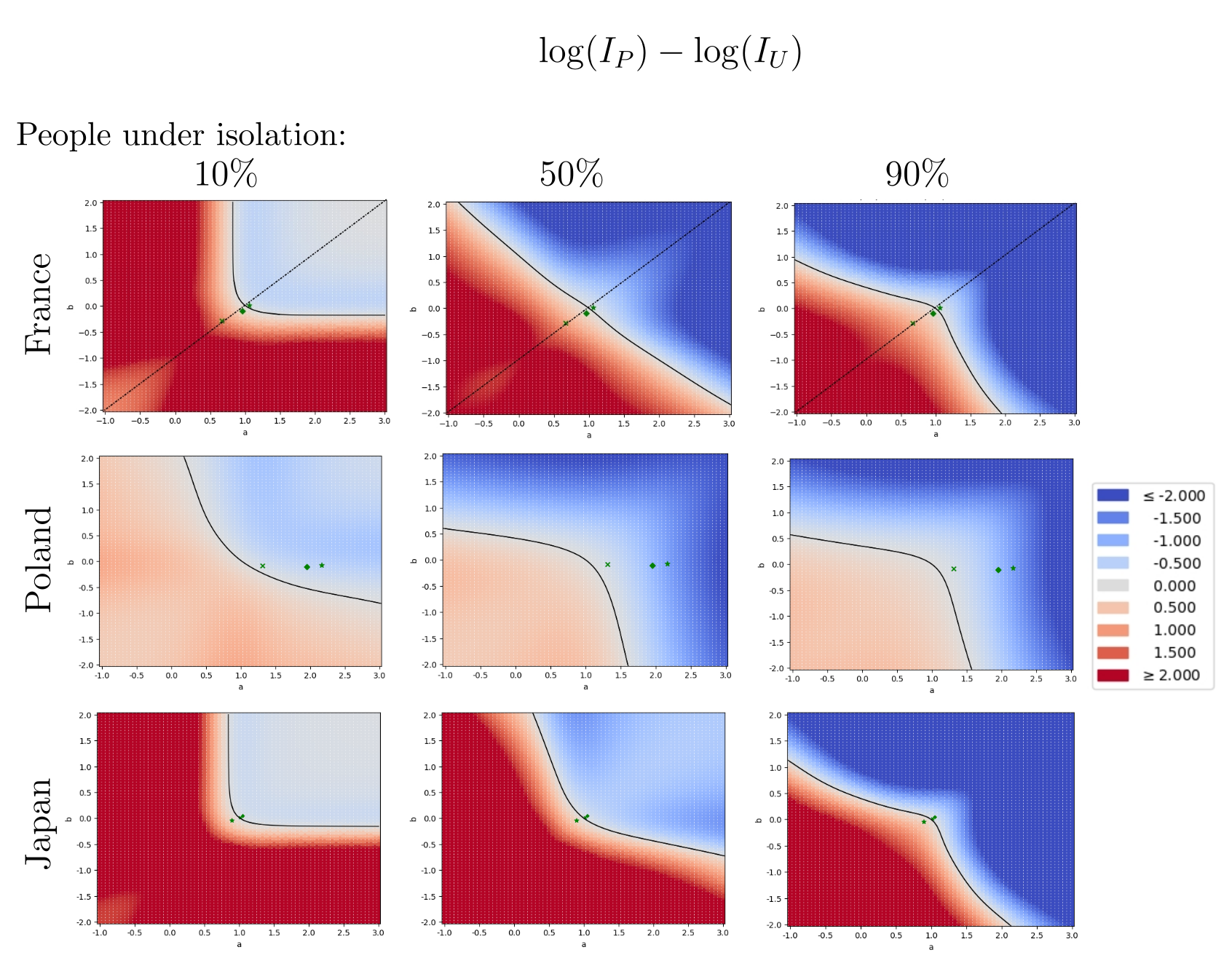}
	\end{center}
    \caption{  
    Logarithm of the ratio of the expected number of isolated individuals under strategy~\((U)\) vs. strategy~\((P)\) for a fixed number of infected individuals.  The colour code for the values of $\log(I_P)-\log(I_U)$ is given on the right of the figures.
Green points represent estimates of $(a,b)$ for each country,
$\times$, $\bullet$, $\star$, resp. give estimated values of (a,b) for various filtering on the distance, described as 
D1+, D30+ and D50+, resp., see Section~\ref{Sec:MobilityData}.
 For the black line $I_P= I_U$ and for the dashed black line $a=b+1$. }
\label{fig_Numerical_Comparison_UP}
\end{figure}

\begin{comment}
$\log(I_P) - \log(I_U)$

{\small People under isolation:}

\begin{tabular}{l@{ \hskip -0.03in }ccc@{ \hskip -0.03in }l}
&  $10\%$
& $50\%$
&  $90\%$
&
\\
\rotatebox{90}{ \qquad France} & \includegraphics[draft = false, width=0.25\textwidth]{} & \includegraphics[draft = false, width=0.25\textwidth]{} &
\includegraphics[draft = false, width=0.25\textwidth]{}& 
%\includegraphics[draft = false, width=0.1\textwidth]{}
\\ \rotatebox{90}{   \qquad Poland} &
\includegraphics[draft = false, width=0.25\textwidth]{}&
\includegraphics[draft = false, width=0.25\textwidth]{} &
\includegraphics[draft = false, width=0.25\textwidth]{} &
\includegraphics[draft = false, width=0.1\textwidth]{Farbpalette.jpg}\\
\rotatebox{90}{  \qquad Japan} &\includegraphics[draft = false, width=0.25\textwidth]{}&
\includegraphics[draft = false, width=0.25\textwidth]{} &
\includegraphics[draft = false, width=0.25\textwidth]{} 
&
\end{tabular}
\end{comment}

In the general case, where $a$ is not necessarily equal to $1+b$, as estimated for Poland, we do not have analytical results on the efficiency of strategy~\((U)\) in comparison to strategy~\((P)\) and vice-versa.
%(as given in Proposition \ref{ineq}). 
Instead, we compare strategy~\((U)\) and \((P)\) numerically. 

We consider the city size distributions %(with a filtering according to D30+) 
from France, Poland and Japan,
while the parameter range  covers the  values of $a$ and $b$ which  we estimated with mobility data from France, Poland and Japan. Furthermore, we consider three different degrees of  severity of outbreaks, with the fraction of the population in isolated  cities  set to 10\%, 50\% or 90\%.
Fixing this fraction determines uniquely the value $Q_\star$ and, as a consequence of \eqref{QDef}, \eqref{piP} and \eqref{piU}, also the product $m_p:= k_B\, p_\vee$ for strategy \((P)\) and the product $m_L:=k_B\, L_\vee$ for strategy \((U)\). 
It then follows from \eqref{defI_P} and \eqref{defI_U}
that the ratio $I_P/I_U$ does not depend on $k_B$.
Hence, in the numerical calculations, $k_B$ can be chosen as an arbitrary positive real number,  that fulfills $\tfrac{m_p}{k_B} \in [0,1]$  and $\tfrac{m_L}{k_B} \leq \min_x\{\beta(x)\}$.

Given the values of $p_\vee$, $L_\vee$, $k_B$ as well as $a$ and $b$,  we calculate  numerically the values of $I_P$ and $I_U$, i.e. the proportion of infected individuals  under strategy~\((P)\) and \((U)\) conditioned on $Q_{\star}$, and plotted $\log(I_P/I_U)$ in Figure~\ref{fig_Numerical_Comparison_UP}. 
In these figures, we also added the estimated values of $a$ and $b$ for France, Japan and Poland, respectively. 
\smallskip

Based on these estimates, 
strategy~\((U)\) consistently outperforms stra\-tegy~\((P)\) in Poland, where $a\approx 2$ and $b \approx 0$.
Concretely, the number of infected persons is roughly three times larger under strategy~\((U)\) when the fraction of quarantined cities is set to 90\%, while it is roughly 1.5 times larger when  this fraction is set to~10\%.
We see that changing the type of isolation strategy may strongly impact the outcome of the epidemic. 

We confirm the theoretical ranking 
along the axis $a=1+b$,
and in particular that both strategies perform approximately equally well in France and Japan.
However, we can observe that the function $I_P/I_U$ exhibits intriguing patterns and is strongly influenced by the value of the fraction of quarantined cities as well as the city size distribution, see Figure~\ref{fig_Numerical_Comparison_UP}. 

For Poland for instance,  the curve at which both strategies perform equally well, i.e.~at which $I_P= I_U$, exhibits a transition from a concave shape when a large fraction of cities are quarantined to a convex shape when such a fraction is small. Conversely for France, this curve does not exhibit a distinctly concave or convex shape across two of the three scenarios considered within the specified parameter range of $a$ and $b$.
However, we can observe  for all considered city size distributions that
we have a transition from a more concave to a more convex curve for decreasing fractions of quarantined cities. 
\medskip

\begin{figure}[t!]
	\begin{center}
    	\includegraphics[width = \textwidth]{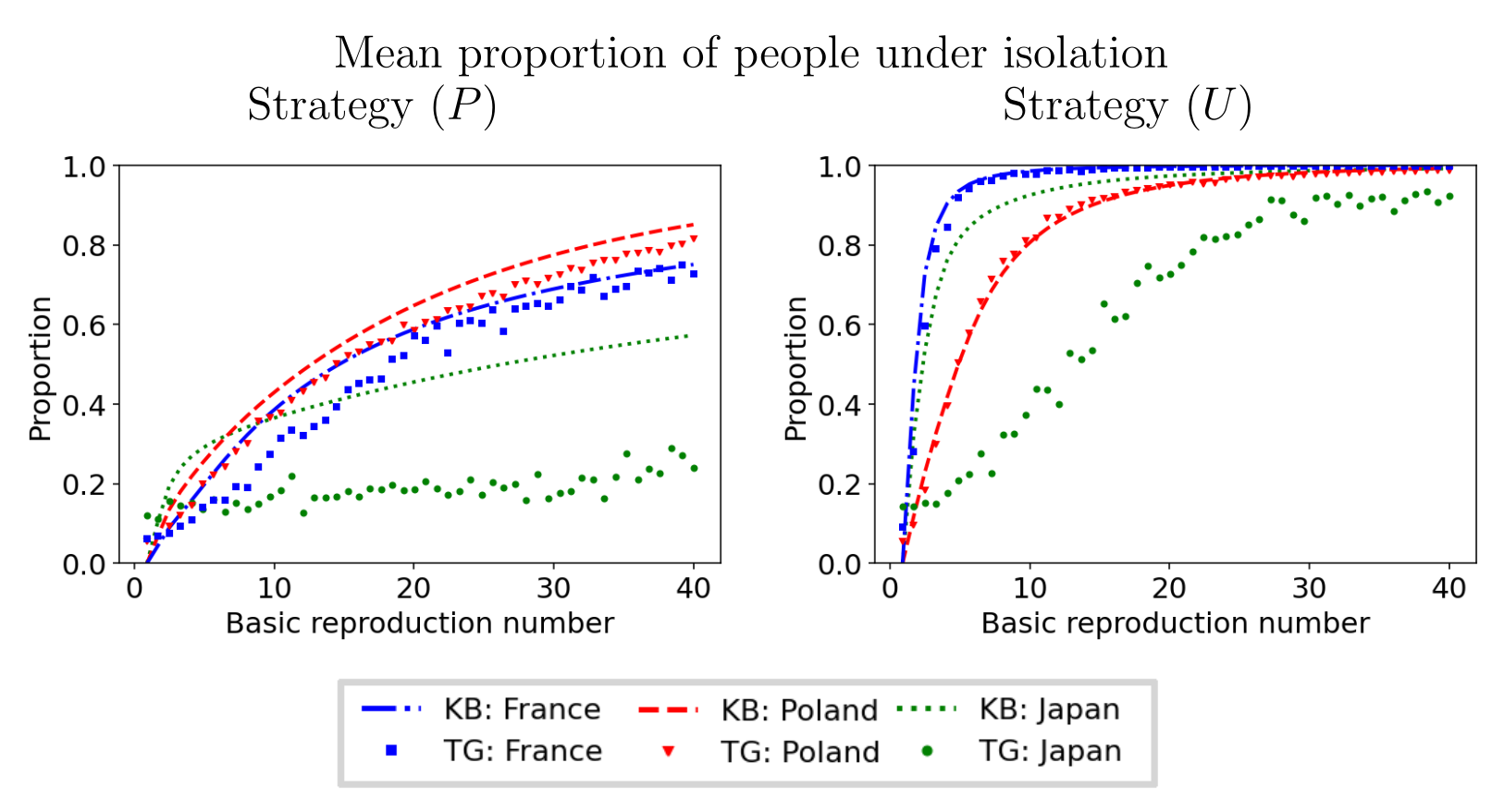}
	\end{center}
    \vcm{-0.7}
	\caption{
		Comparison of the mean TG and KB proportions of people under isolation 
			depending on the basic reproduction number. Dashed lines show proportions when simulations are performed according to the transportation graph (TG), and scatter plots when
            proportions are calculated with KB approximations,  
            for France (in blue), Poland (in red) and Japan (in green),
			under strategy~\((P)\) on the left and strategy~\((U)\) on the right.
	}
	\label{Fig_incidence_ppl}
\end{figure}
\begin{comment}
    		Mean proportion of people under isolation 
		\begin{tabular}{cc}
			Strategy \((P)\) & Strategy \((U)\)\\
			%  & \multicolumn{2}{c}{Mean proportion of people under isolation}\\
			\includegraphics[draft = false, width = 0.44\textwidth]{}   
			& \includegraphics[draft = false, width = 0.44\textwidth]{}
            \\
     \multicolumn{2}{c}{\includegraphics[draft = false, width = 0.50\textwidth]{}}   
		\end{tabular}
\end{comment}

To check whether the alignment of the TG and KG infection and outbreak probabilities is robust for varying epidemic strength,
there are two natural summary statistics,
namely the proportion of people (resp. cities) under isolation.
We compare the analytically derived  proportions with the corresponding simulated proportions, that can be calculated from epidemics generated  on the basis of TG and KG graphs for France, Poland and Japan, see Figure~\ref{Fig_incidence_ppl}. In the figures, we abbreviate the analytically derived curves by KB (K for {\em kernel} and B for {\em branching}), since they are based on the branching processes approximations.
In the following,
we refer to the mean proportion of people under isolation
calculated from simulations based on the transportation graph 
(resp.~the kernel graph) by the abbreviated notations
of TG (resp.~KG) proportion under isolation.
\smallskip

For Poland and France, 
we find an agreement of the TG proportion under isolation with the KG proportion of isolation which is surprisingly good given the high level of reduction to arrive at the KG graph. 
The correspondence is almost perfect under strategy~\((U)\).
Under strategy~\((P)\), we observe a downward shift for Poland.
For France under strategy~\((P)\), the KG proportion under isolation 
displays a more regular shape
that appears to interpolate 
between the TG proportion under isolation for resp. small 
(less than 5) and large values (more than 20) of the $R_0$. 
For Japan however,
the correspondence between KG and TG proportions under isolation is 
very weak,
which shows that it is not  appropriate for every country to neglect the geometric structure of a country and base the mobility matrix solely on city sizes.

\begin{figure}[p]
	\begin{center}
    \includegraphics[width = \textwidth]{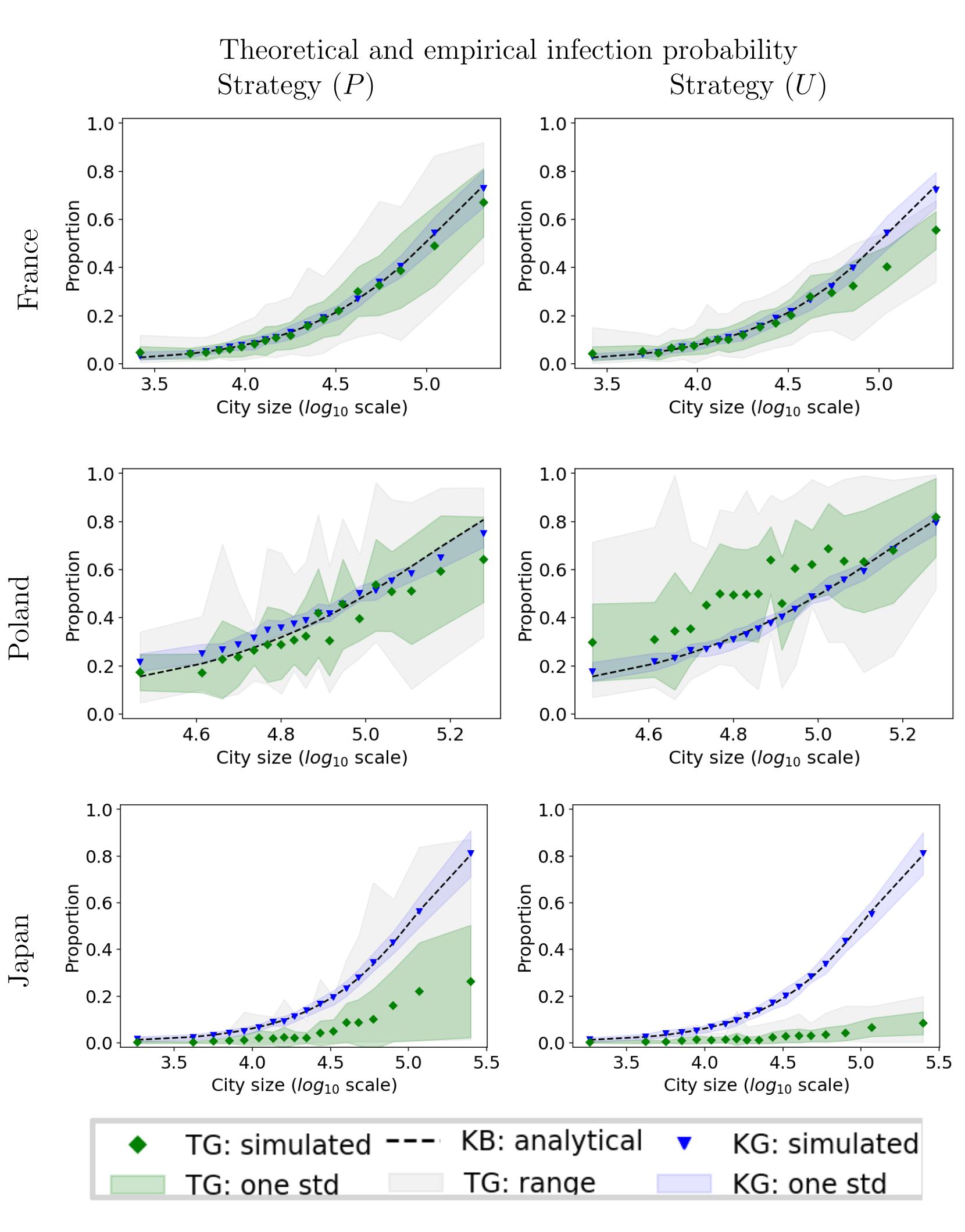}
	\end{center}
    \vcm{-0.7}
	\caption{Comparison of simulated and theoretical infection probabilities, on the left for strategy~\((P)\) and on the right for strategy~\((U)\) for mobility data from Poland and Japan in the upper and lower row, resp.
		The $R_0$ value is adjusted such that  the theoretical infection probability is 0.5 for cities of size $10^{5}$, see Section \ref{sec:infection probability}.}
	\label{Fig_Compare_Analytical_PI_Empirical_PI}
\end{figure}

\begin{comment}
    %		Theoretical and empirical infection probability
%		\vcm{0.2}
		
		\begin{tabular}{lc@{ \hskip -0.10in }c}
			&\multicolumn{2}{c}{Theoretical and empirical infection probability}\\
			& \quad Strategy \((P)\) & \quad Strategy \((U)\)\\
			\vcm{0.2}
			\rotatebox{90}{\hcm{1.6}  France}
			&\includegraphics[draft = false, width = 0.44\textwidth]{}
			&\includegraphics[draft = false, width = 0.44\textwidth]{}\\
			\rotatebox{90}{\hcm{1.6} Poland}
			&\includegraphics[draft = false, width = 0.44\textwidth]{}
			&\includegraphics[draft = false, width = 0.44\textwidth]{}\\
			\rotatebox{90}{\hcm{1.6} Japan}
			&\includegraphics[draft = false, width = 0.44\textwidth]{}
			&\includegraphics[draft = false, width = 0.44\textwidth]{}\\
			&\multicolumn{2}{c}{\includegraphics[draft = false, width = 0.80\textwidth]{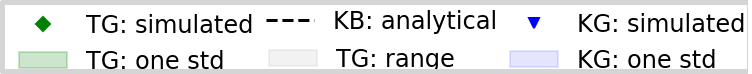}}	
		\end{tabular}
\end{comment}

\begin{figure}[p]
	\begin{center}   
    \includegraphics[width = \textwidth]{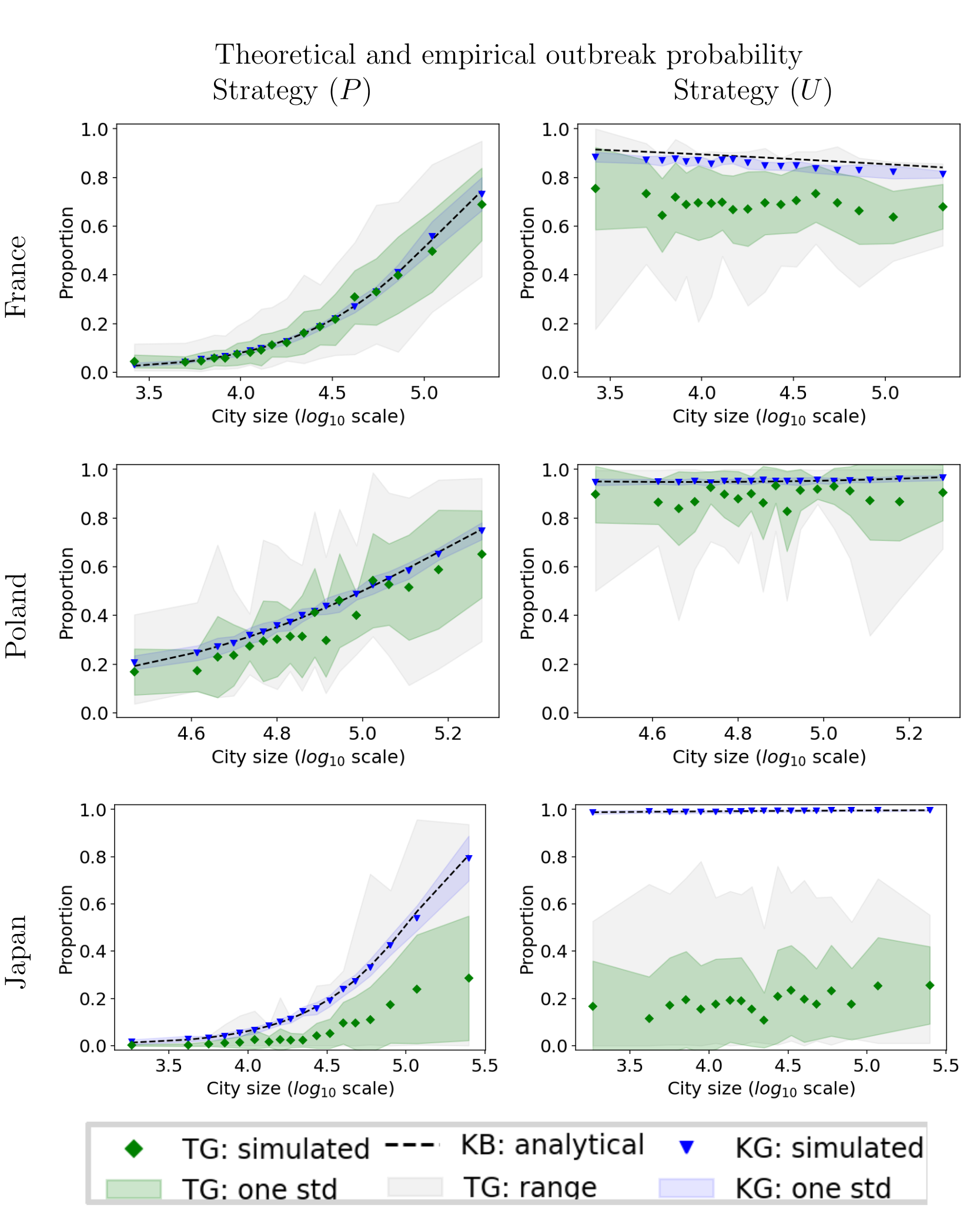}
	\end{center}
    \vcm{-0.7}
	\caption{Comparison of empirical and theoretical outbreak probabilities, on the left for strategy~\((P)\) and on the right for strategy~\((U)\) for mobility data from France, Poland and Japan in the upper, middle and lower row, resp. The $R_0$ value is adjusted such that  the infection probability is 0.5 for cities of size $10^5$, see Section \ref{sec:infection probability} for more details.}\label{Fig_Compare_Analytical_PO_Empirical_PO}
\end{figure}

\begin{comment}
    \begin{tabular}{lc@{ \hskip -0.08in }c}
&\multicolumn{2}{c}{Theoretical and empirical outbreak probability}
\\      & \quad Strategy \((P)\) & \quad Strategy \((U)\) \\
			\rotatebox{90}{\hcm{1.6} France}
			&\includegraphics[draft = false, width = 0.44\textwidth]{}
			&\includegraphics[draft = false, width = 0.44\textwidth]{}\\
			\rotatebox{90}{\hcm{1.6} Poland}
			&\includegraphics[draft = false, width = 0.44\textwidth]{}
			&\includegraphics[draft = false, width = 0.44\textwidth]{}\\
			\rotatebox{90}{\hcm{1.6} Japan}
			&\includegraphics[draft = false, width = 0.44\textwidth]{}
			&\includegraphics[draft = false, width = 0.44\textwidth]{}\\
			&\multicolumn{2}{c}{\includegraphics[draft = false, width = 0.80\textwidth]{legend_S2_variation_infected_TKG.png}}	
		\end{tabular}
\end{comment}

In the following plots, we compare the infection and outbreak probabilities calculated on the basis on the transportation graph (TG), kernel graph (KG) and the kernel branching process approximation (KB). 
For France, the TG infection probabilities are in very good agreement with KB  and KG infection probabilities, see Figure~\ref{Fig_Compare_Analytical_PI_Empirical_PI}.  For Poland, the correspondence is actually worse, despite the better agreement observed in Figure~\ref{Fig_incidence_ppl}. 
For Japan, TG infection probabilities 
deviate from KG and KB infection probabilities over a broad range of city sizes,
with a more significant impact of the largest cities on the proportion of people under isolation. %} 
In contrast, 
KG and KB infection probabilities
are in very good agreement
whatever the country, 
which shows that  the branching process approximation is appropriate at least for the considered cases.

Similarly, TG outbreak probabilities 
	are in a very good agreement 
	with KG and KB outbreak probabilities
	under strategy~\((P)\) for both France and Poland, see Figure~\ref{Fig_Compare_Analytical_PO_Empirical_PO}.
     This is expected, since under strategy~\((P)\) infection and outbreak probabilities coincide, see \eqref{piP} and \eqref{etaP}.

     On the other hand, we see that under strategy~\((U)\),
TG outbreak probabilities typically do not agree well with KB nor KG  outbreak probabilities.
The geographical structure has then a much stronger influence, 
leading to a reduced propagation of the disease as compared 
to a dissemination that would be global.
A criterion for comparing mitigation strategies based on infection probabilities is therefore expected to provide more reliable comparisons
with respect to the specifics of migration data
in contrast to those based on outbreak probabilities.

More details on the quantities that we evaluate for these comparisons 
of respectively infection and outbreak probabilities
are provided in Section~\ref{sec_simulated_infection_and_outbreak_prob}. 
In particular, the above conclusions are also discussed further in terms of a more quantitative comparison of the probabilities.

The strong association between the TG proportions under isolation and infection probabilities as determined by the KG values 
for France and Poland argues for approximating the epidemic by the KB approximation.
Consequently, it is also reasonable to use the KB approximation to infer numerically the relative advantage of one strategy over the other.
The fact that the approximations used for the comparison presented in Figure~\ref{fig_Numerical_Comparison_UP} 
are legitimate  reinforces the conclusions drawn from it:
 for France both strategies would perform equally well
 under any scenario, while for Poland,
strategy \((U)\) would generally outperform strategy \((P)\).
\medskip

The kernel branching approximation motivates also the definition of a basic reproduction number, that can be expressed explicitly and be calculated easily numerically, 
as described in Section~\ref{Section Reproduction Number},
see in particular \eqref{iniquity sim wickedness}
and \eqref{dry run sim rehearsal}.

As we show in Section~\ref{sec_adj_str},
we can directly relate 
the thresholds $p_\vee$
and $L_\vee$ of the two strategies~\((P)\)
and \((U)\) to the ratio of the corresponding reproduction numbers.
The most crucial comparison 
is between the thresholds $p^*_\vee$
and $L^*_\vee$ that correspond to a basic reproduction of 1, 
as it relates to the emergence of large outbreaks.
As a first guess, one might expect the  corresponding ratio $L^*_\vee/p^*_\vee$
to be close to the average city size.
However, it may be much larger than the average size obtained by selecting a city uniformly at random. 
In the context of France, this ratio more closely aligns with the average size of the city inhabited 
by a randomly chosen individual,
which is many times larger than the former.

On the other hand, it seems out of reach to accurately assess such basic reproduction numbers based on the temporal increase in the number of infected cities, as demonstrated by the findings in Section~\ref{sec_val_R0}.

We used data from 
the COVID-19 pandemics
to provide an example
for an empirical estimate
of a (city-wise) basic reproduction number,
see Section~\ref{sec_empirical_R0}.
\medskip

To gain a deeper understanding  of the correspondence of the transportation and kernel graphs for different countries, we performed an additional comparative analysis of the distributions of expected in-degree and out-degree as functions of city size.
It appears that many of our observations
regarding the alignment of the infection and outbreak probabilities
are closely linked to  the alignment
of the in- and out-degree, as more extensively discussed in Section~\ref{sec_degree}.
For instance, 
out-degrees in the transportation graph  are notably smaller than those in the kernel graph, particularly in Japan.
In addition, the largest cities of Japan have smaller in-degrees in the transportation graph than 
in the kernel graph, which can lead to significant disparities in the overall epidemic spread.

\section{Models}
\label{sec_model}

\subsection{A model for the spread of an infection between cities}
\label{sec_epidemic_graph}

The underlying mathematical foundation for our analysis
is the relation between SIR epidemic models
and inhomogeneous random directed graph from \cite{CO20},
see also \cite{BJR06} for the undirected case.
The vertices of the random graph are interpreted
as the cities within a country, with a given ordering from $1$ to $N$.
To generate the epidemic graph, directed edges \((i, j)\) from the complete graph are retained with probability \( p_{ij}\). This procedure is similar to percolation, with each edge being sampled independently.

In this epidemic graph,
a directed edge $(i,j)$ between city $i$ and city $j$ represents 
a potential infection event, in the sense that, if city $i$ would get infected, it would spread the infection to city $j$ (given that city $j$ is not yet infected). Hence, the forward connected component of city $i$ in such a random directed graph represents the set of cities that  would  eventually get infected if an infection process starts in city $i$. 
In other words,
from any choice 
of an initial epidemic source, it is possible to determine all the cities that become infected by analyzing the epidemic graph. 
These cities correspond to the vertices accessible by a path issued from the epidemic source, formed by the oriented edges of the graph.
The relation of this conclusion to a discrete-time SIR epidemic model
is well described, e.g. in~\cite{CO20}.
For $k\ge 1$, the $k$-th generation of infected cities
then consists of the vertices whose graph distance to the source of infection
is exactly~$k$.
\smallskip

We specify the probabilities $p_{ij}$
by accounting first for mobility between cities,
second for the effect of local restrictions
and third for a transmission factor.
We take as an input some mobility matrix
$(M_{ij})_{i, j\in \II{1, N}}$ that reflects the migration level
of citizens from city $i$ to city $j$.
Local restrictions limit the number of infected individuals in a city. A function $L:\bR_+\to \bR_+$
that defines this number for any city size reflects the strategy of mitigating the epidemic (which is only based on city sizes). In an infected city of size $x$, a fraction $L(x)/x$ of all citizens is infected. 
A scaling factor $k_C>0$ is considered in addition
to adjust the strength of disease propagation.
The following expression for $p_{ij}$ is considered:
\begin{equation}\label{pM_ij}
p_{ij} = 1- \exp\left[-k_C\cdot ( M_{ij} + M_{ji}) \cdot \left(\frac{L(x_i)}{x_i} \right)\right]\,.
\end{equation}
Note that the disease can disseminate to city $j$ due to infected citizens of city $i$ 
visiting $j$, which leads to the term proportional to
$M_{ij} L(x_i)/x_i $,
but also due to the citizen of city $j$
visiting $i$ and bringing back the disease.
The later is represented by 
the term proportional to
$M_{ji} L(x_i)/x_i$,
with the assumption that any visited citizen would get infected in city $i$ at a rate 
proportional to the incidence, thus proportional to $L(x_i)/x_i$.
The latter transmission events are designated as infections from the inside (the inside of city $i$)
in contrast to the former events others considered to be from the outside (since infected citizens of $i$ 
are traveling and transmit the disease outside of city $i$).

The exponential form of $p_{ij}$
is derived from an approximation
based on large city sizes,
see Section~\ref{sec_exp_form_prob} of the Supplementary material for more details.
Remark that we relate the number of infectious contacts between infected individuals of city $i$ and citizen of $j$ 
to the total number of infected individuals  in city $i$ at isolation, i.e. $L(x_i)$.
Alternatively, considering instead the time integral of the infectivity rate over the "infectious period" in city $i$ may be viewed as a more realistic approach.
However, in fact, these two quantities can be expected to be proportional.
In particular, the factor of proportionality can be incorporated in the parameter $k_C$.
This proportionality naturally arises in the context of an epidemic growing exponentially within the cities, assuming that the growth rate is independent of the specific city. 
More details on this relation are also provided in Section~\ref{sec_exp_form_prob}.
\smallskip

We consider two alternative choices of the function $L$
that correspond to two different  strategies
for the epidemic containment.
Strategy~\((P)\)
consists in prescribing a threshold for isolation that is proportional to the city size,
thus given through a proportion $p_\vee$,
i.e.  $L_P(x) := p_\vee \cdot x$.
Strategy~\((U)\) consists in prescribing a threshold $L_\vee$
for isolation that is independent of the city size,
i.e. $L_U(x) := L_\vee$.

In an actual epidemic, it is in general not possible to know the number of infected individuals, due to e.g. the time-shift between the infection of an individual and the detection of its infection and since not all individuals can be tested at any time point. In particular, it is in general not possible to know with  certainty, if a given threshold $L$ has been reached. In our model, we neglect these uncertainties. 
\smallskip

The migration matrix $(M_{ij})_{i, j}$ 
and the city sizes $(x_i)$
are to be fitted with data,
that are taken in our analysis  from France, Poland and Japan.
 Next, we explain how this fit can be obtained for the different models of approximation.

\subsubsection{The transportation graph model}
\label{sec_TG}

In the transportation model,
the entries of the matrix are estimated separately according to the same procedure for any pair of cities.
$N\times N$ parameters are then to be inferred from migration data.
This graph 
is expected to retain mobility patterns formed 
by other factors than the city sizes,
e.g. by the spatial distribution of the cities.
We call the corresponding epidemic graph  the transportation graph (TG).

\subsection{Reduction to a kernel depending solely on city sizes}
\label{sec_kernel}

\subsubsection{The kernel graph model}
\label{sec_KG}
We propose a reduction of the model
through an alternative choice of the migration matrix 
that serves three purposes:
first we want to have expressions as directly related
to our focal feature, the city sizes,
which we expect to be one of the determining factor
of epidemic progression;
second we look for more efficient numerical approaches;
third we are interested to generate an epidemic graph 
that can scale with the number of vertices, 
by relating to the framework of \cite{CO20}.

The reduction that we propose consists in assuming a specific form
of the mobility matrix $(M_{ij})_{i,j\in \II{1, N}}$,
namely that of a kernel function.
The kernel denoted by $\kappa_M$
 receives as inputs the sizes $x_i$ and $x_j$
of respectively the source and the visited cities, 
i.e. $M_{ij} = \kappa_M(x_i, x_j)$.
Several observations
confirmed to us the interest of looking for 
non-symmetrical gravity type kernels
of the following form:
\begin{equation}
\kappa_M(x, y) = k_M\, x^{1+b}\cdot y^a,
\end{equation}
where the two parameters $a$ and $b$ are to be fitted
from the transportation matrix, for $x, y \in \bR_+$.
These observations include those of the influx and outflux, recalling Figure~\ref{fig_Estimate_A_and_B},
but also of the transportation matrix,
see Section~\ref{sec_degree} 
on the degree distribution of this random graph.
While  $y^a$ corresponds to the bias towards a visit of a city of size $y$,
 $x^b$ refers to mobility of citizens of a city of size $x$. If $a>0$ there is a bias of travels being more often directed into large cities,
while if $b>0$ there is a bias of inhabitants of large cities
to travel more often abroad. 

It is an essential property of $\kappa_M$
that it takes a separable form:
it is the product of two terms, 
namely $x^{1+b}$ and $y^a$,
which depend separately 
on $x$ and $y$,
the source and visited city sizes respectively.
This property simplifies the analysis of the branching process
and allows for very efficient numerical implementation of the epidemic process.

The probabilities $(p_{ij})_{i,j\in \II{1, N}}$
are then computed according to \eqref{pM_ij}
in order to produce the random sampling of the epidemic graph
which we name the kernel graph (KG).
Distinguishing the contribution 
of infections from the outside of $i$ 
to those from the inside leads to consider the two kernels 
$\kappa_O, \kappa_I : \bR_+^2\to \bR_+$
defined respectively as:
\begin{equation}\label{kappa_An}
 \kappa_{O}(x,y):=  \frac{k_B}{N}\, L(x) \cdot x^b \cdot y^a = k_C \frac{L(x)}{x} \kappa_M(x,y) 
\end{equation}
and as
\begin{equation}\label{kappa_Bn}
 \kappa_{I}(x,y):= \frac{k_B}{N}\, L(x) \cdot x^{a-1} \cdot y^{1+b} = k_C \frac{L(x)}{x} \kappa_M(y,x).
\end{equation}
We deduce the following identity 
for the probabilities $p^\kappa_{ij}$ involved in the kernel graph:
\begin{equation}\label{pK_ij}
- \ln\big(1- p^\kappa_{ij}\big) :=  \kappa_O(x_i, x_j) + \kappa_I(x_i, x_j)\,.
 \end{equation}

 \subsubsection{The limit of a large number of cities}
\label{sec_inf_cities}
We are interested moreover in a scaling of the kernel graph
with the number  $N$ of vertices.
The proposed form of the mobility matrix
is directly extended 
for any set of vertices with city sizes as attributes,
leading to a similar kernel graph between these vertices.
For this scaling, we thus interpret the 
observed distribution of city sizes
as the sampling of a  probability distribution $\beta$.
To obtain a kernel graph with $N'$ vertices,
we can draw $N'$ city size values 
independently according to $\beta.$

Basically, $\beta$ 
can simply be evaluated as the empirical distribution
$\sum_{i=1}^{N} \delta_{x_i}$, $N$ being very large.
Alternatively,
empirical size distributions of large cities are generally well described by heavy-tailed distributions, like the log-normal distribution which density takes the form 
 \begin{equation*}
 \beta(x) := \frac{\idc{x>0}}{\sqrt{2 \pi} \sigma x}  \exp\left(- \frac{(\ln(x) - \mu)^2}{2 \sigma^2} \right) , 
\end{equation*} 
for the two parameters $\mu \in \mathbbm{R}$ and $ \sigma>0$,
or power-law distribution with a density of the form
	\begin{equation}
\beta(x) 
:= \frac{\idc{x>x_L}}{Z}x^{-\phi}, 
	\label{beta}
\end{equation}	
for some $x_L>0$ and  $\phi > 1$
and 
with an appropriate normalizing constant $Z$, 
as we have seen e.g. in  Figure~\ref{fig_Heavy_Tailed_Dist}.

Because of the very significant and specific role 
of the few largest cities 
in the epidemic outcomes,
we have chosen the empirical distribution
as the baseline city size distribution, in particular to generate Figures~\ref{Fig_incidence_ppl}-\ref{Fig_Compare_Analytical_PO_Empirical_PO}.
Power-law distributions have also been considered
to test the robustness of our conclusions.
\medskip

As  $N'$  tends to $\infty$,
we want the secondary number of infection 
a.s. to be asymptotically
finite.
This is ensured by appropriately scaling the formula we proposed in \eqref{pK_ij}
in relation to $N'$.
For large $N'$,
the probability that there is an edge from $i$ to $j$ 
should be equivalent to $\kappa(x_i, x_j) / N'$, where 
we integrate the two functions $\kappa_O$ and $\kappa_I$ into
%the random variable
\begin{equation}
\kappa(x, y) 
:= k_B\cdot 
	%\int_0^\infty \beta(dy)
	(L(x) \cdot x^b\cdot y^a
	+ L(x)  \cdot x^{a-1}\cdot y^{1+b})\,.
	\label{pVU}
\end{equation}
So the corresponding scaling of $k_M$ with $1/N'$
is a priori the right one
to arrive at a well-behaved epidemic graph  in the limit $N'\rightarrow \infty$, see \cite{CO20}.

In this setting, we compare our model with some branching processes.
%that depicts the initial transmission between cities
%in the limit $N'\to \infty$.
In the limit where $N'\to \infty$,
 the  random epidemic graph is locally
a random tree generated by a branching process.  
This observation establishes a connection between large 
	random graphs and branching processes
	\cite{BJR06}.

It shall be noted that such connections have been rigorously justified,
for instance in \cite{CO20},
under the assumption that 
$\kappa \in L^1(\mathbbm{R}_+ \times \mathbbm{R}_+)$.
If $\beta$ is a power-law distribution with exponent $\phi$,
this assumption means for strategy~\((U)\) that
	\begin{equation}
		1+ b -\phi < -1 \quad \quad \text{ and } \quad \quad  a -\phi < -1,
        \label{ConvergenceProperty_U}
	\end{equation}
while it means for strategy~\((P)\) that
\begin{equation}
	2+ b -\phi < -1 \quad \quad \text{ and } \quad \quad  1+a -\phi < -1.
    \label{ConvergenceProperty_P}
\end{equation}

There are actually two types of branching processes
that we introduce in the next two Sections~\ref{sec_fwd_br} and \ref{sec_bwd_br}:
One follows the epidemic forward in time 
 and  describes the initial spread of the infection. Outbreak probabilities are associated to this process.
Backward in time,
we follow potential routes according to which a city might get infected. 
The infection probability is derived in terms of the  corresponding branching process approximation.
We start with the forward in time process 
whose derivation
is more straightforward.

 \subsubsection{Forward in time branching process}
\label{sec_fwd_br}

The asymptotic property of the secondary infections from a given city $i$ 
as $N'$ tends to infinity
is expressed in terms 
of the following two functions $K_{O, \rA}$, $K_{I,\rA}$ and probability measures $\nu_{O, \rA}, \nu_{I, \rA}$.
To simplify notations, we abbreviate
$\cZ_\gamma := \int y^\gamma \beta(dy)$, for $\gamma >0$.
\begin{align}
	&K_{O, \rA}(x) 
	:= k_B \,  L(x) \, x^b  \cZ_a, 
	&&	 \nu_{O, \rA}(dy) 
	:= \dfrac{y^a \beta(dy)}{\cZ_a}, 
	&\notag\\
	&K_{I, \rA}(x) 
	:= k_B \,  \frac{L(x)}{x} \cdot x^{a}  \cZ_{1+b},
	&&\nu_{I, \rA}(dy) 
	:= \dfrac{y^{1+b} \beta(dy)}{\cZ_{1+b}}.&
	\label{Knu}
\end{align}
The limiting behavior can be understood as follows.
The number of cities that get \emph{infected from the outside} by citizens of city $i$ is Poisson distributed with mean $K_{O, \rA}(x_i)$.   
 In particular, this distribution depends only on $x_i$. 
The size distribution of any city 
infected from the outside
is independently prescribed by the probability measure $\nu_{O, \rA}$,
which happens not to depend on $i$.
Because of this independence property, 
we may refer to a city \emph{typically infected from the outside}
when we consider the randomness of sampling the city size according to $\nu_{O, \rA}$.

Symmetrically,  the number of cities that get \emph{infected from the inside} by citizens of city $i$ 
is Poisson distributed with mean
$K_{I, \rA}(x_i)$ and 
the size distribution
of any city infected from the inside
is independently prescribed by the probability measure $\nu_{I, \rA}$,
which happens also not to depend on $i$.
Because of this independence property, 
we may refer to a city \emph{typically infected from the inside}
when we consider the randomness of sampling the city size according to $\nu_{I, \rA}$.

We can iteratively proceed this derivation and arrive at a discrete-time branching process {\bf $\text{V}^\infty$} associated to the infection process.
  Even though the branching process is not evolving on the actual graph of cities,
we can view it as evolving on a random tree. We will call the nodes of the branching process infected cities, as we did for the infection process.
The next generation of infected cities
is generated 
by means of the two functions $K_{O, \rA}$,
$K_{I, \rA}$ and of the two distributions 
$\nu_{O, \rA}$ and $\nu_{I, \rA}$
with new samplings independent of the previous generations.
Whether an approximation of the infection process by a branching process  is reasonable 
over a few generations, we evaluate by means of simulations in Section \ref{sec_simulated_infection_and_outbreak_prob} and in Section \ref{sec_val_R0}. In particular, we show simulation results that assess the impact of isolating (large) cities during the epidemic.

We are primarily interested in the number and sizes of the infected cities.
However, distinguishing cities that get infected from the outside vs those that get infected from the inside
will be simplifying the analysis. With this respect the following proposition will prove useful later. It is a direct consequence of the above conclusions.
\begin{prop}
	\label{V2}
The projection {\bf $\text{V}^{(2)}$}
on the two-type space $\{O, I\}$
of the process {\bf $\text{V}^\infty$} 
yields a two-type branching process with offspring generated as follows\footnote{
This is a consequence of the fact that the associated transfer operator projects on a two-dimensional subspace, as we discuss in Section~\ref{sec_Spec_An}.
}. Assume in generation $g-1$ that  $n^{g-1}_{O}$ cities are infected from the outside and that $n^{g-1}_{I}$ cities are infected from the inside. Then first city sizes $(x^{O}_i)_{i \in n_{O}^{g-1}}$ and $(x^I_j)_{j\in n_I^{g-1}}$ are drawn independently according to the measure $\nu_{O, \rA}$ and $\nu_{I,\rA},$ resp., for these cities. In the next generation $g$, the number $n_{O}^g$ of cities infected from the outside is  Poisson distributed with parameter $\sum_{i\in n^{g-1}_O} K_{O, \rA}(x^O_i) + \sum_{j\in n^{g-1}_I} K_{O, \rA}(x^I_j)$ and the number $n^g_I$ of cities infected from the inside is Poisson distributed with parameter  $\sum_{i\in n^{g-1}_O} K_{I, \rA}(x^O_i) + \sum_{j\in n^{g-1}_I} K_{I, \rA}(x^I_j)$.

We consider two different kinds of initial conditions. Either only the  numbers $n_O^0$ and $n_I^0$ of cities infected from the outside and from the inside are specified or in addition to the numbers $n_O^0$, $n_I^0$ also the city sizes $(x^{O,0}_i)_{i \in n_O^{0}}$ and $(x^{I,0}_j)_{j\in n_I^{0}}$ are given in generation 0. In the latter case, city sizes are not resampled in generation 1. The number $n_O^1$ of cities infected from the outside is  Poisson distributed with parameter $\sum_{i\in n_O^0} K_{O, \rA}(x^{O,0}_i) + \sum_{j\in n^0_I} K_{O, \rA}(x^{I, 0}_j)$ and the number $n^1_I$ of cities infected from the inside is Poisson distributed with parameter $\sum_{i\in n^0_O} K_{I,\rA}(x^{O,0}_i) + \sum_{j\in n^0_I} K_{I, \rA}(x^{I,0}_j)$.  
 \end{prop}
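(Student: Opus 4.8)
The plan is to integrate out the city sizes generation by generation, reducing the size-indexed process $\text{V}^\infty$ to a process on the type counts alone, and then to check that the resulting projection $\text{V}^{(2)}$ is Markovian on $\{O,I\}$-valued populations. Recall from the construction of $\text{V}^\infty$ that a single infected city of size $x$ produces, \emph{independently of its own type}, a Poisson$(K_{O, \rA}(x))$ number of children infected from the outside and a Poisson$(K_{I, \rA}(x))$ number of children infected from the inside, and that the sizes of these children are sampled i.i.d.\ from $\nu_{O, \rA}$ and $\nu_{I, \rA}$ respectively. The crucial structural point, inherited from the separable form of $\kappa_M$, is that the offspring \emph{size} laws $\nu_{O, \rA}$ and $\nu_{I, \rA}$ depend only on the child's type and not on the parent's size or type; this is exactly what will allow the size labels to be integrated out without destroying the branching property.

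First I would fix generation $g-1$ with $n^{g-1}_O$ outside-infected and $n^{g-1}_I$ inside-infected cities whose sizes $(x^O_i)_i$ and $(x^I_j)_j$ are given. Conditionally on these sizes, each parent contributes an independent Poisson number of outside children, so by superposition of independent Poisson variables the total $n^g_O$ is Poisson with parameter $\sum_{i} K_{O, \rA}(x^O_i) + \sum_j K_{O, \rA}(x^I_j)$, and likewise $n^g_I$ is Poisson with parameter $\sum_i K_{I, \rA}(x^O_i) + \sum_j K_{I, \rA}(x^I_j)$. The independence of $n^g_O$ and $n^g_I$ follows because, per parent, the outside- and inside-offspring are governed by independent Poisson samplings, and parents are treated independently. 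This yields precisely the offspring description in the statement.

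It remains to verify that this describes a genuine two-type branching process, i.e.\ that the conditional law of $(n^g_O, n^g_I)$ given the whole past depends only on the pair $(n^{g-1}_O, n^{g-1}_I)$. Here the separability is decisive: since the sizes of the generation-$(g-1)$ cities were themselves drawn i.i.d.\ from $\nu_{O, \rA}$ and $\nu_{I, \rA}$ in the previous step, independently of everything before, I would integrate them out to obtain, for each parent type, a single mixed-Poisson offspring law on $\{O,I\}$ that no longer references sizes. This is the analytic content of the footnote that the mean transfer operator projects onto the two-dimensional subspace spanned by $x\mapsto L(x)x^b$ and $x\mapsto L(x)x^{a-1}$, the two factors appearing in $K_{O, \rA}$ and $K_{I, \rA}$. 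After the offspring are produced, their sizes are again i.i.d.\ from the type-indexed laws, so the same reduction applies at the next step and the Markov structure reproduces by induction.

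Finally I would treat the two initial conditions. If only $n^0_O$ and $n^0_I$ are prescribed, one draws the founder sizes from $\nu_{O, \rA}$ and $\nu_{I, \rA}$ and applies the computation above with $g=1$. If instead the founder sizes $(x^{O,0}_i)_i$ and $(x^{I,0}_j)_j$ are themselves prescribed, as when tracking the progeny of a single source city of known size, one simply omits the generation-$1$ resampling and feeds the given sizes directly into the Poisson parameters, giving the stated expressions for $n^1_O$ and $n^1_I$. The only genuine obstacle is the Markov-reduction step: for a general size-dependent offspring law, forgetting the sizes would \emph{not} leave a branching process, and it is solely the factorization of $\kappa_M$ into an $x$-part and a $y$-part that renders the offspring size law parent-independent and hence permits the reduction; the Poisson superposition and the conditional independence of the two offspring counts are then routine.
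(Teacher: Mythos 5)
Your proposal is correct and follows essentially the same route as the paper, which justifies Proposition~\ref{V2} via the decomposition/superposition properties of the Poisson random measures with intensity $K_{O,\rA}(x)\,\delta_O\otimes\nu_{O,\rA}+K_{I,\rA}(x)\,\delta_I\otimes\nu_{I,\rA}$ (Section~\ref{sec_Prames} of the Supplementary Material). You correctly isolate the decisive point — that the separable kernel makes the offspring size laws depend only on the child's type and not on the parent, so the sizes can be integrated out while preserving the branching (Markov) structure — which is exactly the paper's argument phrased at the level of Poisson superposition rather than point measures.
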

 
 \noindent
The fact that we have this projection to a two-state branching process
ori\-ginates from the separable form of the matrix kernel $\kappa_M$
in \eqref{pK_ij}.

\subsubsection{Backward in time branching process}
\label{sec_bwd_br}
Similarly as in the previous section,  we can follow backwards in time potential infection routes along which a city can get infected.
To infer (approximately) the infection probability for a city $j$ of size $x_j$ 
we approximate the backward infection chains also by a branching process.
Let
\begin{align}
\begin{split}
	K_{O, \lA}(x)
&:= k_B\, x^a
	\int_0^\infty L(z) \cdot z^b \beta(dz), 
\\ \nu_{O, \lA}(dy) 
	&:= \dfrac{L(y) \cdot y^b \beta(dy) }{\int_0^\infty L(z) \cdot z^b \beta(dz)},\\
	K_{I, \lA}(x) 
	&:= k_B\, x^{1+b}
	\int_0^\infty L(z) \cdot z^{a-1} \beta(dz),\hcm{0.3}
\\	\nu_{I, \lA}(dy) 
	&:= \dfrac{L(y) \cdot y^{a-1} \beta(dy) }
	{\int_0^\infty L(z) \cdot z^{a-1} \beta(dz) }.
 \end{split}
	\label{KCD}
\end{align}

In the limit,
the  number of infectors of $j$ from the outside (resp. from the inside) 
is Poisson distributed with mean
$K_{O, \lA}(x_j)$ (resp. $K_{I, \lA}(x_j)$). 
The sizes of the infectors of $j$ from the outside
(resp. from the inside)
are independently prescribed by the probability $\nu_{O, \rA}$ (resp. $\nu_I$),
which happens not to depend on $j$.

We denote by ${\bf \Lambda}^{\infty}$
the corresponding backward branching process. 
In this discrete-time multitype branching process, 
an element with type $x \geq 0 $ 
is in the next generation replaced by elements drawn according to the above procedure by replacing $K_{O, \lA}(x_j)$ (resp. $K_{I, \lA}(x_j)$) by $K_{O, \lA}(x)$ (resp. $K_{I, \lA}(x)$).
%a Poisson random measure with intensity $K_{O, \lA}(x) \nu_{O, \lA} + K_{I, \lA}(x) \nu_{I, \lA}$.
In analogy to Proposition \ref{V2}
we have the following result.

\begin{prop}
	\label{L2}
	The projection {\bf $\Lambda^{(2)}$} 
	on the two-type space $\{O, I\}$
	of the process {\bf $\Lambda^\infty$} 
	started at the first generation
	generates a two-type branching process.
Provided {\bf $\Lambda^\infty$} starts with city $i$ of size $x_i$,
the initial condition of {\bf $\Lambda^{(2)}$} 
	is given by independent Poisson 
	random numbers of cities infectors from the outside and from the inside respectively,
	with averages $K_{O, \lA}(x_i)$ and $K_{I, \lA}(x_i)$ respectively. 
\end{prop}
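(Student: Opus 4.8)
The plan is to mirror the argument behind Proposition~\ref{V2}, exploiting that the decisive structural feature --- the separable (product) form of the kernel $\kappa_M(x,y)=k_M\, x^{1+b}\, y^a$ --- is inherited by $\kappa_O$ and $\kappa_I$ and hence survives the passage to the backward process. First I would record the initial condition: applying the offspring mechanism of ${\bf \Lambda}^\infty$ described in Section~\ref{sec_bwd_br} to the root city $i$ of fixed size $x_i$, its infectors from the outside and from the inside are, by construction, independent Poisson variables with means $K_{O,\lA}(x_i)$ and $K_{I,\lA}(x_i)$. This is exactly the stated initialization of ${\bf \Lambda}^{(2)}$ and requires no further work.

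The heart of the proof is to show that, from the first generation onward, the type label ($O$ or $I$) carried by each node is a sufficient statistic for its reproduction law. The key observation is that the size of an infector does not depend on the size of the city it infects. Indeed, the mean number of outside-infectors of a city of size $x$ whose own size lies in $dy$ is proportional to $\kappa_O(y,x)\,\beta(dy)=\tfrac{k_B}{N}\,L(y)\,y^b\, x^a\,\beta(dy)$, in which the target-size dependence $x^a$ is a multiplicative constant that cancels upon normalization; the resulting size law is $\nu_{O,\lA}$, independent of $x$. The analogous computation with $\kappa_I(y,x)=\tfrac{k_B}{N}\,L(y)\,y^{a-1}\, x^{1+b}$ yields $\nu_{I,\lA}$, again independent of $x$. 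Consequently every node recorded as type $O$ has size distributed as $\nu_{O,\lA}$ (and every type-$I$ node as $\nu_{I,\lA}$), irrespective of its ancestry.

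From here the two-type branching property follows. Given the type labels of generation $g-1$, I would resample the size $x$ of each node from $\nu_{O,\lA}$ or $\nu_{I,\lA}$ according to its type, independently across nodes (justified by the independent edge sampling underlying the graph, which in the $N\to\infty$ limit yields independent Poisson offspring, as set up before the statement). Each node then emits a Poisson number of outside-offspring with mean $K_{O,\lA}(x)$ and a Poisson number of inside-offspring with mean $K_{I,\lA}(x)$; superposing these independent Poisson contributions over all generation-$(g-1)$ nodes gives generation-$g$ counts that are again Poisson with parameters obtained by summing $K_{O,\lA}(x)$ and $K_{I,\lA}(x)$ over those nodes. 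Since the reproduction law of a node depends on its type only through the fixed resampling measure, the type-count process $(n^g_O,n^g_I)$ is a genuine two-type Galton--Watson process, with mean offspring matrix entries $\int K_{O,\lA}\,d\nu_{O,\lA}$, $\int K_{I,\lA}\,d\nu_{O,\lA}$, $\int K_{O,\lA}\,d\nu_{I,\lA}$ and $\int K_{I,\lA}\,d\nu_{I,\lA}$; this reflects that the backward transfer operator has two-dimensional range spanned by $\nu_{O,\lA}$ and $\nu_{I,\lA}$, as in the spectral discussion of Section~\ref{sec_Spec_An}.

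I expect the main obstacle to be bookkeeping rather than conceptual: one must be careful that the root (generation $0$) carries the fixed size $x_i$ and is \emph{not} resampled, so the clean two-type Markov structure holds only from generation $1$ onward, with generation $0$ serving merely to seed the stated Poisson initial condition. A second point deserving care is the independence across distinct infectors and the passage from the finite-$N$ percolation description to the Poisson offspring limit; I would either invoke the limiting construction already set up in Section~\ref{sec_bwd_br} or cite \cite{CO20} for the rigorous justification of the branching approximation.
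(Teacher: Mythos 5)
Your proposal is correct and follows essentially the same route as the paper: the paper's justification (Section~\ref{sec_Prames}) also rests on the separability of $\kappa_O$ and $\kappa_I$, which makes the infector-size laws $\nu_{O,\lA}$ and $\nu_{I,\lA}$ independent of the target city's size, and then invokes the decomposition/superposition properties of the limiting Poisson random measures with intensity $K_{O,\lA}(x)\,\delta_O\otimes\nu_{O,\lA}+K_{I,\lA}(x)\,\delta_I\otimes\nu_{I,\lA}$ to obtain the two-type branching structure from generation one onward. Your explicit cancellation of the $x^a$ (resp.\ $x^{1+b}$) factor upon normalization, and your remark that the root's size is not resampled, match the paper's argument.
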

\noindent
The separability of $\kappa_M$ 
is again a crucial ingredient for this reduction.

\section{Analysis of the branching approximation}
\label{sec_analysis_br}
\subsection{Probability of generating an outbreak}\label{sec:probability trigger outbreak}

In this section, we motivate our
approximate formula for the probability that from a city of size $x$ an outbreak is generated. As we approximated the infection process by the branching process {\bf $\text{V}^\infty$}, it is natural to approximate the outbreak probability by  the probability of survival of {\bf $\text{V}^\infty$} started with a single infected city of size $x$ (see e.g.  Theorem 3.11 in \cite{CO20} for a rigorous convergence result in this direction). Let us denote this probability by $\eta(x)$.
According to Proposition \ref{V2}, $\eta(x)$ coincides with the probability of survival 
of the two-type branching process
{\bf $\text{V}^2$}.
Thanks to Lemma 5.4 in \cite{BJR06},
$\eta$ is the solution 
to  the following equation.
\begin{equation}\label{survivalprobab}
\eta(x) 
	= 1- \exp\Big[  - K_{I, \rA}(x)  \int_0^\infty\eta(y) \nu_{I, \rA}(dy)
	-  K_{O, \rA}(x)\int_0^\infty\eta(y) \nu_{O, \rA}(dy) \Big].
\end{equation}
Formulas \eqref{etaP} and  \eqref{etaU}
are derived from  \eqref{survivalprobab} by inserting in \eqref{Knu} the respective formulas for $L(x)$ for each of the strategies.
Accurate estimations of $(\eta(x))_x$ can be efficiently achieved using an iterative procedure based on these formulas, see also Section~\ref{sec_KB_errors}  of the Supplementary Material.

%Furthermore, 
We recall from Section~\ref{sec_main}
that the numerically calculated values of $(\eta(x))_x$ appear to be close to the outbreak probabilities for simulated epidemics under strategy~\((P)\) for France, see Figure~\ref{Fig_Compare_Analytical_PO_Empirical_PO}. For  Japan and for Poland and under strategy~\((U)\), theoretical outbreak probabilities fit less well, see  also Figures~\ref{fig_comp_outbreak_prob_pO}
- \ref{Fig:Compare PI v9} in the Supplemental Material.

%This suggests that a very accurate estimation of $(\eta(x))_x$
%is rapidly obtained.

\subsection{Backward in time process and infection probability}\label{sec:infection probability}

In this section, we give an approximation of  the probability that a city (of size $x$) eventually gets infected during an outbreak. Similarly as for the outbreak probability,  we  approximate the probability of a city of size $x$ eventually to be infected by the probability that the associated backward branching process introduced in Section \ref{sec_bwd_br} with initial state $\delta_x$ survives.

Let us denote by $\pi(x)$
the probability of survival of the backward process ${\bf \Lambda}^{\infty}$
starting from a city of size $x$.
According to Proposition \ref{L2} it coincides with the survival probability
of the two-type branching process
{\bf $\Lambda^2$}.
Thanks to Lemma 5.4 in \cite{BJR06}
and similarly as for $\eta$,
$\pi$ is the solution the following equation.
\begin{equation}
	\pi(x) 
	= 1- \exp\Big[  - K_{O, \lA}(x) \int_0^\infty  \pi(y) \nu_{O, \lA}(dy)
	-  K_{I, \lA}(x) \int_0^\infty \pi(y) \nu_{I, \lA}(dy)\Big].
	\label{piDef}
\end{equation}
Similarly as the formulas \eqref{etaP} and  \eqref{etaU}, the formulas  \eqref{piP} and  \eqref{piU}
are derived from  \eqref{piDef} by recalling \eqref{KCD}.
Accurate estimations of $(\pi(x))_x$ can be efficiently achieved using the iterative procedure based on these formulas, as detailed in Section~\ref{sec_KB_errors} of the Supplementary Material.

For large graphs under the condition 
that $\kappa \in L^1(\mathbbm{R}_+ \times \mathbbm{R}_+)$
(recall \eqref{ConvergenceProperty_U}, \eqref{ConvergenceProperty_P}),
and conditionally on an outbreak, 
the relative size of the forward connected component is 
concentrated around the quantity $\int \pi(x) \beta(dx)$  with a probability close to one whatever the initially infected city, see Theorem~3.11 in \cite{CO20} for more details.

We recall from Section~\ref{sec_main}
that the numerically calculated values of $(\pi(x))_x$ appear to be close to the infection probabilities for simulated epidemics under strategies~\((P)\) and \((U)\) for France and Poland, see Figure~\ref{Fig_Compare_Analytical_PI_Empirical_PI}. For  Japan, theoretical infection probabilities fit less well, see  also Figures~\ref{fig_comp_infection_prob_pO}
and \ref{Fig:Compare PI v4} - \ref{Fig:Compare PI v9} in the Supplemental Material.

\subsection{Comparison of the two strategies}
\label{sec_comp}
In this section, we  explicate how we compare the efficiency of the two strategies~\((P)\) and~\((U)\). 
We recall that 
 a city gets isolated under strategy~\((P)\)
 when a certain proportion $p_\vee$ of inhabitants of the city gets infected while strategy~\((U)\) consists in prescribing a threshold $L_\vee$
for isolation that is independent of the city size. To arrive at a comparison of the efficiency of the two strategies, we assume that after isolation of a city no further inhabitants get infected. In particular, at the end of an epidemic in an isolated city of size $x$ under strategy~\((P)\) (under strategy~\((U)\), resp.) there are $p_\vee x$ ($L_\vee$, resp.) individuals that have been infected.  Furthermore, we approximate the infection probabilities by the probabilities introduced in Section \ref{sec:infection probability}.  
Depending on the strategies,
these probabilities are denoted by  $\{\pi_U(x)\}_{x>0}$ and  $\{\pi_P(x)\}_{x>0}$, resp., recalling \eqref{piU} and \eqref{piP}.

We measure the burden of the epidemic by the number $I_\star$ of eventually infected people.
Under our assumption, $I_P$ (resp. $I_U$)
as defined in \eqref{defI_P} (resp.  \eqref{defI_U})
gives the number of eventually infected individuals under strategy~\((P)\) 
(resp. strategy~\((U)\) ),
while  $Q_P$ (resp. $Q_U$)
as defined in \eqref{QDef} 
give the number of of individuals eventually under isolation under strategy~\((P)\) 
(resp. strategy~\((U)\) ).
Assume that we adjust the parameters  $L_\vee$ and $p_\vee$ %of the two 
such that $I_\star$ is the same under both strategies.
Then, we regard the strategy for which less people need to be isolated as the more \textit{efficient} \textit{strategy}.

There is a large range of parameters $(a, b)$
for which it is possible to determine 
regardless of the severity of the epidemic
whether one strategy has an advantage over the other, or whether the difference between them is small,
as we could see in Section~\ref{sec_main},
particularly with Figure~\ref{fig_Numerical_Comparison_UP}.
This range of parameters is however affected 
by the city size distribution.
In particular, 
if $b$ departs from $0$ but is not far from $1-a$,
the more efficient strategy can be different
depending on the value of $I_\star$
(or $Q_\star$ if we revert the roles).
This dependency appears to be stronger 
for powerlaw distributions with heavier tails,
i.e. when the exponent $\phi$ is smaller.
\smallskip

%\subsubsection*{Comparison for rank-one kernels}
%\label{comp_ana}

In the specific case where $a= 1+b$, 
we can rigorously identify
which strategy outperforms the other
because the kernel of the (two-type)-branching process $\Lambda^{(2)}$ is actually of rank-one, which clarifies the analysis.
The next proposition states which strategy is more efficient depending simply on the value of~$a$.

The probabilities $\pi_U$ and $\pi_P$ depend on $k_B, a$, $\beta$, 
and resp. $L_\vee$ and $p_\vee$.
To compare the two strategies, we fix  $k_B, a, \beta$
and a value $I_\star$ for the number of people that eventually get infected.
We adjust the parameters  $L_\vee$ and $p_\vee$ 
such that $I_U = I_P = I_\star$.
\begin{prop}
	\label{ineq}
	Assume $\beta$ is a non-Dirac probability measure,
	$a$, $k_B$ and $I_\star$ are given and assume that $a = b+1$.
	
	i) If $a = 1$, then $Q_U = Q_P$, i.e. both strategies are equally efficient.
	
	ii) If $a<1$, then $Q_U > Q_P$, i.e. strategy~\((P)\) is more efficient.
	
	iii) If $a>1$, then $Q_U < Q_P$, i.e. strategy~\((U)\) is more efficient.
\end{prop}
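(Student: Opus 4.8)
The plan is to exploit the degeneracy that the hypothesis $a=b+1$ forces on the backward kernels of \eqref{KCD}. First I would observe that when $a=b+1$ one has $x^{1+b}=x^{a}$ and $z^{a-1}=z^{b}$, so that $K_{O,\lA}\equiv K_{I,\lA}$ and $\nu_{O,\lA}=\nu_{I,\lA}$. Consequently the two types of \eqref{piDef} merge and the fixed-point equation collapses to the scalar form $\pi(x)=1-\exp(-C\,x^{a})$, the \emph{same} functional shape under both strategies, where $C:=2k_B\int \pi(y)\,L(y)\,y^{b}\,\beta(dy)$ is the value singled out by the survival probability of $\Lambda^{(2)}$ (the maximal root of the self-consistency obtained by reinserting the ansatz). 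This is the rank-one reduction announced before the statement, and it is the conceptual pivot of the whole argument.

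Writing $f_C(x):=1-\exp(-C\,x^{a})$ and introducing the four functionals $\Phi(C):=\int x\,f_C(x)\,\beta(dx)$, $\Psi(C):=\int f_C(x)\,\beta(dx)$, $\Xi_a(C):=\int x^{a}f_C(x)\,\beta(dx)$ and $\Xi_b(C):=\int x^{a-1}f_C(x)\,\beta(dx)$, I would read off from \eqref{defI_P}--\eqref{QDef} and the self-consistency that $Q_P=\Phi(C_P)$ and $Q_U=\Phi(C_U)$ with the \emph{same} strictly increasing $\Phi$, while the thresholds satisfy $p_\vee=C_P/(2k_B\,\Xi_a(C_P))$ and $L_\vee=C_U/(2k_B\,\Xi_b(C_U))$, so that $I_P=p_\vee\,\Phi(C_P)$ and $I_U=L_\vee\,\Psi(C_U)$. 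Since $\Phi$ is strictly increasing, comparing $Q_U$ with $Q_P$ is exactly comparing $C_U$ with $C_P$, and the problem reduces to ordering the two severities.

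Two facts then close the argument. For monotonicity, I would check that $C\mapsto C/\Xi_a(C)$ and $C\mapsto C/\Xi_b(C)$ are strictly increasing: differentiating, the sign is that of $\Xi_\bullet(C)-C\,\Xi_\bullet'(C)=\int x^{\bullet}\bigl[1-e^{-t}(1+t)\bigr]\beta(dx)$ with $t=Cx^{a}$, which is positive because $1-e^{-t}(1+t)>0$ for $t>0$. Hence $I_P(C)=p_\vee(C)\,\Phi(C)$ and $I_U(C)=L_\vee(C)\,\Psi(C)$ are products of positive strictly increasing functions of $C$, thus strictly increasing. For the comparison at a common $C$, since $\Xi_a,\Xi_b>0$ the sign of $I_U(C)-I_P(C)$ equals that of $\Psi\,\Xi_a-\Phi\,\Xi_b$, and introducing the probability measure $\mu(dx)\propto f_C(x)\,\beta(dx)$ one gets
\[
\Psi(C)\,\Xi_a(C)-\Phi(C)\,\Xi_b(C)=\Psi(C)^{2}\,\mathrm{Cov}_\mu\bigl(X,\,X^{a-1}\bigr).
\]
By Chebyshev's correlation inequality for monotone functions, this covariance is positive, zero or negative according as $a-1$ is positive, zero or negative, and strictly so since $\beta$, hence $\mu$, is non-Dirac.

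Finally I would combine the two. In case $a>1$ one has $I_U(C)>I_P(C)$ for every $C$; imposing $I_P(C_P)=I_U(C_U)=I_\star$ gives $I_U(C_U)=I_P(C_P)<I_U(C_P)$, so by the monotonicity of $I_U$ we obtain $C_U<C_P$, whence $Q_U=\Phi(C_U)<\Phi(C_P)=Q_P$ and strategy~\((U)\) is more efficient; the cases $a<1$ and $a=1$ follow identically with the inequalities reversed, respectively turned into equalities. The main obstacle is not any single computation but organizing the proof so that the comparison becomes one between two epidemics of the \emph{same} severity: the crux is the reparametrization by $C$, which turns $Q$ into a common increasing function and concentrates the entire difference between the strategies into the factor $\mathrm{Cov}_\mu(X,X^{a-1})$; the monotonicity of $I_U$ in $C$, needed to upgrade the pointwise inequality to the ordering of $C_U$ and $C_P$, is the only further ingredient and is elementary once $I_U$ is displayed as a product of increasing functions.
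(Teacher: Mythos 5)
Your proof is correct and rests on the same structural pivot as the paper's argument in Section~\ref{sec_ineq}: the collapse of \eqref{piDef} to the rank-one form $\pi(x)=1-\exp(-Cx^a)$ when $a=1+b$, the observation that $Q$ is then the \emph{same} increasing functional of the scalar severity parameter under both strategies, and a final moment inequality of the type $\cZ^\star_0\,\cZ^\star_a \gtrless \cZ^\star_1\,\cZ^\star_{a-1}$ for the measure $\pi_{\gamma_\star}(x)\beta(dx)$. Two points of execution differ, both to your advantage. First, the paper flips the problem at the outset to ``fix $Q_\star$, compare $I_U$ with $I_P$'' by invoking monotonicity in the thresholds without spelling it out, whereas you keep the hypothesis $I_P(C_P)=I_U(C_U)=I_\star$ and justify the transfer explicitly by showing $C\mapsto I_U(C), I_P(C)$ are strictly increasing (your $1-e^{-t}(1+t)>0$ computation); this closes a small gap left implicit in the paper. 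Second, for the key inequality the paper splits into the cases $a\ge 1$ and $a<1$ (introducing shifted moments in the latter) and argues via H\"older interpolation of the moments $\cZ^\star_\zeta$, while your single identity $\Psi\,\Xi_a-\Phi\,\Xi_b=\Psi^2\,\mathrm{Cov}_\mu(X,X^{a-1})$ together with Chebyshev's correlation inequality handles all three regimes uniformly, with strictness coming from $\beta$ being non-Dirac exactly as needed. It is worth noting that the displayed H\"older inequalities in the paper's proof are written with their directions reversed (Lyapunov gives $(\cZ^\star_{a-1}/\cZ^\star_0)^{1/(a-1)}\le(\cZ^\star_a/\cZ^\star_0)^{1/a}$, not $>$), and the concluding sentence ``$I_P>I_U$'' for $a>1$ contradicts the restated item iii); your covariance argument lands on the correct side, $E_\mu[X^a]\ge E_\mu[X]\,E_\mu[X^{a-1}]$ for $a>1$, consistent with the proposition.
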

As one can see in Section~\ref{sec_ineq}
of the Supplementary Material,
Proposition~\ref{ineq}
is a consequence of  Hölder's inequality.

\subsection{Basic reproduction number}\label{Section Reproduction Number} %of ${\bf \text{V}^{\infty}}$}

In this section, we define a basic reproduction number $R_0$ for the branching process ${\bf \text{V}^{\infty}}$
(that approximates the transmission process between cities).
We set
\begin{equation}
	R_0 := \lim_{k \rightarrow \infty}\bE_x[X_k]^{1/k},
	\label{RoDef}
\end{equation}
where $X_k$ is the number of infected cities at the $k$-th generation.

We will deduce from Proposition~\ref{Red_spec} below
that this definition actually does not depend on $x$
and has an explicit form for this rank-2 kernel.
We refer to Subsection \ref{sec_val_R0}
for the numerical study of this quantity 
for our datasets.

Proposition~\ref{Red_spec} exploits the following definition of 
the two-by-two matrix $\Wmat$:
\begin{align}\label{MatrixTo}
	\Wmat = 
	\begin{pmatrix}
		\langle \nu_{I,\rA} \bv K_{I, \rA}\rangle & \langle \nu_{I, \rA} \bv K_{O, \rA}\rangle \\
		\langle \nu_{O,\rA} \bv K_{I, \rA}\rangle  & \langle \nu_{O, \rA} \bv K_{O, \rA}\rangle 
	\end{pmatrix}.
\end{align}
The entry $\langle \nu_{I, \rA} \bv K_{I, \rA}\rangle$ can be interpreted 
as the average number of cities that get infected from the outside
from a typical city that got infected from the outside.
A similar interpretation holds for the other entries.
$\Wmat$ is classically related to the long-time behavior of $\text{V}^2$,
cf \cite{KS66}.

\begin{prop}
    \label{Red_spec}
$R_0$ as defined in \eqref{RoDef}
coincides with the dominant eigenvalue of the matrix $\Wmat$.
\end{prop}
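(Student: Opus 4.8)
The plan is to reduce the statement to the classical Perron–Frobenius theory for multitype branching processes, applied to the two-type projection $\text{V}^{(2)}$. By Proposition~\ref{V2}, the process $\text{V}^\infty$ projects onto a two-type branching process on the type space $\{O, I\}$, and the number of infected cities at generation $k$ in $\text{V}^\infty$ equals the total population size of $\text{V}^{(2)}$ at generation $k$ (since every infected city is of exactly one type). Hence it suffices to analyze the growth rate $\lim_{k\to\infty} \bE_x[X_k]^{1/k}$ for the two-type process. First I would recall the standard fact (e.g. from \cite{KS66}) that for a multitype branching process with finitely many types, the mean matrix governs the first-moment dynamics: if $m_k = (\bE[n_I^k], \bE[n_O^k])$ denotes the vector of expected type-counts at generation $k$, then $m_k = m_0\, \Wmat^k$, where $\Wmat$ is precisely the mean-offspring matrix recorded in \eqref{MatrixTo}.

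The key step is to verify that $\Wmat$ as defined is indeed the correct mean matrix for $\text{V}^{(2)}$. Using Proposition~\ref{V2}, a single city of type $I$ first has its size $x$ resampled according to $\nu_{I,\rA}$, and then produces a Poisson number of type-$I$ offspring with parameter $K_{I,\rA}(x)$ and a Poisson number of type-$O$ offspring with parameter $K_{O,\rA}(x)$. Taking expectations and integrating against $\nu_{I,\rA}$ yields exactly the entries $\langle \nu_{I,\rA}\bv K_{I,\rA}\rangle$ and $\langle \nu_{I,\rA}\bv K_{O,\rA}\rangle$ in the first row; the second row follows symmetrically for a type-$O$ parent. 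This identifies the first-moment recursion with multiplication by $\Wmat$.

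It then remains to extract the limit \eqref{RoDef}. Because all entries of $\Wmat$ are strictly positive (the measure $\beta$ is supported on positive sizes and the kernels $K_{O,\rA}, K_{I,\rA}$ are strictly positive there), the Perron–Frobenius theorem applies: $\Wmat$ has a simple dominant eigenvalue $\rho$ equal to its spectral radius, with strictly positive left and right eigenvectors. Decomposing $m_0$ along the eigenbasis gives $m_k = c\,\rho^k v + o(\rho^k)$ for a positive constant $c$ (positive because $m_0$ has a nonzero component along the positive right eigenvector, regardless of the starting type or size $x$), whence $X_k = n_I^k + n_O^k \sim C\,\rho^k$. Taking $k$-th roots yields $\lim_k \bE_x[X_k]^{1/k} = \rho$, independent of $x$, which is the dominant eigenvalue of $\Wmat$.

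The main obstacle I anticipate is purely a matter of careful bookkeeping rather than deep difficulty: one must handle the resampling of city sizes at each generation correctly, so that the mean-offspring computation is genuinely an average against the stationary type-measures $\nu_{I,\rA}$ and $\nu_{O,\rA}$, and confirm that $\bE_x[X_k]$ (starting from a \emph{fixed} size $x$ rather than a resampled one) has the same exponential rate — the size $x$ affects only the first generation's means and hence only the constant $c$, not the growth rate $\rho$. Verifying the strict positivity needed for Perron–Frobenius, and thus the independence of the limit from $x$, is the one point where the support assumptions on $\beta$ must be invoked explicitly.
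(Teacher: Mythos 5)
Your proposal is correct and takes essentially the same route as the paper: the paper channels the reduction through the transfer operator $\cT$ and Lemma~\ref{Proj}, but that lemma is precisely the mean-matrix recursion for the two-type projection $\text{V}^{(2)}$ that you use, and both arguments conclude by Perron--Frobenius applied to the positive $2\times 2$ matrix $\Wmat$, with the fixed initial size $x$ entering only through the first-generation means. The only point the paper treats explicitly that you do not is the degenerate case where an entry of $\Wmat$ is infinite (whence $R_0=\infty$), which is excluded before the spectral argument.
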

In the Supplementary Material, 
we prove Proposition \ref{Red_spec}
as a direct consequence of Proposition~\ref{Spec},
see Section~\ref{sec_Spec_An} on the complete spectral analysis 
of the transfer operator related to $\text{V}^\infty$.
$R_0$ as an eigenvalue of this transfer operator
has corresponding eigenvectors 
that are functions of the city size,
as described in Subsection \ref{sec_eig_cent}
dedicated to eigenvector centrality.
It is a measure for the number of infections that are triggered when city $i$ gets infected and hence, allows to compare the relative importance of the different cities during the course of an epidemic.
By targeting strict measures on cities with a high eigenvector centrality value, one wishes to target cities with a high potential for secondary infections.

\subsection{Computation of the basic reproduction number}
\label{sec_adj_str}

This subsection is dedicated to the computation of $R_0$,
under the two strategies \((P)\) and~\((U)\).
Recall that $L(x) \equiv L_\vee$ 
under strategy \((U)\) 
and  $L(x) =p_\vee\cdot x$
under strategy \((P)\).
A crucial role is played by the moments $\cZ_\gamma$
with exponent $\gamma>0$, cf \eqref{Knu} and just above.

Under strategy~\((U)\),
the entries of the two-type transmission matrix $\Wmat$ defined in \eqref{MatrixTo}, here abbreviated as $\Wmat^U$,
are the following:
\begin{align*}
\Wmat^U_{O, O} 
&:=\int_0^\infty  \nu_{O, \rA}(dx)  k_B \, L_\vee\, x^b\, \int_0^\infty y^a \beta(dy)
\\& = k_B \, L_\vee\, \cZ_{a+b},\quad
\\
\Wmat^U_{O, I} 
&:=
k_B \, L_\vee\cdot \dfrac{\cZ_{2a-1}\cdot \cZ_{1+b}}
{\cZ_{a}},\quad
\\\Wmat^U_{I, O} 
&:=
k_B \, L_\vee\cdot \dfrac{\cZ_{a}\cdot \cZ_{1+2 b}}
{\cZ_{1+b}},\quad
\\ \Wmat^U_{I, I} 
&:= k_B \, L_\vee\, \cZ_{a+b}.
\end{align*}
The largest eigenvalue of such a two-by-two matrix 
has an analytic expression,
so that we deduce under strategy~\((U)\):
$$R^{(U)}_0 = k_B \, L_\vee\, r^U_0,$$ where
\begin{equation}\label{iniquity sim wickedness}
r^U_0 =  \cZ_{a+b} + \sqrt{\cZ_{2a-1} \cZ_{2b+1} }.
\end{equation}

Under strategy~\((P)\),
 the entries of the corresponding two-type transmission matrix $\Wmat^P$
are 
\begin{align*}
\Wmat^P_{O, O} 
= \Wmat^P_{I, I} 
&:= k_B \, p_\vee\, \cZ_{1+a+b},
\\\Wmat^P_{O, I} 
&:=
k_B \, p_\vee\cdot \dfrac{\cZ_{2a}\cdot \cZ_{1+b}}
{\cZ_{a}},\quad
\\ \Wmat^P_{I, O} 
&:=
k_B \, p_\vee\cdot \dfrac{\cZ_{a}\cdot \cZ_{2+2 b}}
{\cZ_{1+b}}.
\end{align*}
We deduce the following formula for the largest eigenvalue  under strategy~\((P)\):
$$R^{(P)}_0 = k_B \, p_\vee\, r^P_0,$$
where
\begin{equation}\label{dry run sim rehearsal}
r^P_0 
=  \cZ_{1+a+b} + \sqrt{\cZ_{2a} \cZ_{2+2b}}.
\end{equation}

\subsubsection*{Elementary adjustment of the two strategies}

A possibility to adjust the two strategies \((U)\) and \((P)\) 
to each other
is to put $R^{(U)}_0$ and $R^{(P)}_0$ 
at the same level $R_\star$
and compare the corresponding threshold values $L_\vee$ and $p_\vee$. 
This is particularly useful, if one knows parameter regimes for which  one of the two strategies leads to a subcritical epidemic and one wants to choose the threshold of the other strategy such that the corresponding epidemic is also subcritical.
In the following proposition,
we give conditions on the parameters $a$ and $b$
that enable to upper-bound $p_\vee$ depending on $L_\vee$.
It is proven in Section~\ref{sec_pr_adj_R0}
in the Supplementary Material.

\begin{prop}
	\label{p_Comp3}
Let us define $\gamma := (2 a)\vee (2+2b)$
and $\delta \le (2 a-1) \wedge (2 b+1)$. 
For any $\beta$ such that the moment $\cZ_\gamma$ is finite, it holds that
	$\cZ_{\delta+1}\cdot  r^U_0 \le \cZ_\delta\cdot r^P_0$.
	Consequently,
	when $R^{(U)}_0$ and $R^{(P)}_0$ are put
	at the same level $R_\star$,
it implies that  $p_\vee \le L_\vee \cdot \cZ_\delta / \cZ_{\delta +1}$.
	
In particular,
	it implies that 
 \begin{align}\label{Upper bound p_v 1}
 p_\vee \le L_\vee / \cZ_1
 \end{align}
	provided $a\ge 1/2$ and $b\ge -1/2$,
	and 
 \begin{align}\label{Upper bound p_v 2}
 p_\vee \le L_\vee \cdot \cZ_1 / \cZ_2
 \end{align}
	provided  $a\ge 1$ and $b\ge 0$.

 In the case $a= 1+b$ we have 
\begin{align}\label{AdjustmendA=1+B}
p_\vee = L_\vee \cdot Z_{1+2b}/Z_{2+2b}.
\end{align}
\end{prop}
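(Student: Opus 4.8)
The plan is to establish the fundamental inequality $\cZ_{\delta+1}\cdot r^U_0 \le \cZ_\delta \cdot r^P_0$ by comparing the two expressions term-by-term using the explicit formulas \eqref{iniquity sim wickedness} and \eqref{dry run sim rehearsal}. Recall that $r^U_0 = \cZ_{a+b} + \sqrt{\cZ_{2a-1}\,\cZ_{2b+1}}$ and $r^P_0 = \cZ_{1+a+b} + \sqrt{\cZ_{2a}\,\cZ_{2+2b}}$. The key observation is that each moment appearing in $r^P_0$ has its exponent shifted up by exactly one relative to the corresponding moment in $r^U_0$: $1+a+b$ versus $a+b$, $2a$ versus $2a-1$, and $2+2b$ versus $2b+1$. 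This suggests that the whole statement reduces to a single log-convexity / moment-shift inequality of the form $\cZ_{\delta+1}\,\cZ_\gamma \le \cZ_\delta\,\cZ_{\gamma+1}$, which is precisely Hölder's inequality (equivalently, the statement that $\gamma\mapsto \log\cZ_\gamma$ is convex, so that moment ratios $\cZ_{\gamma+1}/\cZ_\gamma$ are nondecreasing in $\gamma$).

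First I would treat the two summands of $r^U_0$ separately. For the deterministic term, Hölder gives $\cZ_{\delta+1}\,\cZ_{a+b}\le \cZ_\delta\,\cZ_{1+a+b}$ whenever the exponents are correctly ordered, i.e. when $\delta \le a+b$; this handles $\cZ_{\delta+1}\,\cZ_{a+b}\le \cZ_\delta \cZ_{1+a+b}$. For the square-root term, I would apply Hölder twice (once to each factor under the root) to obtain $\cZ_{\delta+1}\sqrt{\cZ_{2a-1}\,\cZ_{2b+1}} \le \cZ_\delta \sqrt{\cZ_{2a}\,\cZ_{2+2b}}$; here the hypotheses $\delta \le (2a-1)\wedge(2b+1)$ guarantee that the moment ratios move in the right direction, namely $\cZ_{2a}/\cZ_{2a-1} \ge \cZ_{\delta+1}/\cZ_\delta$ and $\cZ_{2+2b}/\cZ_{2b+1}\ge \cZ_{\delta+1}/\cZ_\delta$. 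Multiplying these two ratio bounds and taking square roots yields the bound on the root term. Summing the two term-wise inequalities gives $\cZ_{\delta+1}\,r^U_0 \le \cZ_\delta\,r^P_0$, and finiteness of $\cZ_\gamma$ with $\gamma=(2a)\vee(2+2b)$ ensures every moment invoked is finite so that the manipulations are legitimate.

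From the master inequality the rest is bookkeeping. Setting $R^{(U)}_0 = R^{(P)}_0 = R_\star$ gives $k_B\,L_\vee\,r^U_0 = k_B\,p_\vee\,r^P_0$, hence $p_\vee/L_\vee = r^U_0/r^P_0 \le \cZ_\delta/\cZ_{\delta+1}$, which is the claimed consequence. Specializing $\delta$ then produces the explicit corollaries: taking $\delta = 0$ requires $0\le(2a-1)\wedge(2b+1)$, i.e. $a\ge 1/2$ and $b\ge -1/2$, and yields $p_\vee\le L_\vee/\cZ_1$ (using $\cZ_0=1$); taking $\delta=1$ requires $1\le(2a-1)\wedge(2b+1)$, i.e. $a\ge 1$ and $b\ge 0$, and yields $p_\vee\le L_\vee\,\cZ_1/\cZ_2$. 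Finally, in the rank-one case $a=1+b$ the two eigenvalue formulas simplify: the exponents $2a-1=2b+1$ and $2a=2+2b$ coincide pairwise, so $r^U_0 = \cZ_{1+2b}+\cZ_{1+2b} = 2\cZ_{1+2b}$ and similarly $r^P_0 = 2\cZ_{2+2b}$; the ratio is exactly $\cZ_{1+2b}/\cZ_{2+2b}$, giving the equality \eqref{AdjustmendA=1+B} with $\delta = 1+2b$.

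The main obstacle I anticipate is purely a matter of exponent-ordering bookkeeping rather than analytic depth: one must verify that under the stated hypothesis $\delta\le(2a-1)\wedge(2b+1)$ all three Hölder applications have their exponents arranged so that the moment ratios are monotone in the correct direction, and in particular check the auxiliary condition $\delta\le a+b$ needed for the deterministic term (which follows from $\delta\le(2a-1)\wedge(2b+1)$ since $2(a+b) = (2a-1)+(2b+1)\ge 2\delta$). Care is also needed to confirm that log-convexity of $\gamma\mapsto\cZ_\gamma$ indeed delivers monotonicity of $\cZ_{\gamma+1}/\cZ_\gamma$ on the relevant range and that no moment in play is infinite, which is exactly what the finiteness assumption on $\cZ_\gamma$ secures.
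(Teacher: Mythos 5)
Your proposal is correct and follows essentially the same route as the paper: both reduce the claim to the moment-shift inequality $\cZ_{\delta+1}\,\cZ_\gamma \le \cZ_\delta\,\cZ_{\gamma+1}$ for $\gamma \in \{a+b,\,2a-1,\,2b+1\}$, obtained from H\"older's inequality (your log-convexity phrasing is the same fact), with the same verification that $\delta \le a+b$ follows from $2(a+b)=(2a-1)+(2b+1)\ge 2\delta$. The specializations $\delta=0$, $\delta=1$ and the rank-one case $a=1+b$ are handled as in the paper (your explicit factor of $2$ in $r^U_0$ and $r^P_0$ is slightly more careful and cancels in the ratio).
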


The first inequality \eqref{Upper bound p_v 1}
 means that the threshold for strategy~\((P)\) must be smaller than $L_\vee$ divided by the expected city size.
For the second inequality \eqref{Upper bound p_v 2} note that  
\[
p_\vee \frac{\mathcal{Z}_2}{\mathcal{Z}_1} = 
\int \frac{x}{\int y \beta(dy)} x p_\vee \beta(dx), 
\]
which can be interpreted as the average threshold number of infected people in the city of a randomly chosen individual. So Ineq. \eqref{Upper bound p_v 2} means that $p_\vee$ must be chosen  such that this average threshold number
is smaller than $L_\vee$.

In practice,
$\beta$ is often heavy-tailed.
In this case, the two upper bounds can be very far apart.
For the data from France 
(see Subsection \ref{Sec:MobilityData} for the description of the dataset),
the ratio $\mathcal{Z}_2 / (\mathcal{Z}_1)^2$
is actually close to 36,
while for the Japanese data 
the ratio $\mathcal{Z}_2 / (\mathcal{Z}_1)^2$
is even larger than 80.
Since  $a\approx 1$ and $b\approx 0$ in both cases,
only the second estimate of $p_V$
by $L_\vee \cdot \mathcal{Z}_2 / \mathcal{Z}_1$ is thus relevant.

For heavy-tailed distributions $\beta$, the right-hand side of \eqref{AdjustmendA=1+B}
is very much affected by the value of $b$.

\section{Data analysis and simulations}
\label{sec_data_analysis}

\subsection{Datasets}\label{Sec:MobilityData}

We exploited mobility data from France, Poland and Japan for our analysis.
The general data processing for the simulation consisted of: collecting population and travel flow data from original data sources and reaggregating data into regional units, as specified for the different countries in the following paragraphs.  
\medskip

The French data was sourced from the main statistical office in France - INSEE. 
Popu\-lations \cite{sourceFrance2} 
and travel flows
\cite{sourceFrance} 
were extracted on a municipal level and reaggregated in regions which represented the \textit{areas of attraction} defined by INSEE
\cite{frenchAoA}. 
The travel flow dataset represents workforce mobility (fr. \textit{mobilit\'es professionnelles}) 
from the area of living, to the area of work.

The Polish data was sourced from the Polish Statistical Office (rebranded recently as Statistics Poland \cite{PLMobility}), more precisely from the 2016 dataset on workforce mobility based on a census from 2011. Both population and travel flows are provided per commune level (pl. \textit{gmina}). In the case of Poland however, there was no established area of attraction division that would relate to the last fully published census of 2011. Instead, we reaggregated the data to the \textit{powiat} level. In two cases, we reaggregated further a couple of powiats into two regional units: the GZM, which is an interconnected region of around 5 mln inhabitants in Upper Silesia, and the Tricity, which is a region of three densely connected cities in the central northern Poland.

In Japan, we used the data from the inter-regional mobility study of 2015 \cite{sourceJapan,sourceJapan2}. It is the largest census of public mobility in Japan, collected from railway companies - it encapsulates the complete mobility of a country on - among others - a working weekday. This is an important difference from how the previous two data sets were created, as this data set does not only include workforce mobility, but all railway mobility on a weekday - it is obviously heavily dominated by workforce mobility, but does include all other trip reasons. As there are as for Poland no predesigned areas of attraction for Japan, we first divided Japan in regional units according to municipalities. 
Based on the division created by OECD \cite{oecdFUA},
we reaggregated the listed municipalities 
into 61 larger regional units, that correspond to the 61 Functional Urban Areas of Japan.
Some of the municipalities among the remaining ones were additionally removed,
because they were either not listed 
as origin or as destination
in the mobility census (after the filtering by distance).
The remaining municipalities were left as singular regional units.
\smallskip

To effectively model an epidemic wave on a nationwide scale, we filter traveler counts by excluding transits between municipalities located less than 30 km apart.
As highlighted in Section \ref{sec_main}, this threshold is supported by common mobility behavior observed across Europe, see e.g. \cite{NadalEtAl2022} and \cite{CommuterDistance}.
To evaluate the effect of this filtering, we performed our data analysis also for non-filtered data as well as for data with a filtering that takes into account travels  of distances of at least 50 km. In the following, we will abbreviate the corresponding datasets by D1+, D30+ and D50+. The results of the D1+ and D50+ analysis can be found in the Supplemental Material, because our model generally fits worse to this data (see Figures~\ref{Fig:Compare PI v4}-\ref{Fig:Compare PI v9}).
\smallskip

Following the just prescribed procedure, we obtained empirical city size distributions of size 668 for France, 341 for Poland and 777 for Japan and for each country a mobility matrix that gives the (directed) travel counts between each pair of cities.

\subsection{Estimating  the city size distributions}
\label{EstimatingPhi}
To describe the tail of the city size
distributions in France, Poland and Japan, as presented in Figure~\ref{fig_Heavy_Tailed_Dist},
we used 
the Python package “powerlaw”
with a fit according to the Kuiper distance \cite{ABP14}.

We found that the (tail of the) size distribution of French areas of attraction is well
approximated by a log-normal distribution,   while a power-law distribution with $\phi = 2.81$
(respectively $\phi = 1.85$),
recalling Formula~\eqref{beta}, 
can be fitted well to the size distribution of Polish (respectively
Japanese) areas of attraction. 
The fit of French areas of attraction
by a power-law distribution leads to an estimate of $\phi = 1.82$.

\subsection{Estimating cities attractiveness and inhabitants mobility}
\label{EstimatingAandB}

We used the empirical number of inbound and outbound travels of work-related mobility in France, Japan and Poland to estimate the values of $a$ and $b$ in the different countries,
see Figure~\ref{fig_Estimate_A_and_B}.

According to our model, we assume that the probability $p_i(x)$ that a city of size $x$ is chosen as a travel target is proportional to $x^a$. Based on the travel counts to a city, we empirically estimate this probability by the ratio of the "number of travels to the city" to the "total number of travels". In case our model is good, we consequently assume that $$c x^a \approx \tfrac{\text{\# travels to the city}}{\text{total number of travels}}$$ and hence  $$\log(\text{number of travels to the city}) \approx a \cdot \log(x)  + c'$$ for some appropriate constant $c'$. To arrive at an estimate of $a$, we fit a linear regression with least squares to the data points $$(\log(x_i), \log(\text{number of travels to city $i$}))_i,$$ where $x_i$ denotes the size of  city $i$ and set $\hat{a}$ equal to the slope of this regression line.

 We estimate $b$ similarly. According to our model, we assume that the probability that an individual of a city of size $x$ is traveling is proportional to $x^b$. Hence, 
  we arrive at an average of $x \cdot c x^b$ individuals that
will leave a city of size $x$ for travel for some $c>0$.  Based on the outbound travel counts, we estimate this number by the number of travels from the city of size $x$. 
In particular, we assume $$\log(\text{number of outbound travels from the city}) \approx (1+b) \log(x) + c'$$ for some appropriate constant $c'$. 
Consequently,  we fit a linear regression with least squares to the data points
$$(\log(x_i), \log(\text{number of outbound travels from city $i$}))_{i}$$ to arrive at an estimate of $b$, and set $\hat{b} + 1$ equal to the slope of this regression line.

Using the  census data filtered according to D30+ as described in Section \ref{Sec:MobilityData}, we obtain the estimates given in Table~\ref{tab_AB_values},
as presented in Figure~\ref{fig_Estimate_A_and_B}. 
Recall from \eqref{ConvergenceProperty_U} and \eqref{ConvergenceProperty_P}
that the condition $\kappa\in L^1$
corresponds to both $\hat \phi - \hat a-1> 0$ and $\hat \phi - \hat b- 2
> 0$ under strategy~\((U)\)
and to both $\hat \phi - \hat a-1 > 1$ 
and $\hat \phi - \hat b- 2
> 1$ under strategy~\((P)\).
 We remark that notably the condition $\hat \phi - \hat a-1> 0$
is not fulfilled, 
be it for France, Poland and Japan.
  
  These estimates of $\hat a$ and $\hat b$ (as well as the estimates that are obtained when filtering the mobility data according to D1+ and D50+)  are also added in Figure~\ref{fig_Numerical_Comparison_UP}, where the efficiency of strategy~\((U)\) and \((P)\) are compared for a range of $a$ and $b$ values. 

\begin{table}[t]
    \centering
    \begin{tabular}{|c||c|c|c|c|}
 \hline
     Country   & $\hat a$ & $\hat \phi - \hat a-1$& $\hat b$& $\hat \phi - \hat b-2$\\
        \hline
        \hline
  France       &  0.96 &\cellcolor[gray]{0.85} - 0.14 & -0.09&\cellcolor[gray]{0.85} -0.09
  \\
  Poland       &  1.95 &\cellcolor[gray]{0.85} - 0.14
  &-0.11 & \cellcolor[gray]{0.95} +0.92\\
  Japan       & 1.05& \cellcolor[gray]{0.85}-0.2&0.05& \cellcolor[gray]{0.85}-0.2
  \\
  \hline
    \end{tabular}
  \caption{Estimates of $\hat a$ and $\hat b$ and comparison with the estimated powerlaw coefficient, for France, Poland and Japan. The level of gray is shown in 
accordance to the conditions leading to $\kappa \in L^1$. A dark grey color indicates that the conditions are not fulfilled, while for light grey colors,  $\kappa \in L^1$ under strategy~\((U)\) but not under strategy~\((P)\).}
    \label{tab_AB_values}
\end{table}

\subsection{Infection probabilities and probabilities to trigger an outbreak}\label{sec_simulated_infection_and_outbreak_prob}

In Sections \ref{sec:probability trigger outbreak} and \ref{sec:infection probability}, we derived analytical (implicit) formulas for the infection and outbreak probabilities that are based on branching process approximations. We showed in Section~\ref{sec_main} already some figures,  see Figures~\ref{Fig_incidence_ppl}-\ref{Fig_Compare_Analytical_PO_Empirical_PO}, that compare infection and outbreak probabilities based on the transportation graph with the  analytically derived infection and outbreak probabilities.
In this section, we provide a more thorough comparison of these probabilities.

For the transportation graph simulations,
we recall that edge probabilities $(p^{(m)}_{ij})_{i, j\le N}$ are determined by the empirical mobility matrices (based on commuters data) given by 
\begin{equation}
-N\cdot \ln\big(1- p^{(m)}_{ij}\big) :=  
 k_C\, \frac{L(x_i)}{x_i}  \cdot \left(m_{ij}  + m_{ji}\right),
 \label{pT_ij}
\end{equation}
where  $L(x)= L_{\vee}$ under strategy~\((U)\) and $L(x) = p_{\vee} x$ under strategy~\((P)\),
while $m_{ij}$  denotes the number of travels from city $i$ to city $j$,
see Section \ref{Sec:MobilityData}
for the procedure leading to this mobility matrix for each country.
In the case of the kernel graph, 
the value of $m_{ij}$ is replaced by $\hat k_M\, x_i^{1+\hat b}\cdot x_j^{\hat a}$, as stated in \eqref{pK_ij}
with the best fit estimates $\hat a$ and $\hat b$
of $a$ and $b$ for each country as described in Section~\ref{EstimatingAandB}.
More precisely, we  adjusted the two graphs 
by simply calibrating the parameter $k_M$ given $\hat a$ and $\hat b$
such that the total amount of mobility agrees  between the two models, i.e.
$\hat k_M = \sum_{i, j} m_{i, j}/\sum_{i, j} x_i^{\hat a}\cdot x_j^{1+\hat b}$.
The KB infection probability 
is evaluated as described in Section~\ref{sec:infection probability}
with the empirical size distribution for $\beta$.
We adjusted the free parameter $k_B$, such that the KB infection probability is equal to 0.5 for cities of size $10^5$ for France, Poland and Japan.
\footnote{Additional figures are presented in the Supplementary Material 
where the fit to 0.5 for cities of size $10^5$
is set to  the outbreak probability
instead of the infection probability,
to investigate further the robustness of our results,
see Figures~\ref{fig_comp_infection_prob_pO} and \ref{fig_comp_outbreak_prob_pO}.}

After the simulation of infection and outbreak probabilities for any city of the (finite) graph, we collect cities of similar size into 20 bins (such that every bin contains about 30 cities for France, 15 cities for Poland and 40 cities for Japan). For each bin, we calculate the average infection and outbreak probability. The plots concerning the infection probabilities can be found in Figure~\ref{Fig_Compare_Analytical_PI_Empirical_PI}, and the plots concerning the outbreak probabilities in 
Figure~\ref{Fig_Compare_Analytical_PO_Empirical_PO}. Additional figures (Figures~\ref{Fig_Final_Incidence}-\ref{Fig:Compare PI v9}) can be found in the Supplemental Material to check the robustness of these numerical outcomes.

To illustrate the variation of the infection and outbreak probabilities within a bin,  standard deviations are plotted, 
as well as minimal and maximal values of TG-based probabilities.
Furthermore, we calculated
coefficients of determination,
that are  denoted by $R^2$ in the following
and provided in Table~\ref{tab_R2_values}, to
evaluate how well the KB-based 
infection (resp. outbreak) probabilities
predict TG-simulated infection 
 (resp. outbreak) probabilities.
More precisely, we  use the following definition of $R^2$ 
that is expressed in terms 
of the $n$ data observations $(y_i)_{i\le n}$ 
(which are here the TG-simulated infection 
 (resp. outbreak) probabilities),
their average $\bar y$,
and the predictions $(\hat y_i)_{i\le n}$:
\begin{equation}
	R^2 
= 1- \frac{\sum_{i\le n}(y_i - \hat y_i)^2}
{\sum_{i\le n}(y_i - \bar y)^2}.
\label{eq_R2_def}
\end{equation}
It is well known, that if the prediction is given by a regression, 
this coefficient of determination 
is necessarily non-negative
and can be expressed 
as the fraction of the variance in the observation
that is explained by the regression.
In general, 
the prediction may be biased 
(on average over the $n$ observations), 
therefore the numerator in \eqref{eq_R2_def} 
can a priori express both a variance and a squared bias 
over the residuals $(y_i - \hat y_i)_{i\le n}$.
Especially when the simulated outbreak/infection probabilities vary little with the city size (so that the denominator is small),
the $R^2$ value in the comparison with the theoretical outbreak/infection probabilities
can be largely negative.
\begin{table}[t]
    \centering
    \begin{tabular}{|c||c|c|c|c|}
     \hline
    &\multicolumn{2}{|c|}{infection probability}
& \multicolumn{2}{|c|}{outbreak probability}\\
 \hline
     Country   & strategy~\((P)\)&strategy~\((U)\)& strategy~\((P)\)&strategy~\((U)\)\\
        \hline
        \hline
  France       &  0.98 &  0.80&0.98&\cellcolor[gray]{0.85}-28.81\\
  Poland       &  0.78 &
   \cellcolor[gray]{0.95} 0.04& 0.80&\cellcolor[gray]{0.85}-4.02 \\
  Japan       & \cellcolor[gray]{0.85}-5.17& \cellcolor[gray]{0.85}-111.60&\cellcolor[gray]{0.85}-4.28& \cellcolor[gray]{0.85}-370.84
  \\
  \hline
    \end{tabular}
    \caption{Values of the coefficient of determination
    for TG simulations as samples 
    and KB formulas as predictions
    for the infection  (left panel) and outbreak (right panel) probability
 under strategy~\((U)\) and \((P)\),
 for France, Poland and Japan.
 Values close to 1 (in white) indicate good performances of the KB prediction,
 values close to 0 (light-gray) indicate similar performances as simply averaging over TG  samples.
If the variance  with respect to the prediction is larger than the corresponding variance with respect to TG averages, the value is negative (in gray).
}
    \label{tab_R2_values}
\end{table}

The coefficients of determination given in Table~\ref{tab_R2_values} confirm the goodness of fit for France, particularly under strategy \((P)\), but also under strategy ~\((U)\),
and to a lesser extent for Poland under strategy~\((P)\) for the infection probability.
For Poland under strategy~\((U)\),
the TG infection probability values 
are not as regularly increasing as under strategy~\((P)\), with quite concentrated values.
That the fit by KG infection probability is as good as the mere average of TG infection probability values (in the sense of a $R^2$ value close to 0)
is a reasonable performance.
For Japan on the contrary,
 KG infection probabilities
 provide a much worse prediction of the TG values
 as compared to their average.

 Concerning the outbreak probabilities,
 we observe very similar $R^2$ values 
 under strategy~\((P)\)
 as for the infection probabilities.
 This outcome is foreseeable, as the infection and outbreak probabilities coincide under strategy~\((P)\).
Under strategy~\((U)\),
 $R^2$ values for outbreak probability  are largely negative. This
can be explained by the fact that the variance of the TG values is very small, 
resulting in a very small denominator in  \eqref{eq_R2_def} that amplifies  any prediction bias 
in the KG values.

Interestingly,  KB infection probability functions appear to be  extremely similar for strategy~\((U)\)
and strategy~\((P)\)
when adjusted at the reference size of $10^5$ inhabitants,
with the data from France, Poland and Japan.
This implies that, for  a prescribed  expected  number of infected individuals,
the distribution of the city sizes placed under isolation shall be almost the same 
under strategy~\((U)\)
and strategy~\((P)\),
and similarly for a given  expected number of people under isolation.
This property can actually be proved whenever $a = 1+b$
(corresponding to the kernel $\kappa$
in \eqref{pVU} being of rank-one). This observation is crucial for Proposition~\ref{ineq}, implying then a robust comparison of efficiency between the two strategies.
\medskip

To check whether the alignment of the TG and KB infection and outbreak probabilities is robust for varying epidemic strength,
there are two natural summary statistics,
namely the proportion of people (resp. cities) under isolation.
On the other hand, 
the basic reproduction number
provides a natural scale
to compare variations in the virulence levels (or in the threshold value)\footnote{Recalling \eqref{iniquity sim wickedness} and \eqref{dry run sim rehearsal},
we observe that both
$r^U_0 = R^{(U)}_0/(k_B\, L_\vee)$
and $r^P_0 = R^{(P)}_0/(k_B \, p_\vee)$ are directly expressed in terms of $a$, $b$ and the distribution $\beta$.
}.
Note that the proportion of infected people is linearly related to 
the proportion of people (resp. of cities) under isolation under strategy~\((P)\) (resp. under strategy~\((U)\)) when the value of $k_B$ is changed.

We already presented in Figure~\ref{Fig_incidence_ppl}
the mean proportion of people under isolation 
for a range of basic reproduction numbers, 
where
the analytical proportions are calculated according to the theoretical probabilities as defined in \eqref{piDef} and the corresponding in silico proportions are obtained from simulations of TG epidemics. In Figures~\ref{Fig_incidence_cit} (and \ref{Fig_incidence_cO_ppl}),
we display analogous plots 
for the mean proportion of \textit{cities} under isolation 
(respectively the mean proportion of people under isolation \textit{conditionally on a large outbreak},
for France and Poland).
In Figure~\ref{Fig_Final_Incidence}
of the Supplementary Material, 
we show each country on a specific panel 
to better compare the proportions of people and of cities under isolation.

\begin{figure}[t]
	\begin{center}   
    \includegraphics[width = \textwidth]{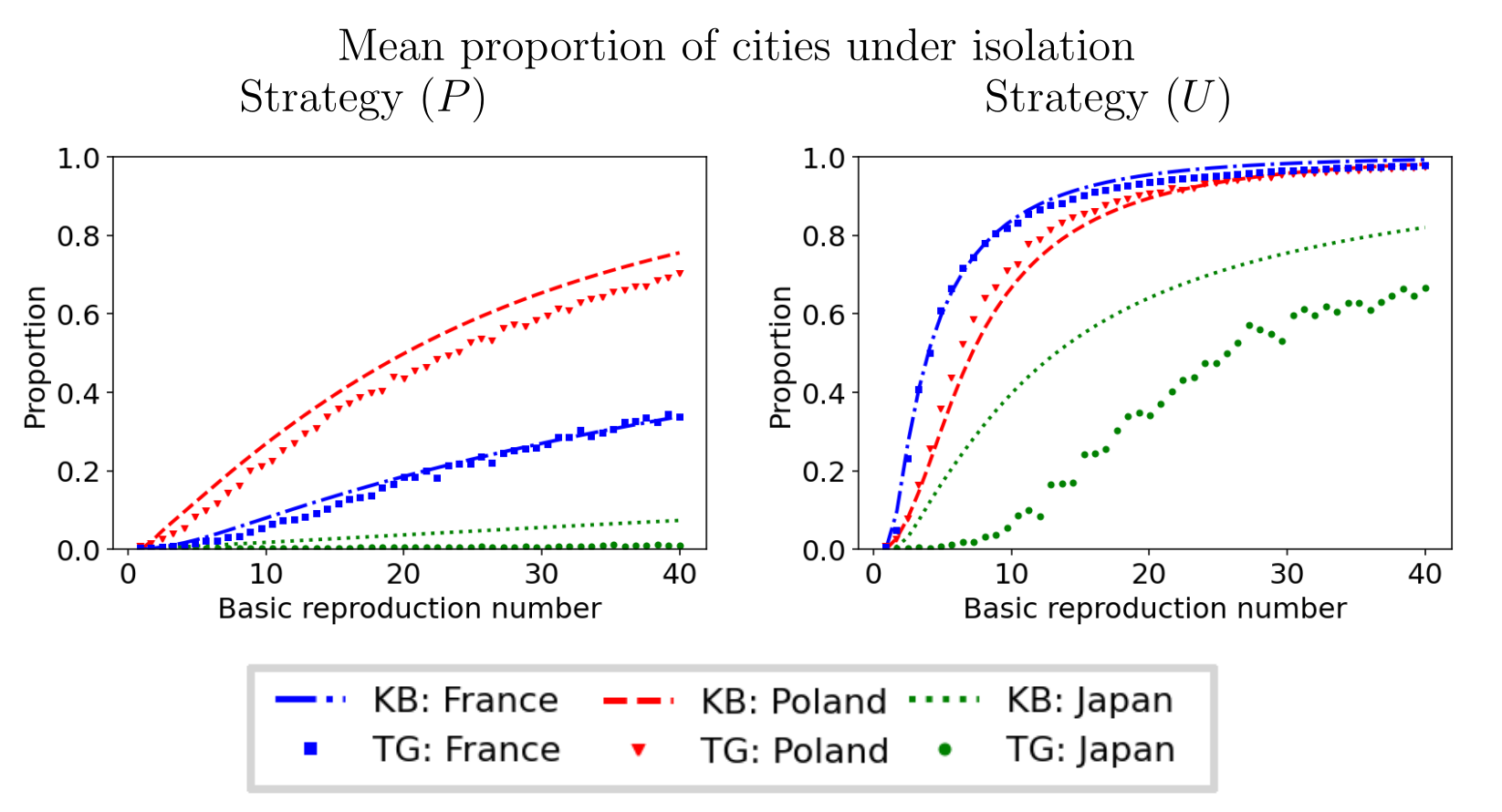}
	\end{center}
    \vcm{-0.6}
	\caption{
        Comparison of the mean TG and KB proportions of cities under isolation 
			depending on the basic reproduction number. Dashed lines show proportions when simulations are performed according to the transportation graph (TG), and scatter plots when
            proportions are calculated with KB approximations,  
            for France (in blue), Poland (in red) and Japan (in green),
			under strategy~\((P)\) on the left and strategy~\((U)\) on the right.
	}
	\label{Fig_incidence_cit}
\end{figure}

\begin{comment}
		Mean proportion of cities under isolation 
		\begin{tabular}{cc}
			Strategy \((P)\) & Strategy \((U)\)\\
			%  & \multicolumn{2}{c}{Mean proportion of people under isolation}\\
			\includegraphics[draft = false, width = 0.44\textwidth]{}   
			& \includegraphics[draft = false, width = 0.44\textwidth]{}\\
			\multicolumn{2}{c}{\includegraphics[draft = false, width = 0.60\textwidth]{legend_Final_incidence_v123_TKG.png}}	
		\end{tabular}
	\end{comment}

To be more precise, the TG- and KG- simulated values of proportions of cities as well as people under isolation
are taken as the simple averages over replicates.
For each replicate, the infection starts with a single infected city randomly chosen according to some distribution,
which we set as $\nu_{O, \rA}$.
Concerning the analytical formulas, recall that for each considered basic reproduction number $r$ a unique value of $k_B$ is associated to,
from which 
we can deduce
the corresponding infection probability function
$\pi^r:\bR_+\rightarrow [0, 1]$ (according to \eqref{piDef})
and outbreak probability function
$\eta^r:\bR_+\rightarrow [0, 1]$ (according to \eqref{survivalprobab}).
Conditionally on an outbreak,
the expected proportion of  people  (resp. of cities)
under isolation
is given as $\int_{\bR_+}x\cdot\pi^r(x)\, \mu(dx)$
(resp. as $\int_{\bR_+}\pi^r(x)\, \mu(dx)$)
and is to be compared  to the average number of people under isolation in simulated epidemics with (relatively) large outbreaks
(defined as to having at least 20 infected cities), see Figure~\ref{Fig_incidence_cO_ppl}.
In Figures~\ref{Fig_incidence_ppl} and \ref{Fig_incidence_cit},
the unconditioned averages on the other hand 
are compared to $\int_{\bR_+}\eta^r(y)\, \nu_{O, \rA}(dy) %\cdot
\int_{\bR_+}x\cdot\pi^r(x)\, \mu(dx)$
(respectively to $\int_{\bR_+}\eta^r(y)\, \nu_{O, \rA}(dy) %\cdot
\int_{\bR_+}\pi^r(x)\, \mu(dx)\;$),
where the factor $\int_{\bR_+}\eta^r(y)\, \nu_{O, \rA}(dy)$
expresses the probability that the outbreak starts from a single city whose size is randomly chosen according to $\nu_{O, \rA}$.

\begin{figure}[t]
\begin{center}   
    \includegraphics[width = \textwidth]{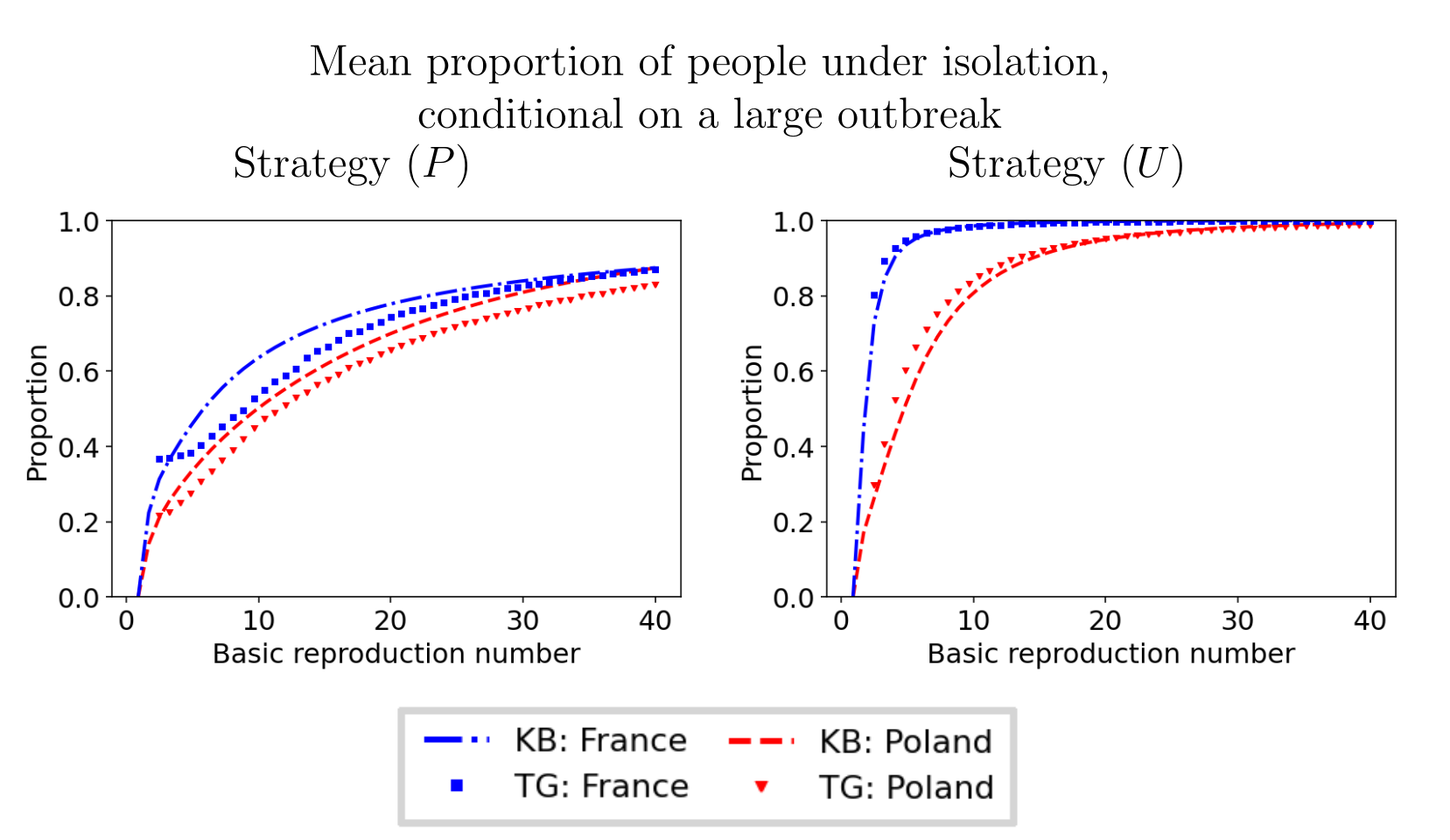}
	\end{center}
    \vcm{-0.5}
	\caption{
Comparison of the mean TG and KG proportions of people under isolation 
depending on the basic reproduction number
conditional on an outbreak with more than 20 cities under isolation,  
for France (in blue), and  Poland (in red),
 on the left 
under strategy~\((P)\) and on the right-panel under strategy~\((U)\).
For Japan, especially under strategy~\((U)\),
the threshold of outbreak size by 20 
is not associated to a relevant distinction in the final incidence, which is why the plot is not shown 
(the observed values do actually still largely disagree between KG and TG proportions).
}
\label{Fig_incidence_cO_ppl}
\end{figure}

\begin{comment}
Mean proportion of people under isolation, 

conditional on a large outbreak

\begin{tabular}{cc}
Strategy \((P)\) & Strategy \((U)\)\\
%  & \multicolumn{2}{c}{Mean proportion of people under isolation}\\
\includegraphics[draft = false, width = 0.44\textwidth]{}   
& \includegraphics[draft = false, width = 0.44\textwidth]{}\\
\multicolumn{2}{c}{\includegraphics[draft = false, width = 0.40\textwidth]{}}	
\end{tabular}
\end{comment}
    
In Figures~\ref{Fig_incidence_cit}
and \ref{Fig_incidence_cO_ppl}
similarly as in in Figure~\ref{Fig_incidence_ppl},
we find a very good agreement 
for France and Poland 
of the TG proportion of cities 
under isolation (resp. of people under isolation conditional on an outbreak)
with the KB such proportions.

More precisely concerning France, 
while the agreement is almost perfect  between the TG and KB proportion of cities under isolation 
under both strategies,
there is a slight downward shift
of the TG proportion of people under isolation conditional on an outbreak
as compared to the KB one
under strategy~\((P)\).
Though it was already visible in Figure~\ref{Fig_incidence_ppl},
the shift is actually more pronounced and more regular when such proportions are evaluated conditional on a large outbreak.
Concerning Poland on the other hand, 
the similar slight downward shift under strategy~\((P)\)
is observed for the TG proportion of cities under isolation 
equally as for the TG proportion of people under isolation, 
both in the unconditioned case
and conditionally on a large outbreak.
Under strategy~\((U)\), the shift is rather upwards.

Globally nonetheless, 
we mainly confirm the robustness of the alignment of TG and KB infection and outbreak probabilities for varying values of the basic reproduction number, for both France and Poland.
In contrast for Japan, 
the TG proportion of cities 
(as the one of people)
under isolation  does  not fit 
the KB corresponding proportion.
Large outbreaks are only rarely observed 
for Japanese TG epidemics,
which is why the plot is not shown in Figure~\ref{Fig_incidence_cO_ppl}.

Besides, due to the definition of the basic reproduction number, 
it was expected that the KG proportion under isolation 
should reflect the emergence of large outbreaks
as soon as $R_0$ would exceed 1.
Given the calibration of the virulence level in the transportation graph,
it was less clear  that 
the emergence of outbreaks when $R_0>1$ could similarly be reflected 
by the TG proportion of people or cities under isolation,
with outbreak sizes starting close to 0 for small $R_0$ values.
This is nonetheless what we observed, whatever the country or the strategy, 
both for people and for cities.

It deserves to be noted as well that 
very large $R_0$  values are associated with relatively intermediate TG and KG proportions of people under isolation, especially under strategy~\((P)\).
This hints  at the fact that most of the epidemic spread occurs within a few generations.
This also suggests that it is not absolutely essential to be particularly accurate in determining the incidence threshold 
that serves as an isolation criterion 
(for an anticipated  number of infected people).
\medskip

As the KG infection probabilities do nicely agree with the analytical prediction (the KB values), we suppose  that larger deviations 
from the analytical expressions for the infection and outbreak probabilities as well as expected proportions of people and cities under isolation 
are neither due to the finite number of cities
(in relation to the heterogeneity level in city sizes)
nor due to the values of $a$ and $b$. 
We rather hypothesize that the reason for the discrepancy lies in the geographic structure of Poland and Japan.

In Poland, there are due to historical reasons major socio-economic differences between the Eastern and Western part of Poland. These differences manifest beyond other also in the connectivity structure between cities, in particular in the work related mobility. In the Eastern part of Poland, these distances are of greater relevance than in the Western part, which has a very well developed transport system.  
 Our model is not adapted to these regional differences, 
and therefore we expect a less good fit than for France.

 The linear spatial structure  of Japan as well as the accordingly adapted railway system (of the Shinkansen) influences strongly the mobility matrix of Japan (in particular because our dataset is based on railway mobility). 
 Since our model takes into account only city sizes and not distances between cities measured in terms of geographic distances or railway connectivity, our model is expected to fit less well for countries like Japan where these distances seem to play a role.

\subsection{Regional lockdown strategies during COVID-19 pandemics and estimates of empiricial $R_0$-values for  SARS-Cov2 spread between cities}
\label{sec_empirical_R0}

During autumn/winter  2020/2021, regional lockdown regulations have been applied world-wide in many countries. The stringency of containment policies depended on the number of new cases that have been detected within a region in the last days (in general one to two weeks). In several European countries, the weekly or biweekly cumulative incidences per 100.000 individuals have been recorded during the pandemic and measures have been based on these numbers. For example in Germany, the first set of restrictions have been issued at a seven-day-incidence of 35 per 100.000, followed by additional measures at an incidence of 50 and strictest measures at an incidence of 200, see~\cite{HotspotGermany}. In this section, we aim to estimate empirical $R_0$-values for between-city-transmission of SARS-Cov2 under this form of regulation for Germany based on estimated individual reproduction numbers.

In autumn/winter 2020/2021, the individual reproduction number was estimated by \cite{HotzEtAl20} to lie between 1.3 and 1.5 based on reported cases of the Robert-Koch-institute (RKI).

Workplace-related infections have been estimated (in England and Wales in September/ December 2021)  to make up about 17~\% of all infections \cite{HoskinsEtAl}. We assume that the proportion of workplace-related infections were not lower in Germany  in autumn/winter 2020/2021, because in 2020 private activities were quite restricted in Germany.
Furthermore, the pandemic in autumn/winter 2021/2022 was strongly impacted by vaccination,
which was frequently mandatory for workplaces, hence the 17~\% should be seen as a lower-bound.
About 22~\% of all persons in employment commute more than 30~km, see \cite{CommuterDistance}, as well as about 55~\% of all individuals living in Germany are employed, see \cite{Erwerbstaetigkeit}.

In the same period, the ratio of the true number of corona cases to the number of detected corona cases was estimated to be 2.5-4.5 in Germany, see \cite{GEtAl}.

With these estimates, we arrive at a between-city-reproduction number (only based on work-related infections) of 
$50\cdot 2.5 \cdot 0.22\cdot 0.55 \cdot 0.17  \cdot 1.3= 3.34  $ to $50\cdot 4.5\cdot 0.22\cdot 0.55 \cdot 0.17 \cdot 1.5= 6.94$
for a city of size $10^5$, when commuting to other regions is prohibited from an incidence of 50, i.e.~assuming that 
the infection is spread to other cities
only in the last week before a city is put under lockdown. As a result, from an infected city, roughly 3-7 individuals in other cities get infected,  potentially triggering  an infection wave in those cities (cities which may not all be different). From this perspective, it comes as no surprise that the regional lockdown strategy was not successful and turned quickly into a country-wide lockdown.

\subsection{Validity of the estimation of $R_0$}
\label{sec_val_R0}

The basic reproduction number  is an important characteristic of an epidemic process. In a branching process, it is the expected offspring number, i.e. the expected number of infections caused by a typical infected entity.
In SIR models, and especially in the one we are considering here, 
it firstly defines the threshold (corresponding to $R_0>1$) below which no large outbreak would be expected to emerge from infection clusters.
In models with sufficiently well mixed populations,
an exponential growth is observed at the beginning of the epidemic
with $R_0$ as the per-generation infection rate.
At later time points, the initial $R_0$ fails to describe the rate of exponential growth.
A typical reason for the subsequent discrepancies is 
the decrease in the number of susceptible entities, the cities in our case.
Over time,
it becomes  less and less adequate to assume that the law of the size of newly infected cities is unaffected by the progression of the epidemic.
We expect that this second factor plays a  significant role in our models,
given that the large cities 
tend to be contaminated sooner
than the small ones,
which may render
the estimate of the basic reproduction number 
from per-generation infection numbers
unreliable.

To shed light on these effects,
 we calculated estimates for the basic reproduction number from simulated epidemics. 
Since the beginning of an epidemic strongly depends on the city size of the primarily infected city, we start estimating the $R_0$-value only when the epidemic has been run for several generations. For this purpose, we filtered the simulations for epidemics generating a relatively large outbreak with at least  
10 cities infected (within one generation).
For the $i$-th such simulated epidemic, let $Z^{(i)}(\ell)$ be the number of infected cities in generation $\ell$
and define $T^{(i)}$ as the first generation $g$ at which this process reaches at least $10$ 
(i.e. $Z^{(i)}(g)\ge 10> \max_{\ell \le g-1} Z^{(i)}(\ell)$). 
Given some parameter $G\in \mathbbm{N}$ to be adjusted, 
the $R_0$-value is evaluated
by considering the interval of $G$ generations 
after time $T^{(i)}$.
We thus consider the subset $\mathcal R_G$ 
 consisting of all simulation indices $i$
for which the epidemic remains ongoing in generation  $T^{(i)}+G$
for the $i$-th simulation.
The $R_0$-value corresponding to $G$ 
is expressed in terms of the 
trajectories $:g\in \{0,..., G\}\mapsto Z^{(i)}(T^{(i)}+g)$ for $i \in \mathcal R_G$.

Let $Z(\ell)$ 
be the random number of infected cities
at generation $\ell$
according to the branching process described in Section~\ref{sec_fwd_br}
and $T$ the first generation at which it reaches at least $10$.
According to the theorem of Heyde-Seneta,
there exists a random variable $W$ 
such that the following equivalence holds 
for $\ell$ such that $(R_0)^\ell$ is large,
on the event where the branching process 
survives, event that shall correspond to the occurrence of an outbreak:
\[Z(\ell) \approx W \cdot (R_0)^{\ell}\] 
and hence, we have on this event
\[\ln[Z(T+g)] 
\approx \ln(W) + T \ln(R_0) + g\ln(R_0).
\]

\begin{figure}[t]
	\begin{center}   
		\includegraphics[width = \textwidth]{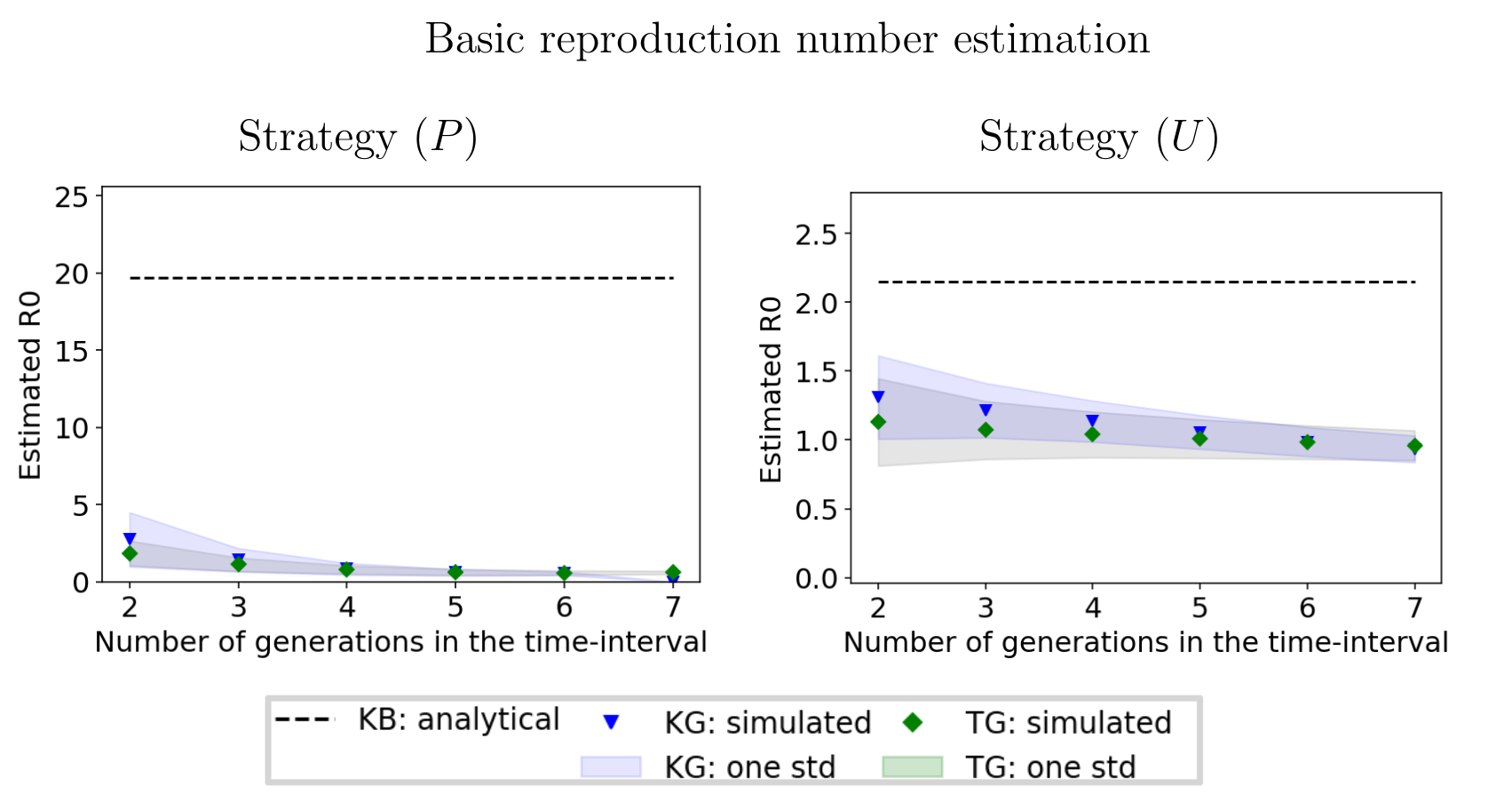}
	\end{center}
	\caption{
		$R_0$-expected value and variations calculated from simulated epidemics, see Section \ref{Section Reproduction Number}, 
		as compared to the analytical value. 
		The considered city size  distribution 
		is the one of France filtered according to D30+.
		Epidemics have been simulated according to the transportation graph (TG, in green with gray shading)
		or to the kernel graph (KG, in blue),
		on the left under strategy~\((U)\) 
		and for the right-panel under strategy~\((P)\).
	} 
	\label{Fig_R0_est_FC}
\end{figure}

\begin{comment}
	Basic reproduction number estimation
	\begin{tabular}{cc}
		Strategy \((P)\) & Strategy \((U)\)\\
		\includegraphics[draft = false, width = 0.45\textwidth]{}
		&
		\includegraphics[draft = false, width = 0.45\textwidth]{}
		\\
		\multicolumn{2}{c}{\includegraphics[draft = false, width = 0.60\textwidth]{}}	
	\end{tabular}
\end{comment}

This motivates to infer the logarithm of the basic reproduction number 
 by performing a least-square regression 
on the $i$-th trajectory
for each $i\in \mathcal R_G$
(for $g\in \II{0, G}$).
By not accounting 
for the $i$-th simulation
if the outbreak has stopped before generation $T^{(i)}+G$ (or before $T^{(i)}$ is reached),
we mimic the restriction to the event 
where the branching process 
survives.
The inferred value is then denoted $R_0^{(i)}(G)$
and we look how the distribution 
of $R_0^{(i)}(G)$ over $i\in \mathcal R_G$ varies 
with increasing $G$ up to 9.

In Figure~\ref{Fig_R0_est_FC}, the average over $i$
of these estimates of $R_0$  are depicted for various scenarios, together with intervals corresponding to one standard deviation on both side and the 5 and 95\% quantiles. 
With values of $G$  up to generation 7, 
we always keep more than $10\%$ trajectories accounted for
(the evaluation is put to 0 in generations 8 and 9 in the left-panel for strategy~\((U)\) due to this lack of trajectories).

For the different panels of the figure,
we used the 
same setup as  for the outbreak probabilities with the French dataset, see Section~\ref{sec_simulated_infection_and_outbreak_prob}.
Dataset D30+ was similarly used, as explained in Section~\ref{Sec:MobilityData},
and we 
 adjusted $k_B$ (thus fitting the expected $R_0$ value)
so that the KB infection probability 
is equal to 0.5 for a city size of $10^{5}$.
The left panel
shows the estimation under strategy~\((P)\),
the right panel under strategy~\((U)\).
The estimates derived from the transportation graph (TG, in green)
are displayed together with the ones from the kernel graph (KG, in blue),
while 
the dashed line corresponds to the theoretical $R_0$-value computed according to Section~\ref{sec_adj_str}.

In any case, the procedure for estimating 
the value of $R_0$ does not produce satisfying 
outcomes. The inferred values do not agree with the expected one, 
and the inferred values are rapidly declining with increasing $G$ (though it reduces variability in the estimation).
We can also observe that the TG and KG $R_0$-values are extremely close,
which demonstrates that the reason for the bad estimation of $R_0$
lies in the city size distribution
rather than more intricate aspects of the connection graph.
This is discussed in more details in Section \ref{sec_R0_est} of the Supplemental Material,
in which we show that the procedure could be used  to infer $R_0$-values that align with the expected one,
if the city size distribution were less heavy-tailed.

\subsection{Indegree and Outdegree}
\label{sec_degree}
\begin{figure}
\begin{center}   
    \includegraphics[width = \textwidth]{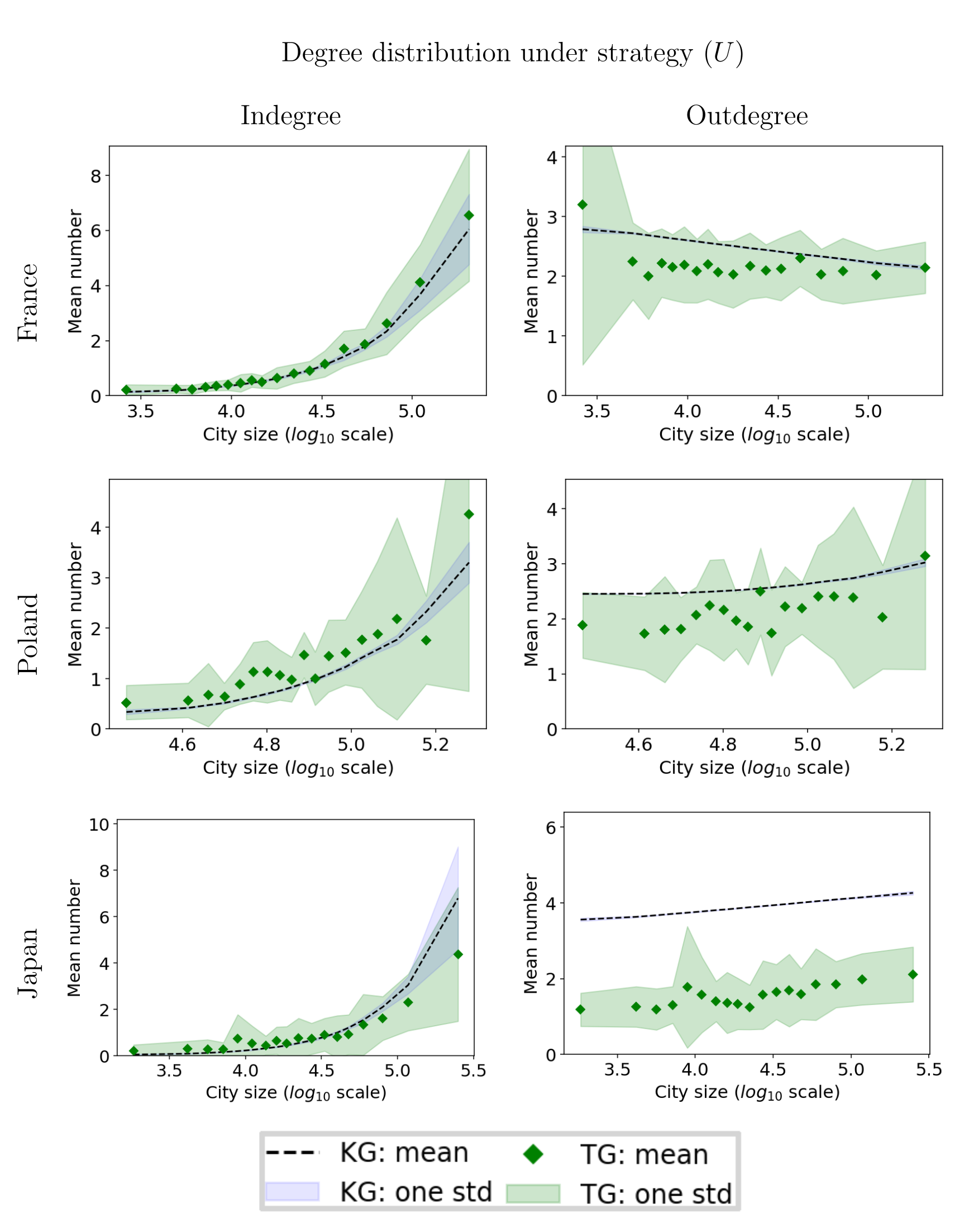}
	\end{center}
\caption{
Comparison of the theoretical and the empirical indegrees, on the left, and outdegrees, on the right,
of the epidemic graph for strategy~\((U)\) depending on the city size
for France, Poland and Japan.  The $R_0$ value is adjusted such that  the theoretical infection probability  is 0.5 for cities of size $10^{5}$, see Section \ref{sec:infection probability}.}
\label{Fig_Degree_Distrib}
\end{figure}

\begin{comment}
Degree distribution under strategy~\((U)\)
\begin{tabular}{lcc}
&\quad Indegree & \quad Outdegree\\
\rotatebox{90}{\hcm{1.6} France}
&  \includegraphics[draft = false, width = 0.44\textwidth]{}   
 & \includegraphics[draft = false, width = 0.44\textwidth]{} \\
\rotatebox{90}{\hcm{1.6} Poland}
&\includegraphics[draft = false, width = 0.44\textwidth]{} 
 & \includegraphics[draft = false, width = 0.44\textwidth]{}\\
\rotatebox{90}{\hcm{1.6} Japan}
&\includegraphics[draft = false, width = 0.44\textwidth]{} 
 & \includegraphics[draft = false, width = 0.44\textwidth]{}\\
& \multicolumn{2}{c}{\includegraphics[draft = false, width = 0.50\textwidth]{}}
\end{tabular}
\end{comment}

To assess the fit between the transportation and the kernel graphs,
we conducted an additional comparative analysis of the distributions of expected in- and outdegree as a function of city size. 
In Figure~\ref{Fig_Degree_Distrib},
we present the TG in- and outdegree distributions alongside the corresponding KG distributions based on strategy \((U)\) for  France, Poland, and Japan.

Recalling \eqref{pT_ij},
a natural proxy for the expected TG indegree 
of city $i$ is given for strategy \((P)\) 
by 
\begin{equation*}
\mathfrak{i}^{(P)}_i = k_C\, p_\vee\cdot \Big(\sum_{j} m_{ji} + m_{ij}\Big),
\end{equation*}
 while it is given for strategy \((U)\) by:
\begin{equation*}
\mathfrak{i}^{(U)}_i= k_C\, L_\vee\cdot \Big(\sum_{j} \frac{m_{ji} + m_{ij}}{x_j}\Big).
\end{equation*}
Similarly, a natural proxy for the expected TG outdegree 
of city $i$ is given for strategy \((P)\) 
by $\mathfrak{o}^{(P)}_i$ equal to $\mathfrak{i}^{(P)}_i$, while it is given for strategy \((U)\) by:
\begin{equation*}
\mathfrak{o}^{(U)}_i= k_C\, L_\vee\cdot \Big(\sum_{j} \frac{m_{ji} + m_{ij}}{x_i}\Big).
\end{equation*}

Recalling \eqref{pK_ij} and \eqref{Knu},
we deduce the following proxies 
$(\mathfrak{i}_i^{\kappa, (P)})_i$
$(\mathfrak{o}_i^{\kappa, (P)})_i$,
$(\mathfrak{i}_i^{\kappa, (U)})_i$,
and  $(\mathfrak{o}_i^{\kappa, (U)})_i$
for the kernel graph 
to be compared with $(\mathfrak{i}_i^{(P)})_i$,
$(\mathfrak{i}_i^{(U)})_i$,
and  $(\mathfrak{o}_i^{(U)})_i$:
\begin{align*}
\mathfrak{i}_i^{\kappa, (P)} &= \frac{k_B}{N} p_\vee \cdot \big( x_i^a \cdot \cZ_{1+b} 
+ x_i^{1+b} \cdot \cZ_{a} \big)
= \mathfrak{o}_i^{\kappa, (P)} ,\\
\mathfrak{i}_i^{\kappa, (U)} 
&= \frac{k_B}{N} L_\vee \cdot \big( x_i^a \cdot \cZ_{b} 
+ x_i^{1+b} \cdot \cZ_{a-1} \big),\\ 
\mathfrak{o}_i^{\kappa, (U)}
&= \frac{k_B}{N} L_\vee \cdot \big( x_i^{a-1} \cdot \cZ_{1+b} 
+ x_i^{b} \cdot \cZ_{a} \big).
\end{align*}

Note that the indegree and outdregree vectors  coincide (like their proxies $(\mathfrak{i}_i^{(P)})_i$ and  $(\mathfrak{o}_i^{(P)})_i$) in the case of strategy~\((P)\),
be it for the transportation graph $TG$ or the kernel graph $KG$
(for the kernel branching process $KB$ as well).
This is not the case for strategy~\((U)\),
which is why we focus on strategy~\((U)\)
in Figure~\ref{Fig_Degree_Distrib}.
 Actually, 
the indegree distribution 
under strategy~\((P)\) 
is also very similar 
to the one under strategy~\((U)\)
for both the transportation graph and the kernel graph
(figure not shown).

In Figure~\ref{Fig_Degree_Distrib},
 we observe that the outdegrees are nearly constant (i.e. almost not dependent on the city size), while indegrees are clearly increasing in city sizes.
It is striking that the degrees in the transportation graph
are much more variable than in the kernel graph, for cities of similar sizes.
Nonetheless, for France and Poland, the degrees in the transportation graph
align well with the one in the kernel graph 
when averaged over the bins 
(recall that there are 20 bins for each country, thus 15 to 40 cities per bin).
The fit is not  as good for the outdegree as for the indegree under strategy~\((U)\),
with  KG outdegrees
consistently overestimating the TG ones.
For Japan, we see that the fit of the degrees is rather good for small city sizes yet worsens as city size increases,
except for the outdegree under strategy~\((U)\) where there is a strong discrepancy (again with KG outdegrees being consistently larger than TG outdegrees).

At least with the data from France and Poland, 
we see that the kernel approach and the chosen form of the kernel
do not alter the relation between city size 
and the epidemic graph indegree
(and by extension between city size and outdegree under strategy~\((P)\)).
The reduced TG outdegree values under strategy~\((U)\)
as compared to KG ones
is presumably due to the geographic structure: the transmission routes at the individual level are restricted to a smaller subset of neighboring cities. Such increased competition probably reduces the number of cities ultimately struck.
It is thus expected (and can also by observed in Figure \ref{Fig_Degree_Distrib}) that the effect is greater in Japan, given the shape of the country.

\section{Discussion and conclusion}
\label{sec_disc}

In this study, we were interested in the performance of containment regulations that shall prevent the initial spread of a pathogen in a population within a country.
The models 
we considered compromise
between analytic tractability
and representation of reality. We eased epidemic dynamics
with respect to several aspects, which will be discussed in this
section.
\smallskip

A key step of our analysis is to approximate the epidemic process by
an infection process on a kernel graph.
In the kernel model, we assume that the strength of mobility between
cities only depends on the sizes of the cities (and no other parameters, like
geographical proximity or the like). The probability to travel to and to
travel from, resp., a city of size $x$ is proportional to $x^a$ and $x^b$, resp., that
is the parameter $a$ reflects the attractiveness of a cities and the parameter
$b$ the traveling habit of inhabitants depending on the size of the city they
are living in.
We estimated the parameters for France, Poland and Japan.

Interestingly, the parameter combination for France and Japan happen
to be very similar, with city size distribution with a power-law coefficient close to 1.8, an estimated attractiveness
$a \approx 1$ and an estimated emissiveness $b\approx0$. These values of $a$ and $b$ are all the more
surprising that they correspond to a kind of neutral case: All contact pairs
are equally likely, i.e. all individuals have roughly the same likelihood to
travel independent of the size of the city they are living in (i.e. $b = 0$) and
the target city is chosen proportional to its number of citizens ($a = 1$). On
the other hand for Poland, the power-law coefficient is larger (more than
3) meaning that the population is more evenly distributed between
powiats. While the estimated emissiveness coefficient $b$ is still quite close
to 0, the attractiveness coefficient $a$ is close to 2, meaning that attractivity of
powiats is significantly biased towards larger powiats.

In summary, although the choice 
of strategy~\((P)\)
over strategy~\((U)\)
has a relatively minor impact in France,
the impact in Poland is large.
This disparity in Poland is strongly influenced by 
the concentration of mobility towards the largest powiats.
These inferences demonstrate how various the level of city heterogeneity
can be. 
\smallskip

To check the empirical relevance of the kernel model 
that relies solely on the  contribution of population sizes,
without any specific reference to the spatial distribution of cities,
we conducted simulations of epidemics which take pairwise mobility patterns between cities into account. 
We termed 
the corresponding epidemic graphs as transportation graphs (TG).
Beyond others, we calculated the infection probability 
(abbreviated as $\pi$)
and the outbreak probability 
(abbreviated as $\eta$)
 as a function of city size, based on a branching approximation.
We compare these theoretical probabilities to simulated probabilities
that are obtained with averages of infection outcomes 
for 20 to 30 cities of similar size
over many epidemic runs for the transportation graph.
We are led to distinguish many situations,
according to the country of reference,
the regulation strategy (\((P)\) or \((U)\)),
and the rule for adjusting the scaling factor $k_B$.

For France and to a lesser extend for Poland,
we observe a remarkably good fit
of the simulated infection probabilities
by the theoretical values.
By comparing the simulated and theoretical proportions 
of infected cities and people 
for varying reproduction numbers, 
so varying scaling factor $k_B$,
we observe that this fit is robust to the stringency of the regulation,
which makes it a reasonable choice of optimization.
The quality of the fit is particularly noticeable 
given the level of simplifications
and the heterogeneity in the data.
A much poorer fit 
of the TG simulated infection probabilities is observed for Japanese data, which 
indicates that the spatial structure of the country 
is not well-captured by the dependency on city size.

A possible explanation for the good fit could be the centralist transportation structure of France, which diminishes the impact of the geographic distance between cities. In Poland, the transportation network is in the east less developed than in the west and north. This implies that in particular in the eastern part of Poland geographical distances are more relevant. In Japan, the linear configuration  of the country influences strongly the transportation network, even though cities located along the railway of the same Shinkansen line are in terms of accessibility relatively  close to one another. These two effects generate a particular geometric structure
that is likely to be of significant interest for inclusion in the model.
\smallskip

Approximations with branching processes are a key tool in our analysis.
We use them to derive analytic formula for the outbreak and infection  probabilities. Furthermore, we base our definition of a basic reproduction number $R_0$ 
on an approximation with branching processes,
which implicitly assumes an infinite reservoir of cities in which the new generation of infections can emerge.

Even though this approximation is valid only for very few generations
 in our setting, the derived analytical infection probabilities yield (as confirmed by our simulation studies)  good approximations of the corresponding TG-infection probabilities (for French and Polish mobility networks). 
Our comparison of the two strategies are based on the infection probabilities,
making it reasonable to assume that our results on the relative effectiveness of strategy~\((U)\) compared to~\((P)\) are solid, particularly for France and, to a slightly lesser degree, for Poland.

The reason for the good fit is potentially  that the probability for a city to get infected is depending on some random outcomes occurring within a few generations (potentially a few -- backwards in time -- success events characterized by a sufficiently high number of secondary infections).

The formula for the KB infection probability
reflects a structural property of the epidemic graph  that is independent of how the epidemic is initiated,
namely the relative size of the largest forward connected component.
This property presumably carries over
to the TG infection probabilities
when they align with the KB infection probabilities.
\medskip

We evaluated the fit of our model to actual transmission patterns
by exploiting commuting data retrieved from census data. An alternative source of mobility data is given by GPS data obtained from mobile phones. An obstacle for the analysis of this kind of data are privacy restrictions as well as their frequent only commercial availability. 
With this kind of data, one could however investigate the effect of mobility variability in time, e.g. in winter and in summer during holidays.

Considering that many countries attempt to contain epidemic waves during a pandemic, it would be reasonable to investigate our model in more general settings, 
i.e. to consider epidemic scenarios that include e.g. successive waves, maybe in terms of an SIRS-like epidemic between cities,
and/or vaccination.
 Finally, it would be  valuable to explore the optimization of timing and strength of containment strategies by incorporating utility functions which factor in economic costs, health and social burden (for first steps in this direction see \cite{SchaeferEtAl}).

\subsection*{Data availability}

The data we analysed as well as the code we used for simulation is available on forgemia (a platform for data storage at Inrae):
\newline
\href{https://forgemia.inra.fr/aurelien.velleret/simulations_containment_strategies_and_city_size_heterogeneity.git}{\path{https://forgemia.inra.fr/aurelien.velleret/simulations_containment_strategies_and_city_size_heterogeneity.git}}

Additionally some data processing code and intermediate data file are available at: \url{https://github.com/MOCOS-COVID19/pl-mobility-versus-size}.\\
\\

\subsection*{Acknowledgements}

CP and AV acknowledge support from the German Research Foundation through grant
PO-2590/1-1. VB, TK, CP, PS and AV acknowledge support during the JTP 2022 ”Stochastic Modelling
in the Life Science” funded by the Deutsche Forschungsgemeinschaft (DFG, German Research
Foundation) under Germany’s Excellence Strategy – EXC-2047/1 – 390685813. We thank University of Luebeck and Wroclaw University of Science and Technology for financing a visit of TK in Luebeck.

\subsection*{Author contributions: CRediT}

\textbf{Viktor Bezborodov:} Conceptualization, Methodology, Software, Formal analysis, Writing - Review \& Editing.
\textbf{Tyll Krueger:} Conceptualization, Methodology, Writing - Original Draft, Review \& Editing, Visualization.
\textbf{Cornelia Pokalyuk:} Conceptualization, Funding acquisition, Methodology, Writing - Original Draft, Review \& Editing, Visualization.
\textbf{Piotr Szym\'anski:} Software, Formal analysis, Data Curation, Writing - Review \& Editing.
\textbf{Aur\'elien Velleret:} Conceptualization, Methodology, Software, Validation, Formal analysis, Data Curation, Writing - Original Draft, Review \& Editing, Visualization.

\subsection*{Declaration of generative AI and AI-assisted technologies in the writing process.}
During the preparation of this work AV used YIAHO and DeepL Translation to explore alternative formulations that could enhance readability. After using this tool/service, the authors reviewed and edited the content as needed and take full responsibility for the content of the published article.

\bibliographystyle{elsarticle-num-names} 
\bibliography{LS}

\begin{thebibliography}{26}
\expandafter\ifx\csname natexlab\endcsname\relax\def\natexlab#1{#1}\fi
\providecommand{\url}[1]{\texttt{#1}}
\providecommand{\href}[2]{#2}
\providecommand{\path}[1]{#1}
\providecommand{\DOIprefix}{doi:}
\providecommand{\ArXivprefix}{arXiv:}
\providecommand{\URLprefix}{URL: }
\providecommand{\Pubmedprefix}{pmid:}
\providecommand{\doi}[1]{\href{http://dx.doi.org/#1}{\path{#1}}}
\providecommand{\Pubmed}[1]{\href{pmid:#1}{\path{#1}}}
\providecommand{\bibinfo}[2]{#2}
\ifx\xfnm\relax \def\xfnm[#1]{\unskip,\space#1}\fi
%Type = Article
\bibitem[{Maier and Brockmann(2020)}]{MB20}
\bibinfo{author}{B.~Maier}, \bibinfo{author}{D.~Brockmann},
\newblock \bibinfo{title}{Effective containment explains subexponential growth
  in recent confirmed {COVID-19} cases in {China}},
\newblock \bibinfo{journal}{Science} \bibinfo{volume}{368}
  (\bibinfo{year}{2020}). \URLprefix
  \url{https://doi.org/10.1126/science.abb4557}.
%Type = Article
\bibitem[{Wilder-Smith and Freedman(2020)}]{WS20}
\bibinfo{author}{A.~Wilder-Smith}, \bibinfo{author}{D.~Freedman},
\newblock \bibinfo{title}{{Isolation, quarantine, social distancing and
  community containment: pivotal role for old-style public health measures in
  the novel coronavirus (2019-nCoV) outbreak.}},
\newblock \bibinfo{journal}{J Travel Med.} \bibinfo{volume}{27(2)}
  (\bibinfo{year}{2020}). \URLprefix \url{https://doi.org/10.1093/jtm/taaa020}.
%Type = Article
\bibitem[{Jarvis et~al.(2021)Jarvis, Gimma, van Zandvoort, and
  et~al.}]{JarvisEtAl2021}
\bibinfo{author}{C.~I. Jarvis}, \bibinfo{author}{A.~Gimma},
  \bibinfo{author}{K.~van Zandvoort}, \bibinfo{author}{et~al.},
\newblock \bibinfo{title}{The impact of local and national restrictions in
  response to {COVID-19} on social contacts in {England}: a longitudinal
  natural experiment},
\newblock \bibinfo{journal}{BMC Med} \bibinfo{volume}{19}
  (\bibinfo{year}{2021}). \URLprefix
  \url{https://doi.org/10.1186/s12916-021-01924-7}.
%Type = Misc
\bibitem[{{B}undesregierung (Hrsg.)(2020)}]{HotspotGermany}
\bibinfo{author}{{B}undesregierung (Hrsg.)}, \bibinfo{title}{{Konferenz der
  Bundeskanzlerin mit den Regierungschefinnen und Regierungschefs der Länder
  am 14. Oktober. 14. Oktober 2020}},
  \bibinfo{howpublished}{\url{https://www.bundesregierung.de/resource/blob/974430/1798920/907336cb30061987d8d14340778a662e/2020-10-14-beschluss-mpk-data.pdf?download=1}},
  \bibinfo{year}{2020}.
%Type = Article
\bibitem[{Hale et~al.(2021)Hale, Angrist, Goldszmidt, Kira, Petherick,
  Phillips, Webster, Cameron-Blake, Hallas, Majumdar, and Tatlow}]{HaleEtAl21}
\bibinfo{author}{T.~Hale}, \bibinfo{author}{N.~Angrist},
  \bibinfo{author}{R.~Goldszmidt}, \bibinfo{author}{B.~Kira},
  \bibinfo{author}{A.~Petherick}, \bibinfo{author}{T.~Phillips},
  \bibinfo{author}{S.~Webster}, \bibinfo{author}{E.~Cameron-Blake},
  \bibinfo{author}{L.~Hallas}, \bibinfo{author}{S.~Majumdar},
  \bibinfo{author}{H.~Tatlow},
\newblock \bibinfo{title}{A global panel database of pandemic policies {(Oxford
  COVID-19 Government Response Tracker)}},
\newblock \bibinfo{journal}{Nat Hum Behav} \bibinfo{volume}{5}
  (\bibinfo{year}{2021}) \bibinfo{pages}{529--538}. \URLprefix
  \url{https://doi.org/10.1038/s41562-021-01079-8}.
%Type = Article
\bibitem[{Stier et~al.(2020)Stier, Berman, and
  Bettencourt}]{Stier2020COVID19AR}
\bibinfo{author}{A.~J. Stier}, \bibinfo{author}{M.~G. Berman},
  \bibinfo{author}{L.~M.~A. Bettencourt},
\newblock \bibinfo{title}{{COVID-19 attack rate increases with city size}},
\newblock \bibinfo{journal}{Health Economics eJournal}  (\bibinfo{year}{2020}).
  \URLprefix \url{https://doi.org/10.1101/2020.03.22.20041004}.
%Type = Article
\bibitem[{Giménez-Nadal et~al.(2022)Giménez-Nadal, Molina, and
  Velilla}]{NadalEtAl2022}
\bibinfo{author}{J.~Giménez-Nadal}, \bibinfo{author}{J.~Molina},
  \bibinfo{author}{J.~Velilla},
\newblock \bibinfo{title}{Trends in commuting time of {European} workers: A
  cross-country analysis},
\newblock \bibinfo{journal}{Transport Policy} \bibinfo{volume}{116}
  (\bibinfo{year}{2022}) \bibinfo{pages}{327--342}. \URLprefix
  \url{https://doi.org/10.1016/j.tranpol.2021.12.016}.
%Type = Misc
\bibitem[{Destatis(2022)}]{CommuterDistance}
\bibinfo{author}{Destatis},
  \bibinfo{howpublished}{\url{https://www.destatis.de/EN/Themes/Labour/Labour-Market/Employment/Tables/commuter-1.html}},
  \bibinfo{year}{2022}. \bibinfo{note}{Data from Microcensus 2020 in Germany
  about commuters}.
%Type = Article
\bibitem[{Cao and Olvera-Cravioto(2020)}]{CO20}
\bibinfo{author}{J.~Cao}, \bibinfo{author}{M.~Olvera-Cravioto},
\newblock \bibinfo{title}{Connectivity of a general class of inhomogeneous
  random digraphs},
\newblock \bibinfo{journal}{Random Structures \& Algorithms}
  \bibinfo{volume}{56} (\bibinfo{year}{2020}) \bibinfo{pages}{722--774}.
  \URLprefix \url{https://doi.org/10.1002/rsa.20892}.
%Type = Article
\bibitem[{Bollobas et~al.(2007)Bollobas, Janson, and Riordan}]{BJR06}
\bibinfo{author}{B.~Bollobas}, \bibinfo{author}{S.~Janson},
  \bibinfo{author}{O.~Riordan},
\newblock \bibinfo{title}{The phase transition in inhomogeneous random graphs},
\newblock \bibinfo{journal}{Random Structures \& Algorithms}
  \bibinfo{volume}{31} (\bibinfo{year}{2007}) \bibinfo{pages}{3--122}.
  \URLprefix \url{https://doi.org/10.1002/rsa.20168}.
%Type = Article
\bibitem[{Kesten and P.~Stigum(1966)}]{KS66}
\bibinfo{author}{H.~Kesten}, \bibinfo{author}{B.~P.~Stigum},
\newblock \bibinfo{title}{{A Limit Theorem for Multidimensional Galton-Watson
  Processes}},
\newblock \bibinfo{journal}{Ann. Mathem. Stat.} \bibinfo{volume}{37}
  (\bibinfo{year}{1966}) \bibinfo{pages}{1211--1223}. \URLprefix
  \url{https://doi.org/10.1214/aoms/1177699266}.
%Type = Misc
\bibitem[{INSEE(2021)}]{sourceFrance2}
\bibinfo{author}{INSEE}, \bibinfo{title}{Base des aires d'attraction des villes
  2020},
  \bibinfo{howpublished}{\url{https://www.insee.fr/fr/information/4803954}},
  \bibinfo{year}{2021}. \bibinfo{note}{Data on areas of attractions in France
  issued from the National Institute of Statistics and Economic Studies
  (INSEE), accessed: 2024-07-01}.
%Type = Misc
\bibitem[{INSEE(2020)}]{sourceFrance}
\bibinfo{author}{INSEE}, \bibinfo{title}{Mobilit{\'e}s professionnelles en 2017
  : d{\'e}placements domicile - lieu de travail. recensement de la population -
  base flux de mobilit{\'e}.},
  \bibinfo{howpublished}{\url{https://www.insee.fr/fr/statistiques/4509353}},
  \bibinfo{year}{2020}. \bibinfo{note}{Data on work-related mobility in France
  issued from the National Institute of Statistics and Economic Studies
  (INSEE), accessed: 2024-07-01}.
%Type = Misc
\bibitem[{Beck et~al.(2023)Beck, Pégaz-Blanc, and Khamallah}]{frenchAoA}
\bibinfo{author}{S.~Beck}, \bibinfo{author}{O.~Pégaz-Blanc},
  \bibinfo{author}{A.~Khamallah}, \bibinfo{title}{La rétropolation en 2010 du
  zonage en aires d’attraction des villes de 2020},
  \bibinfo{howpublished}{\url{https://www.insee.fr/fr/statistiques/7615286}},
  \bibinfo{year}{2023}. \bibinfo{note}{Data on areas of attractions in France
  issued from the National Institute of Statistics and Economic Studies
  (INSEE), accessed: 2024-07-01}.
%Type = Misc
\bibitem[{Filas-Przyby{\l} and Stachowiak(2019)}]{PLMobility}
\bibinfo{author}{S.~Filas-Przyby{\l}}, \bibinfo{author}{D.~Stachowiak},
  \bibinfo{title}{Przep{\l}ywy ludno{\'s}ci zwi{\c a}zane z zatrudnieniem w
  2016 r},
  \bibinfo{howpublished}{\href{https://stat.gov.pl/obszary-tematyczne/rynek-pracy/opracowania/przeplywy-ludnosci-zwiazane-z-zatrudnieniem-w-2016-r-,20,1.html}{\path{https://stat.gov.pl/obszary-tematyczne/rynek-pracy/opracowania/przeplywy-ludnosci-zwiazane-z-zatrudnieniem-w-2016-r-,20,1.html}}},
  \bibinfo{year}{2019}. \bibinfo{note}{Data on work-related mobility in Poland,
  accessed: 2024-07-01}.
%Type = Misc
\bibitem[{MLIT(2015{\natexlab{a}})}]{sourceJapan}
\bibinfo{author}{MLIT}, \bibinfo{title}{{Inter-Regional Travel Survey in Japan
  (2015)}},
  \bibinfo{howpublished}{\url{https://www.mlit.go.jp/common/001005633.pdf}},
  \bibinfo{year}{2015}{\natexlab{a}}. \bibinfo{note}{Data on mobility in Japan
  issued from the Ministry of Land, Infrastructure, Transport and Tourism
  (MLIT), accessed: 2024-07-01}.
%Type = Misc
\bibitem[{MLIT(2015{\natexlab{b}})}]{sourceJapan2}
\bibinfo{author}{MLIT}, \bibinfo{title}{{Inter-Regional Travel Survey in Japan:
  O-D aggregate tables (2015)}},
  \bibinfo{howpublished}{\url{https://www.mlit.go.jp/sogoseisaku/soukou/sogoseisaku_soukou_fr_000018.html}},
  \bibinfo{year}{2015}{\natexlab{b}}. \bibinfo{note}{Data on mobility in Japan
  issued from the Ministry of Land, Infrastructure, Transport and Tourism
  (MLIT), accessed: 2024-07-01}.
%Type = Misc
\bibitem[{Dijkstra et~al.(2019)Dijkstra, Poelman, and Veneri}]{oecdFUA}
\bibinfo{author}{L.~Dijkstra}, \bibinfo{author}{H.~Poelman},
  \bibinfo{author}{P.~Veneri}, \bibinfo{title}{The {EU-OECD} definition of a
  functional urban area},
  \bibinfo{howpublished}{\url{https://www.oecd-ilibrary.org/content/paper/d58cb34d-en}},
  \bibinfo{year}{2019}. \DOIprefix\doi{https://doi.org/10.1787/d58cb34d-en}.
%Type = Article
\bibitem[{Alstott et~al.(2014)Alstott, Bullmore, and Plenz}]{ABP14}
\bibinfo{author}{J.~Alstott}, \bibinfo{author}{E.~Bullmore},
  \bibinfo{author}{D.~Plenz},
\newblock \bibinfo{title}{powerlaw: A python package for analysis of
  heavy-tailed distributions},
\newblock \bibinfo{journal}{PLOS ONE} \bibinfo{volume}{9}
  (\bibinfo{year}{2014}) \bibinfo{pages}{1--11}. \URLprefix
  \url{https://doi.org/10.1371/journal.pone.0085777}.
%Type = Misc
\bibitem[{Hotz et~al.(2020)Hotz, Glock, Heyder, Semper, Böhle, and
  Krämer}]{HotzEtAl20}
\bibinfo{author}{T.~Hotz}, \bibinfo{author}{M.~Glock},
  \bibinfo{author}{S.~Heyder}, \bibinfo{author}{S.~Semper},
  \bibinfo{author}{A.~Böhle}, \bibinfo{author}{A.~Krämer},
  \bibinfo{title}{Monitoring the spread of {COVID-19} by estimating
  reproduction numbers over time}, \bibinfo{year}{2020}. \URLprefix
  \url{https://arxiv.org/abs/2004.08557}.
  \href{http://arxiv.org/abs/2004.08557}{{\tt arXiv:2004.08557}}.
%Type = Article
\bibitem[{Hoskins et~al.(2022)Hoskins, Beale, Nguyen, Boukari, Yavlinsky,
  Kovar, Byrne, Fragaszy, Fong, and Geismar et al.~with Virus
  Watch~Collaborative}]{HoskinsEtAl}
\bibinfo{author}{S.~Hoskins}, \bibinfo{author}{S.~Beale},
  \bibinfo{author}{V.~Nguyen}, \bibinfo{author}{Y.~Boukari},
  \bibinfo{author}{A.~Yavlinsky}, \bibinfo{author}{J.~Kovar},
  \bibinfo{author}{T.~Byrne}, \bibinfo{author}{E.~Fragaszy},
  \bibinfo{author}{W.~Fong}, \bibinfo{author}{C.~Geismar et al.~with Virus
  Watch~Collaborative},
\newblock \bibinfo{title}{Relative contribution of essential and non-essential
  activities to {SARS-CoV-2} transmission following the lifting of public
  health restrictions in england and wales},
\newblock \bibinfo{journal}{Epidemiol Infect} \bibinfo{volume}{151:e3}
  (\bibinfo{year}{2022}). \URLprefix
  \url{https://doi.org/10.1017/S0950268822001832}.
%Type = Misc
\bibitem[{Destatis(2024)}]{Erwerbstaetigkeit}
\bibinfo{author}{Destatis}, \bibinfo{title}{Erwerbstaetigkeit},
  \bibinfo{howpublished}{\url{https://www.destatis.de/DE/Themen/Arbeit/Arbeitsmarkt/Erwerbstaetigkeit/_inhalt.html}},
  \bibinfo{year}{2024}. \bibinfo{note}{Enployment data in Germany issued from
  the Federal Statistical Office of Germany (Destatis, i.e. Statistisches
  Bundesamt)}.
%Type = Article
\bibitem[{Gornyk et~al.(2021)Gornyk, Harries, Glöckner, Strengert, Kerrinnes,
  Bojara, Castell, Frank, Gubbe, and Heise~et al.}]{GEtAl}
\bibinfo{author}{D.~Gornyk}, \bibinfo{author}{M.~Harries},
  \bibinfo{author}{S.~Glöckner}, \bibinfo{author}{M.~Strengert},
  \bibinfo{author}{T.~Kerrinnes}, \bibinfo{author}{G.~Bojara},
  \bibinfo{author}{S.~Castell}, \bibinfo{author}{K.~Frank},
  \bibinfo{author}{K.~Gubbe}, \bibinfo{author}{J.-K. Heise~et al.},
\newblock \bibinfo{title}{{SARS-CoV-2 Seroprevalence in Germany - A Population
  Based Sequential Study in Five Regions.}},
\newblock \bibinfo{journal}{Dtsch Arztebl Int.} \bibinfo{volume}{118(48)}
  (\bibinfo{year}{2021}) \bibinfo{pages}{824--831}. \URLprefix
  \url{https://doi.org/10.3238/arztebl.m2021.0364}.
%Type = Misc
\bibitem[{Schäfer et~al.(2023)Schäfer, Götz, Niedzielewski, and
  Krüger}]{SchaeferEtAl}
\bibinfo{author}{M.~Schäfer}, \bibinfo{author}{T.~Götz},
  \bibinfo{author}{K.~Niedzielewski}, \bibinfo{author}{T.~Krüger},
  \bibinfo{title}{An integro-differential model for the spread of diseases},
  \bibinfo{year}{2023}. \URLprefix \url{https://arxiv.org/abs/2307.10087}.
  \href{http://arxiv.org/abs/2307.10087}{{\tt arXiv:2307.10087}},
  \bibinfo{note}{preprint}.
%Type = Book
\bibitem[{Newman(2010)}]{N10}
\bibinfo{author}{M.~Newman}, \bibinfo{title}{{Networks : An Introduction}},
  \bibinfo{publisher}{Oxford University Press}, \bibinfo{year}{2010}.
  \URLprefix \url{https://doi.org/10.1093/acprof:oso/9780199206650.001.0001}.
%Type = Article
\bibitem[{Bonacich(1987)}]{Bo87}
\bibinfo{author}{P.~F. Bonacich},
\newblock \bibinfo{title}{Power and centrality: A family of measures},
\newblock \bibinfo{journal}{Am. J. Sociol.} \bibinfo{volume}{92}
  (\bibinfo{year}{1987}) \bibinfo{pages}{1170--1182}. \URLprefix
  \url{https://www.jstor.org/stable/2780000}.

\end{thebibliography}

\renewcommand{\thesection}{\Alph{section}}
\setcounter{section}{18}

\renewcommand{\thefigure}{S\arabic{figure}}

\pagenumbering{roman}

\section{Supplementary Material}
\label{sec_SM}

In the Supplementary Material, we provide additional 
figures, analyses and insights   that strengthen the basis of the main study's findings.

In Section~\ref{sec_exp_form_prob}, we explain why we chose to express the infection probability in \eqref{pM_ij} with an exponential form, highlighting the theoretical reasons  behind this  choice.
In Section~\ref{sec_ineq}, we present the proof of Proposition~\ref{ineq}, which compares the efficiency of the two strategies in the case of  rank-one kernels.
In Section~\ref{sec_Spec_An}, we conduct the spectral analysis of the transfer operator, which is further divided into two parts. 
Beforehand, Section~\ref{sec_Prames} recalls the branching approximation,
stated more formally in terms of Poisson random measures.
In  Section~\ref{sec_r0_spec}, we introduce the transfer operator
and examine its relationship with the reproduction number (R0). In Section~\ref{sec_long_time_T}, we extend our study of the spectral properties associated with the transfer operator.
In Section~\ref{sec_eig_cent}, we discuss the topic of eigenvector centrality. The theoretical basis of this notion is introduced in Section~\ref{EigC},
while we detail in Section~\ref{sec_est_eigC}  the methods employed for its estimation with the data from France, Poland and Japan.
Finally in Section~\ref{sec_factors}, we investigate additional influential factors affecting the outcomes of numerical evaluation.

%\subsection{Detailed mathematical description}
\subsection{Justification for the exponential form of the edge probability}
\label{sec_exp_form_prob}

We provide in this Section~\ref{sec_exp_form_prob} some motivations behind 
our choice of formula \eqref{pM_ij}
to define edge probabilities,
as an approximation based on large city sizes.

Let us first note an advantage of this exponential form 
in that we can interpret $p_{ij}$
as a failure probability 
of two independent events, 
occurring with probability $p^O_{ij}$
for infections from the outside,
and with probability $p^I_{ij}$
for infections from the inside,
where:
\begin{equation*}
	p^O_{ij} = 1- \exp\left[-k_C\, M_{ij} \frac{L(x_i)}{x_i}\right]\,,
	\quad p^I_{ij} = 1- \exp\left[-k_C\,M_{ji} \frac{L(x_i)}{x_i}\right]\,.
\end{equation*}
Indeed: $1- p_{ij} = (1-p^O_{ij})\cdot (1-p^I_{ij})$.

Formula \eqref{pM_ij} can actually be interpreted as an approximation motivated by 
the large number of involved visits,
as in the following illustrating model.
Say that $M_{ij}/(x_i x_j)$ 
is the rate of visits of individual $m$
by individual $\ell$,
for any pair $(\ell, m)$
such that $\ell$ is a citizen of city $i$
and $m$ a citizen of city $j$.
$L(x_i)/x_i$
corresponds to the probability that such an individual $\ell$ has been infected before the lockdown.
The effectiveness at which an infected individual transmits the disease during a visit
leading to a large outbreak in city $j$
and the duration of its infectious period is captured by the term $k_C$.
A natural proxy for the probability 
that an outbreak is generated in city $j$
due to a visit of $m$ by $\ell$
is then given by:
\begin{equation*}
	p^v_{ij} = k_C
    \cdot \frac{M_{ij}}{x_i x_j}  
  \cdot   \frac{L(x_i)}{x_i}.
\end{equation*}
Assuming these events to be independent for the pairs $(\ell, m)$
leads to a global probability
for an infection from the outside
of 
\begin{equation}\label{expapprox}
	1 - \prod_{(\ell, m) \in C_{ij}} (1-p^v_{ij}) 
	= 1- (1-p^v_{ij}) ^{x_i\cdot x_j},
\end{equation}
where $C_{ij}$ denotes the set of pairs $(\ell, m)$ (which has cardinality $x_i\cdot x_j$).
\eqref{expapprox} is approximated by $p^O_{ij}$
due to  $x_i\cdot x_j$ being large.
A similar reasoning leads to the probability $p_{ij}^I$.
\medskip

It may be argued that 
considering $L(x_i)/x_i$
in the above reasoning 
as the probability that such an individual 
$\ell$ has been infected
is somewhat misleading as not all infected individuals have the same potential for further infections. The individuals that got infected just before the lockdown 
are prevented for further spread. 
Nonetheless, we can justify as follows that considering a proportionality between the number of infectious contacts and $L(x_i)$ remains reasonable. We simply assume that the epidemic grows exponentially within the cities, with the growth rate $\rho$ being independent of the specific city, as well as the average infectivity profiles of the individuals.

Let us denote by $T_i$ the time at which isolation of city $i$  is enforced,
and by $\mathcal F_i(t)$
the rate at which new infections occur
at time $t$ in city $i$.
Under this assumption, 
$\mathcal F_i(T_i - t) \approx \mathcal F_i(T_i) e^{-\rho t}$
should hold provided $\mathcal F_i(T_i - t)$ is not too small, which leads to the following estimate:
\begin{equation}
    L(x_i) \approx \int_0^\infty \mathcal F_i(T_i) e^{-\rho t} dt = \rho^{-1} \cdot \mathcal F_i(T_i).
    \label{eq_cFi}
\end{equation}

Let us denote by $\mathcal A_i$
the value obtained 
by integrating the average infectivity 
of these infected citizens of $i$ up to time $T_i$.
The probability that a contact between an individual $\ell$ of city $i$ and an individual $m$ of city $j$
is actually an infectious contact 
is then assumed to be proportional to $\mathcal A_i/x_i$, 
which is presumably more accurate than the previous estimate based on $L(x_i)/x_i$.

If we would assume that infectious individuals have a fixed infectivity rate,
the average infectivity would be proportional to the incidence. 
Then, $\mathcal A_i$ would be obtained up to a given factor by integrating the incidence up to time $T_i$.

We may generally assume that the average infectivity 
is prescribed by a function $\lambda$
that only depends on the duration since the infection of the individual,
in other words of the infection-age.
The previous assumption then corresponds to the case 
where $\lambda(u)$ would be proportional to the probability that the individual is still infectious 
after a duration $u$ since its infection.

At time $T_i -s$,
individuals that were infected at time $t$
have an average contribution of $\lambda(t-s)$ to $\mathcal A_i$.
Thus:
\begin{equation}
    \mathcal A_i \approx \int_0^\infty 
   \mathcal F_i(T_i-t)    \int_0^t
   \lambda(t-s) ds\, dt
 =\widetilde \lambda\cdot \mathcal F_i(T_i),
\label{eq_Ai}
\end{equation}
where the factor $\widetilde \lambda$
that integrates the infectivity effect
is independent of $i$:
\begin{equation*}
 \widetilde \lambda = \int_0^\infty e^{-\rho t}
 \int_0^t
   \lambda(u) du\, dt
 = \rho^{-1} \int_0^\infty \lambda(u) e^{-\rho u} du.
\end{equation*}
Combining \eqref{eq_cFi} and \eqref{eq_Ai}
leads to a factor $\rho \cdot \widetilde \lambda$ between $\mathcal A_i$ and $L(x_i)$.
It can be incorporated into $k_C$
to correct for the prevented infections.

Deviations from this proportionality 
could naturally arise due to temporal variations in transmission rates or heterogeneity factors specific to each city. However,  assuming a proportionality relation to $L(x_i)$
serves as a practical but also reasonable proxy.

\subsection{Proof of Proposition \ref{ineq}
	and comparison of strategy efficiency under rank-one kernels}
\label{sec_ineq}

Recall our claim in Proposition \ref{ineq}-$ii)$
that $Q_U > Q_P$ 
when $a= 1+b<1$
and the values of $L_\vee$ and $p_\vee$
are set according to a prescribed value $I_\star$
common to $I_U$ and $I_P$.

Instead of prescribing the values for $I_U$ and $I_P$,
it is more convenient to 
prescribe the same value $Q_\star$
for the numbers $Q_U$ and $Q_P$ of people eventually under isolation.
Since $I_U, I_P, Q_U$ and $Q_P$ are increasing functions of the threshold values,
proving Proposition \ref{ineq} is equivalent to showing
that whatever $Q_\star$, it holds:
\begin{enumerate}
	\item[i)] $I_U = I_P$ if $a = 1$,
	\item[ii)] $I_U < I_P$ if $a < 1$,
	\item[iii)]  $I_U > I_P$ if $a > 1$.
\end{enumerate} 

In the rank-one kernel situation,
\eqref{piDef}  simplifies into:
\begin{align*}
	\pi(x) 
	= 1- \exp\Big[  - 2 K_{O, \lA}(x) \int_0^\infty \pi(y) \nu_{O,\lA}(dy)\Big].
\end{align*}
Hence, 
we deduce from  \eqref{KCD} that $\pi_U$ and $\pi_P$
can be expressed in the form 
\begin{align*}
	\pi_\gamma(x) := 1- \exp(-\gamma \cdot x^a),
\end{align*}
that is $\pi_U = \pi_{\gamma_U}$ with
\begin{equation}
	\begin{split}
		\gamma_U 
		&:= 2 k_B \int_0^\infty L_\vee\cdot y^{a-1}\cdot \pi_U(y) \beta(dy)
		\\&= 2 k_B \int_0^\infty L_\vee\cdot y^{a-1}\cdot \pi_{\gamma_U}(y) \beta(dy),
	\end{split}
	\label{gammaU}
\end{equation}
and similarly, 
$\pi_P = \pi_{\gamma_P}$
with 
\begin{equation}
	\begin{split}
		\gamma_P 
		&:= 2 k_B \int_0^\infty p_\vee\cdot y^{a}\cdot \pi_P(y) \beta(dy)
		\\&= 2 k_B \int_0^\infty p_\vee\cdot y^{a}\cdot \pi_{\gamma_P}(y) \beta(dy).
	\end{split}
	\label{gammaP}
\end{equation}

Since the function $:\gamma\mapsto \pi_{\gamma}$
is increasing, by \eqref{QDef} $Q_U = Q _P= Q_\star$
implies $ \gamma_U = \gamma_P=\gamma_\star$ for some constant $\gamma_\star>0$.
Furthermore,
by exploiting \eqref{gammaU} and \eqref{gammaP},
we can express $L_\vee$ and $p_\vee$ in terms of $\gamma_U$ and $\gamma_P$ resp. 
We obtain the ratio $I_U/I_P$ as a function of $\gamma_\star$, $\beta$ and $a$.

Let us first consider the case where $a\ge 1$.
For ease of notations, 
we denote for any $\zeta>0$: 
\begin{align*}
	\cZ^\star_\zeta 
	:= \int_0^\infty x^{\zeta}\cdot \pi_{\gamma_\star}(x) \beta(dx).
\end{align*}
By combining \eqref{gammaU} and \eqref{gammaP} we then deduce 
$L_\vee \cZ^\star_{a-1} = p_\vee \cZ^\star_{a}$.
Then, the ratio $I_U/I_P$ is expressed as follows:
\begin{align*}
	\dfrac{I_U}{I_P}
	=\frac{L_\vee \cZ^\star_0 }{p_\vee \cZ^\star_1 }
	= \dfrac{\cZ^\star_0}{\cZ^\star_{a-1}}\cdot \dfrac{\cZ^\star_a}{\cZ^\star_1}.
\end{align*} 
When $a= 1$, we directly obtain that $I_U = I_P$.

Otherwise, note that $0< 1\wedge (a-1)<1\vee (a-1)< a$.
This ratio is compared to 1 thanks to Hölder's inequality:
\begin{align*}
	&\Big( \dfrac{\cZ^\star_{a-1}}{\cZ^\star_0}\Big) ^{1/(a-1)}
	> \Big( \dfrac{\cZ^\star_{a}}{\cZ^\star_0}\Big) ^{1/a}\cdot 1,
	\quad \Big( \dfrac{\cZ^\star_{1}}{\cZ^\star_0}\Big)
	> \Big( \dfrac{\cZ^\star_{a}}{\cZ^\star_0}\Big) ^{1/a},
\end{align*}
which entails
\begin{align*}
	\dfrac{\cZ^\star_{a-1}\cdot \cZ^\star_{1}}{(\cZ^\star_0)^2}
	> \Big( \dfrac{\cZ^\star_{a}}{\cZ^\star_0}\Big) ^{\frac{a-1}{a} + \frac{1}{a}}
	= \dfrac{\cZ^\star_{a}}{\cZ^\star_0}.
\end{align*}
This inequality is equivalent to  
$I_P>I_U$, which concludes the proof in the case where $a>1$.

When $a<1$, we look for the shifted moments:
\begin{align*}
	\cZ^\ltimes_\zeta 
	:= \int_0^\infty x^{\zeta}\cdot x^{a-1} \pi_{\gamma_\star}(x) \beta(dx).	
\end{align*}
Then, the ratio  $I_U/I_P$  is expressed as follows:
\begin{align*}
	\dfrac{I_U}{I_P}
	= \dfrac{\cZ^\ltimes_{1-a}}{\cZ^\ltimes_{0}}\cdot \dfrac{\cZ^\ltimes_1}{\cZ^\ltimes_{2-a}}.
\end{align*} 
Note that $0< 1\wedge (1-a)<1\vee (1-a)< 2-a$.
Thanks to Hölder's inequality, with a similar reasoning as before,
we deduce this time that $I_P<I_U$ whatever $Q_\star$.
This concludes the proof of Proposition \ref{ineq}.

\subsection{Proof of Proposition \ref{p_Comp3}}
\label{sec_pr_adj_R0}

Recall that Proposition \ref{p_Comp3}
involves the following 
quantities in terms of the two parameters $a, b\in \bR$
and of the probability measure $\beta$ on $\bR_+$.
$\cZ_\gamma := \int y^\gamma \beta(dy)$ for $\gamma>0$,
$r^U_0 =  \cZ_{a+b} + \sqrt{\cZ_{2a-1} \cZ_{2b+1} }$
and $r^P_0 
=  \cZ_{1+a+b} + \sqrt{\cZ_{2a} \cZ_{2+2b}}$ (as stated in \eqref{iniquity sim wickedness} and \eqref{dry run sim rehearsal}).
These notations were motivated by the observations that $R^{(U)}_0 = k_B \, L_\vee\, r^U_0$
(resp. $R^{(P)}_0 = k_B \, p_\vee\, r^P_0$)
corresponds 
to the basic reproduction number under strategy~\((U)\) (resp. under strategy~\((P)\)).
Moreover, we have assumed that $\cZ_{\bar\gamma}<\infty$,
where 
$\bar \gamma := (2 a)\vee (2+2b)$,
and that $\delta \le (2 a-1) \wedge (2 b+1)$.

Independently of the value $R_\star$ to which we set both $R^{(U)}_0$
and $R^{(P)}_0$,
we first observe that 
$p_\vee \le L_\vee \cdot \cZ_\delta / \cZ_{\delta +1}$
(which we generally want to prove)
is actually equivalent to $\cZ_{\delta+1}\cdot  r^U_0 \le \cZ_\delta\cdot r^P_0$.
To deduce the later inequality,
it is enough to show that:
\begin{equation} \label{morass2}
	\begin{split}
		\cZ_{a+b} \cZ_{\delta+1} &  \leq  \cZ_\delta \cZ_{a+b+1}, 
		\\
		\cZ_{2a-1} \cZ_{2b+1}  \Big( \cZ_{\delta+1}\Big)^2 & \leq \Big( \cZ_{\delta}\Big)^2 \cZ_{2a} \cZ_{2b+2}.
	\end{split}
\end{equation}

The inequalities \eqref{morass2} follow from H\"older's inequality.
In general 
for $\gamma>~\delta$, choosing  $q=\gamma-\delta+1$,
$p=\frac{q}{q-1}$,
we note:
\begin{align*} 1/p+1/q=1, \quad x^{\delta+1} &= x^{\delta/p}\cdot x^{(\gamma+1)/q},\quad 
	\\ x^\gamma &= x^{(\gamma+1)/p}\cdot x^{\delta/q}
\end{align*}
which entails
\[
\cZ_{\delta+1}  \leq \Big( \cZ_\delta\Big) ^{\frac{1}{p}}\cdot \Big( \cZ_{\gamma+1}\Big) ^{\frac {1}{q}},
\quad 
\cZ_{\gamma}  \leq \Big( \cZ_{\gamma+1}\Big)^{\frac{1}{p}}
\cdot \Big( \cZ_\delta \Big)^{\frac {1}{q}} .
\]
The inequalities \eqref{morass2} are deduced 
from the particular cases of $\gamma$ in $\{a+b,\ 2a-1, \ 2b+1\}$. In these three cases,
$\delta < (2 a-1) \wedge (2 b+1)$ entails $\gamma> \delta$.

In the case where $a = 1+b$, 
$r^U_0 = \cZ_{2a-1}= \cZ_\delta$
while $r^P_0 = \cZ_{2a}= \cZ_{\delta+1}$,
which entails Eq. \eqref{AdjustmendA=1+B}.

This ends the proof of  Proposition \ref{p_Comp3}.

\subsection{Spectral analysis}
\label{sec_Spec_An}

\subsubsection{Construction of the branching approximation in terms of Poisson random measures}
\label{sec_Prames}

We recall the construction of the forward and backward 
branching approximation described in Section~\ref{sec_fwd_br},
while expressing it in the more general terms of a convergence of point measures.
This section is intended for researchers who are well versed in probability and look for a more formal derivation. It 
may be skipped by recalling the more accessible description of Section~\ref{sec_fwd_br}.

We write $i\xrightarrow{O} j$ if there is an edge between $i$ and $j$ corresponding to an infection of $j$ from the outside generated by a visitor from city $i$,
and by $i\xrightarrow{I} j$   if there is an edge between $i$ and $j$ corresponding to an infection of $j$ from the inside initiated by a visit in city $i$. 
We consider the following random point process on $\{O, I\}\times \bR_+$:
\begin{equation*}
\xi^{N'}_{i,\rA}
:= \sum_{i\xrightarrow{O} j} \delta_{(O, x_j)}
+ \sum_{i\xrightarrow{I} j} \delta_{(I, x_j)}.
\end{equation*}

The independence properties and  \eqref{pVU} yield that 
$\xi^{N'}_{i,\rA}$
converges weakly in law as $N'$ tends to infinity
to a Poisson  random measure (a PRaMe) $M_{i, \rA}$ 
on the state space $\{O, I\}\times \bR_+$
with intensity $K_{O, \rA}(x_i) \cdot \delta_O \otimes  \nu_{O, \rA} 
+ K_{I, \rA}(x_i) \cdot \delta_{I}\otimes  \nu_{I, \rA}$.
The two functions $K_{O, \rA}$, $K_{I,\rA}$ and probability measures $\nu_{O, \rA}, \nu_{I, \rA}$
 are given in \eqref{Knu}.

By extension,
each generation of the branching process  {\bf $\text{V}^\infty$} 
is composed of elements of $\{O, I\}\times \bR_+$.
It is produced 
in terms of the previous generation
by means of PRaMes 
 with intensity measures $K_{O, \rA} \cdot \delta_{O}\otimes \nu_{O, \rA} + K_{I, \rA}\cdot \delta_{I}\otimes \nu_{I, \rA}$
 that are drawn independently for each infected city,
 as a function of its city size (whether the infector is in state $O$ or $I$ do not matter for the next generation).

As argued in Proposition~\ref{V2}, the projection of the process  {\bf $\text{V}^\infty$} 
to the state space $\{O, I\}$ yields a process  {\bf $\text{V}^2$}  that is itself a branching process. 
The next generation is produced 
by means of PRaMes that are drawn independently for each infected city,
 with intensity measures 
 being either $\int_0^\infty K_{O, \rA}(x) \nu_{O, \rA}(dx)
 \cdot \delta_{O} + \int_0^\infty K_{I, \rA}(x) \nu_{O, \rA}(dx)\cdot \delta_{I}$
 for an infector of type~$O$
 or $\int_0^\infty K_{O, \rA}(x) \nu_{I, \rA}(dx)
 \cdot \delta_{O} + \int_0^\infty K_{I, \rA}(x) \nu_{I, \rA}(dx)\cdot \delta_{I}$
 for an infector of type~$I$.
 Proposition~\ref{V2} can be deduced as a consequence of the decomposition properties of such PRaMes.
 \smallskip

The backward in time branching process is justified similarly.
For a target city $j$, we say that $j'$ is an infector of $j$ from the outside (resp. from the inside) if $j' \xrightarrow{O} j$ (resp. if $j\xrightarrow{I} j') $. Then the random variable
\begin{align*}
\xi^{N'}_{j,\lA}
:= \sum_{j'\xrightarrow{O} j} \delta_{(O, x_{j'})}
+ \sum_{j'\xrightarrow{I} j} \delta_{(I, x_{j'})}.
\end{align*}
gives the possible infectors of $j$, where $O$ and $I$, resp., indicate if the infecting city introduced the infection from the outside or from the inside and $x_{j'}$ denotes the size of the infecting city.
In analogy to the process $\xi^{N'}_{j,\rA}$  and thanks to \eqref{pVU} and to the independence properties   
in the large population limit, $\xi^{N'}_{j,\lA}$
converges weakly to a PRaMe $M_{j, \lA}$ on the state space $\{O,I\} \times \mathbbm{R}_+$ 
with intensity $K_{O, \lA}(x_j) \cdot \delta_{O} \otimes  \nu_{O, \lA} 
+ K_{I, \lA}(x_j) \cdot \delta_I \otimes  \nu_{I, \lA}$.
The two functions $K_{O, \lA}$, $K_{I,\lA}$ and probability measures $\nu_{O, \lA}, \nu_{I, \lA}$
 are given in \eqref{KCD}.

By extension,
each generation of the branching process  {\bf $\Lambda^\infty$}  is produced 
in terms of the previous generation
by means of PRaMes 
 with intensity measures $K_{O, \lA} \cdot \delta_{O} \otimes  \nu_{O, \lA} 
+ K_{I, \lA} \cdot \delta_I \otimes  \nu_{I, \lA}$
 that are drawn independently for each city,
 as a function of its city size.

 As argued in Proposition~\ref{L2}, the projection of the process  {\bf $\Lambda^\infty$} 
to the state space $\{O, I\}$ yields a process  {\bf $\Lambda^2$}  that is itself a branching process. 
The next generation is produced 
by means of PRaMes that are drawn independently for each infected city,
 with intensity measures 
 being either $\int_0^\infty K_{O, \lA}(x) \nu_{O, \lA}(dx)
 \cdot \delta_{O} + \int_0^\infty K_{I, \lA}(x) \nu_{O, \lA}(dx)\cdot \delta_{I}$
 for a city of type~$O$
 or $\int_0^\infty K_{O, \lA}(x) \nu_{I, \lA}(dx)
 \cdot \delta_{O} + \int_0^\infty K_{I, \lA}(x) \nu_{I, \lA}(dx)\cdot \delta_{I}$
 for a city of type~$I$.

\subsubsection{Transfer operator and reproduction number}
\label{sec_r0_spec}
$R_0$ as defined in \eqref{RoDef}
is more directly seen as the principal eigenvalue 
of an  integral operator $\cT$
which is defined as follows.
$\cT$ is directly
associated to the intensity of the PRaMes $M_{i, \rA}$,
recall Section~\ref{sec_Prames}.
%\ref{sec_fwd_br}.

%and $M_B$, recall

We denote by
$\mathcal B(\bR_+)$ the set of bounded measurable functions on $\bR_+$ and write $\langle \mu \bv f\rangle := \int_0^\infty f(x) \mu(dx)$
for any $\mu\in \mathcal M_1(\bR_+)$ and $f\in \mathcal B(\bR_+)$.
Then, for any $x\in\bR_+$ and $f\in \mathcal B(\bR_+)$ we define
\begin{equation}
	\begin{split}
		&\cT  f (x): = \langle \delta_x \bv  \cT  f\rangle
		\\&:= K_{I, \rA}(x) \cdot \langle \nu_{I, \rA}\bv f\rangle + K_{O, \rA}(x) \cdot \langle \nu_{O, \rA}\bv f\rangle
		\\
		&= \int_{\bR_+} \kappa(x, y) f(y) \beta(dy)\,.
	\end{split}
	\label{TfDef}
\end{equation}
where $\kappa$ is defined in \eqref{pVU}.
$\cT :=  K_{I,\rA} \langle\nu_{I, \rA}  \bv \cdot \rangle  +  K_{O,\rA} \langle\nu_{O,\rA}  \bv \cdot \rangle$ 
is another equivalent writing of the same definition.
In this notation,  it is clear that $\cT f$ is a measurable function (a priori not bounded),
whereas $\delta_x  \cT$ is a positive measure on $\bR_+$.
\smallskip

From  \eqref{RoDef}, we can interpret  $R_0$ as follows
in terms of $\cT^k$:
\begin{align*}
	R_0 = \lim_{k \rightarrow \infty}
	\langle \delta_x \bv  \cT^k  \mathbf{1}\rangle^{1/k},
\end{align*}
where $\mathbf{1}$ is the function uniformly equal to 1.

Indeed, denote by $M^{(u; k)}$
the point measure on $\bR_+$
which points (counted with multiplicity) 
correspond to the sizes 
of the cities that get infected from city $u$ 
after $k$ generations.
By considering the conditional expectation 
with respect to 
the city sizes 
of the first generation 
of cities infected by $u$, one shows the following equality by means of the branching property:
\begin{equation*}
	\bE\Big( \langle M^{(u; 2)}\bv f\rangle \bv M^{(u; 0)} = x\Big)
	= \bE\Big( \langle M^{(u; 1)}\bv \cT f\rangle 
	\bv M^{(u; 0)} = x\Big) = \langle \delta_x \bv \cT^2 f\rangle.
\end{equation*}
By induction,
for any $k\in \mathbb N$,
$\cT^k$ corresponds to the expectation 
of $M^{(u; k)}$, i.e.:
\begin{equation}
	\langle \delta_x \bv \cT^k f\rangle 
	= \bE\Big( \langle M^{(u; k)}\bv f\rangle \bv M^{(u; 0)} = x\Big).
	\label{Muk}
\end{equation}
Evaluated for $f= \mathbf{1}$, \eqref{Muk}
entails that $\langle \delta_x \bv  \cT^k  \mathbf{1}\rangle
= \bE_x[X_k]$.
Recalling \eqref{RoDef},
this concludes our claim that $R_0 = \langle \delta_x \bv  \cT^k  \mathbf{1}\rangle^{1/k}$.
\medskip

If the kernel $\kappa$ would be symmetric  and 
would satisfy 
\begin{equation}\label{k2}
	\int_{\bR_+}\int_{\bR_+} k(x, y)^2\, \beta(dx) \beta(dy)
	< \infty,
\end{equation}
then our definition would coincide with the definition of $R_0$ as in \cite{BJR06} as 
$$\sup\{ \|\cT f\|_{L^2(\beta)} \,;\; f \in L^2(\beta), \|f\|_{L^2(\beta)}\le 1\}.$$
This follows from the following Proposition \ref{Spec} and essentially Lemma 5.15 in \cite{BJR06}.  Condition \eqref{k2} ensures the compactness of the operator,
as proved in Lemma 5.15 of \cite{BJR06}
and noted in Remark 3.12 of \cite{CO20}.
Without symmetry nor condition \eqref{k2},
$R_0$ is defined as the spectral radius of $\cT$, see in particular Theorem 3.10 in \cite{CO20}.

The spectral radius of $\cT$
is identified in Proposition \ref{Spec}. 
Our rank-two kernel has exactly two real eigenvalues values, the leading eigenvalue is $R_0$. 

\subsubsection{Spectral analysis of $\cT$}
%\subsection{Spectral analysis of $\cT$}
\label{sec_long_time_T}

To analyse the long time behavior of $\cT$,
the following lemma is helpful
in that it relates $\cT$ 
to a matrix operation on a two-dimensional space.
For this, we recall the definition in \eqref{MatrixTo}
of the matrix $\Wmat$.

\begin{lem}
	\label{Proj}
	For any $k\in \mathbb N$, $x\in\bR_+$ and $f$ a non-negative measurable function:
	\begin{align*}
		\langle \delta_x \bv \cT^k  f\rangle
		= \begin{pmatrix}
			K_{I, \rA}(x) & K_{O, \rA}(x)
		\end{pmatrix}
		\cdot \Wmat^{k-1}
		\cdot \begin{pmatrix}
			\langle \nu_{I, \rA} \bv f\rangle\\
			\langle \nu_{O, \rA} \bv f\rangle
		\end{pmatrix}.
	\end{align*}
\end{lem}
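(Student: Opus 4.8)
The plan is to proceed by induction on $k$, exploiting the fact that the transfer operator $\cT$ has rank two: by its definition in \eqref{TfDef}, $\cT f$ is for every $f$ a linear combination of the two fixed functions $K_{I, \rA}$ and $K_{O, \rA}$, with scalar coefficients $\langle \nu_{I, \rA} \bv f\rangle$ and $\langle \nu_{O, \rA} \bv f\rangle$. Hence all the information about $f$ that survives one application of $\cT$ is encoded in the pair of numbers $(\langle \nu_{I, \rA} \bv f\rangle, \langle \nu_{O, \rA} \bv f\rangle)$, and the content of the lemma is exactly that iterating $\cT$ amounts to acting on this pair by the matrix $\Wmat$.

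For the base case $k = 1$, I would read off directly from \eqref{TfDef} that $\langle \delta_x \bv \cT f\rangle = K_{I, \rA}(x)\, \langle \nu_{I, \rA} \bv f\rangle + K_{O, \rA}(x)\, \langle \nu_{O, \rA} \bv f\rangle$, which is precisely the claimed expression with $\Wmat^0$ being the identity matrix. For the inductive step I would write $\cT^{k+1} f = \cT^{k}(\cT f)$ and apply the induction hypothesis to the function $g := \cT f$. This reduces the task to computing the two scalars $\langle \nu_{I, \rA} \bv \cT f\rangle$ and $\langle \nu_{O, \rA} \bv \cT f\rangle$.

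The key computation is then to integrate the separable identity $\cT f = K_{I, \rA}\, \langle \nu_{I, \rA} \bv f\rangle + K_{O, \rA}\, \langle \nu_{O, \rA} \bv f\rangle$ against each of the probability measures $\nu_{I, \rA}$ and $\nu_{O, \rA}$. By linearity and by the very definition \eqref{MatrixTo} of the entries of $\Wmat$, this yields
\begin{align*}
	\begin{pmatrix}
		\langle \nu_{I, \rA} \bv \cT f\rangle \\
		\langle \nu_{O, \rA} \bv \cT f\rangle
	\end{pmatrix}
	= \Wmat \cdot
	\begin{pmatrix}
		\langle \nu_{I, \rA} \bv f\rangle \\
		\langle \nu_{O, \rA} \bv f\rangle
	\end{pmatrix}.
\end{align*}
Substituting this into the induction hypothesis for $g = \cT f$ turns $\Wmat^{k-1}$ acting on the transformed pair into $\Wmat^{k-1}\cdot \Wmat = \Wmat^{k}$ acting on the original pair, which is exactly the statement at rank $k+1$, thereby closing the induction.

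Since $f$ is assumed non-negative and the kernels $K_{I, \rA}, K_{O, \rA}$ are non-negative, every integral involved is well-defined in $[0, +\infty]$, so no integrability issue obstructs the argument; the only point deserving a word of care is the interchange of the finite linear combination with integration against $\nu_{I, \rA}$ and $\nu_{O, \rA}$, which is immediate by linearity. I do not anticipate any genuine obstacle here: the whole statement is a bookkeeping consequence of the separable (rank-two) structure of $\kappa_M$ already emphasized after Proposition~\ref{V2}, and Lemma~\ref{Proj} merely records it in matrix form so that the spectral radius of $\cT$, hence $R_0$, can be identified with the dominant eigenvalue of the $2\times 2$ matrix $\Wmat$.
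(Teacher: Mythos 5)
Your proof is correct and follows exactly the route the paper takes: its proof of Lemma~\ref{Proj} consists of the single remark that ``the equality follows by induction and the definition of $\cT$ given in \eqref{TfDef}''. You have simply spelled out that induction — in particular the key identity expressing $\bigl(\langle \nu_{I,\rA} \bv \cT f\rangle,\ \langle \nu_{O,\rA} \bv \cT f\rangle\bigr)$ as $\Wmat$ applied to $\bigl(\langle \nu_{I,\rA} \bv f\rangle,\ \langle \nu_{O,\rA} \bv f\rangle\bigr)$ — which is precisely the intended argument.
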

\begin{proof}
	The equality follows by induction 
	and the definition of $\cT$ given in \eqref{TfDef}.
\end{proof}

The projection property given in Lemma \ref{Proj}
is strongly connected to the projection
of $\text{V}^\infty$ onto $\text{V}^2$
given in Proposition \ref{V2}.
It greatly simplifies the spectral analysis of $\cT$,
as we can see thanks to the following proposition.
\\

If any of the entries of $\Wmat$ is infinite,
then $\cT^2 \mathbf{1} \equiv \infty$
because 
\begin{multline*}
	\langle \delta_x \bv \cT^2 \mathbf{1} \rangle
	= K_{I,\rA}(x) (\langle \nu_{I,\rA} \bv K_{, \rA}\rangle + \langle \nu_{I, \rA} \bv K_{O, \rA}\rangle)
	\\+ K_{O, \rA}(x) (\langle \nu_{O, \rA} \bv K_{I, \rA}\rangle + \langle \nu_{O,\rA} \bv K_{O,\rA}\rangle).  
\end{multline*}
This directly implies that $R_0 = \infty$ in this case.
Therefore, in the following, we assume that all entries are finite.

If $\beta$ is a power-law distribution with exponent $\phi$,
this assumption translates to
\begin{equation}
	1+ 2 b -\phi < -1 \quad \quad \text{ and } \quad \quad  2 a -1 -\phi < -1,
	\label{strU_reg}
\end{equation}
in the case of strategy~\((U)\) and to
\begin{equation}
	2+ 2 b -\phi < -1 \quad \quad \text{ and } \quad \quad  2 a -\phi < -1
	\label{strP_reg}
\end{equation}
in the case of strategy~\((P)\).
Recall that a bounded measurable function $f$
is called an eigenfunction of $\cT$ 
if there exists some value $\lambda$ such that $\cT f = \lambda f$.
Similarly, a signed measure $\mu$
is called an eigenmeasure of $\cT$ 
if there exists some value $\lambda$ such that $\mu \cT = \lambda \mu$.
$\lambda$ is then called an eigenvalue of $\cT$.
If an eigenmeasure $\mathfrak{q}$ of $\cT$
is a probability distribution,
it satisfies the property 
for being a quasi-stationary distribution (QSD),
namely $\mathfrak{q}\cT (dx)/\langle \mathfrak{q}\cT \mathbf{1}\rangle = \mathfrak{q}(dx).$
Denote by $\mathbf{v}^T$ the transposition of a vector $\mathbf{v}$.
We have the following relationships between the eigenvalues and eigenvector of $\Wmat$ and the eigenvalues and eigenmeasures of $\cT$.  

\begin{prop}
	\label{Spec}
	Assume that the entries of $W$ are finite.
	Then $\cT$ has two distinct and real eigenvalues $\lambda_0$ and $\lambda_1$, such that 
	$\lambda_0>\lambda_1 \vee 0$  and
	that coincide with the ones of $\Wmat$.
	The leading eigenmeasure $\mathfrak{q}_0$ of $\cT$ 
	can be chosen 
	as a probability measure, thus as a QSD.
	Similarly,
	the leading left eigenvector of $\Wmat$ 
	can be chosen as $\begin{pmatrix}
		q_0^I & q_0^O
	\end{pmatrix}$
	such that $q_0^I+ q_0^O=1$, $q_0^I\wedge q_0^O\ge 0$.
	The following relation holds between them:
	\begin{equation}
		\mathfrak{q}_0 = q_0^I\cdot \nu_{I,\rA} + q_0^O\cdot \nu_{O, \rA}.
		\label{qMFq}
	\end{equation}
	On the other hand, 
	the leading eigenfunction $h_0$ of $\cT$ 
	can be chosen as a positive measurable function $h_0$
	such that $\langle \mathfrak{q}_0 \bv h_0\rangle =1$.
	Similarly,
	the leading right eigenvector of $\Wmat$ 
	can be chosen as 
	$\begin{pmatrix}
		h_0^I & h_0^O
	\end{pmatrix}^T$
	such that $q_0^I h_0^I+ q_0^O h_0^O =1$,
	$h_0^I\wedge h_0^O\ge 0$.
	The following relationship holds between the function $h_0$ and the vector $\begin{pmatrix}
		h_0^I & h_0^O
	\end{pmatrix}^T$:
	\begin{equation}
		h_0 = \frac{h_0^I}{\lambda_0}\cdot K_{I, \rA} + \frac{h_0^O}{\lambda_0}\cdot K_{O,\rA}.
		\label{hMFh}
	\end{equation}

	In addition, there exists a function $ h_1$, a measure $ \mathfrak{q}_1$
	and constant $C>0$ such that
	we have the following exact result of exponential convergence 
	at rate $\lambda_1/\lambda_0$:
	\begin{equation}
		(\lambda_0)^{-k}\langle \delta_x \bv \cT^k  f\rangle
		- h_0(x)\cdot \langle \mathfrak{q}_0 \bv f\rangle
		= (\lambda_1/\lambda_0)^k\cdot 
		h_1(x)\cdot \langle \mathfrak{q}_1\bv f\rangle,
		\label{SDT}
	\end{equation}
	with the following bounds:
	\begin{equation*}
		|h_1|\le C\cdot ( K_{I,\rA}\vee K_{O, \rA}),\quad
		| \mathfrak{q}_1|(dx) \le C\cdot ( \nu_{I, \rA}\vee \nu_{O, \rA})(dx).
	\end{equation*}
	In particular, 
	\begin{align*}
		R_0 = \lim_k \langle \delta_x \bv \cT^k  \mathbf{1}\rangle^{1/k}
		= \lambda_0.
	\end{align*}
\end{prop}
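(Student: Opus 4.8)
The plan is to exploit the rank-two structure of $\cT$ so that the entire spectral analysis reduces to the elementary study of the $2\times 2$ matrix $\Wmat$, with Lemma~\ref{Proj} serving as the bridge between the two. The starting observation is that for every $f$ the image $\cT f$ lies in the two-dimensional span of $K_{I,\rA}$ and $K_{O,\rA}$; hence any eigenfunction attached to a nonzero eigenvalue must itself lie in this span, and the remaining spectrum is the kernel $\{0\}$. Computing $\cT K_{I,\rA} = \langle \nu_{I,\rA}\bv K_{I,\rA}\rangle\, K_{I,\rA} + \langle \nu_{O,\rA}\bv K_{I,\rA}\rangle\, K_{O,\rA}$ and the analogous image of $K_{O,\rA}$ shows that, in the basis $(K_{I,\rA}, K_{O,\rA})$, the restriction of $\cT$ is represented by $\Wmat$; consequently the nonzero eigenvalues of $\cT$ are exactly those of $\Wmat$. (I will lean on Lemma~\ref{Proj} rather than a fixed basis convention to control the iterates, which sidesteps any transpose bookkeeping.)

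First I would analyze $\Wmat$ itself. Under the finiteness hypothesis and since $\beta$ is non-Dirac, all four entries are strictly positive, so for $\Wmat = \left(\begin{smallmatrix} a & b \\ c & d\end{smallmatrix}\right)$ the characteristic discriminant $(a-d)^2 + 4bc$ is strictly positive, giving two distinct real eigenvalues $\lambda_0 > \lambda_1$. Perron--Frobenius then yields $\lambda_0 > 0$ with strictly positive left and right eigenvectors, and moreover the strict spectral gap $\lambda_0 > |\lambda_1|$ (indeed $\lambda_0 - \lambda_1 = \sqrt{(a-d)^2+4bc} > 0$ and $\lambda_0 + \lambda_1 = a+d > 0$); in particular $\lambda_0 > \lambda_1 \vee 0$. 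I normalize the left eigenvector to a probability vector $(q_0^I, q_0^O)$ and scale the right eigenvector $(h_0^I, h_0^O)^T$ so that $q_0^I h_0^I + q_0^O h_0^O = 1$.

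Next I would transport this data to $\cT$. Setting $\mathfrak{q}_0 := q_0^I\,\nu_{I,\rA} + q_0^O\,\nu_{O,\rA}$ and $h_0 := \lambda_0^{-1}(h_0^I K_{I,\rA} + h_0^O K_{O,\rA})$, a one-line verification using the left (respectively right) eigenvector equations gives $\langle \nu_{I,\rA}\bv h_0\rangle = h_0^I$, $\langle \nu_{O,\rA}\bv h_0\rangle = h_0^O$, hence $\cT h_0 = \lambda_0 h_0$, and symmetrically $\mathfrak{q}_0 \cT = \lambda_0 \mathfrak{q}_0$; here $\mathfrak{q}_0$ is a probability measure (a QSD) and $h_0$ a positive measurable function with $\langle \mathfrak{q}_0 \bv h_0\rangle = q_0^I h_0^I + q_0^O h_0^O = 1$. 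This is exactly \eqref{qMFq} and \eqref{hMFh}. Repeating the construction from the $\lambda_1$-eigenvectors defines $h_1 := \lambda_1^{-1}(h_1^I K_{I,\rA} + h_1^O K_{O,\rA})$ and $\mathfrak{q}_1 := q_1^I\,\nu_{I,\rA} + q_1^O\,\nu_{O,\rA}$; since these are fixed linear combinations of $K_{I,\rA}, K_{O,\rA}$ (respectively $\nu_{I,\rA}, \nu_{O,\rA}$) with bounded coefficients, the bounds $|h_1|\le C\,(K_{I,\rA}\vee K_{O,\rA})$ and $|\mathfrak{q}_1|\le C\,(\nu_{I,\rA}\vee \nu_{O,\rA})$ follow at once.

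Finally I would establish the exact decomposition \eqref{SDT} by diagonalizing $\Wmat = \lambda_0\,\mathbf{h}_0\mathbf{q}_0 + \lambda_1\,\mathbf{h}_1\mathbf{q}_1$ in terms of the biorthonormal eigenvectors, raising to the power $k-1$, and inserting into Lemma~\ref{Proj}: the left factor $(K_{I,\rA}(x), K_{O,\rA}(x))$ contracts against $\mathbf{h}_i$ to produce $\lambda_i h_i(x)$, while the right factor contracts against $\mathbf{q}_i$ to produce $\langle \mathfrak{q}_i\bv f\rangle$, yielding $\langle \delta_x\bv \cT^k f\rangle = \lambda_0^k h_0(x)\langle \mathfrak{q}_0\bv f\rangle + \lambda_1^k h_1(x)\langle \mathfrak{q}_1\bv f\rangle$, which is \eqref{SDT} after dividing by $\lambda_0^k$. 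Taking $f = \mathbf{1}$, using $h_0 > 0$ and $\langle \mathfrak{q}_0\bv \mathbf{1}\rangle = 1$, the strict gap $\lambda_0 > |\lambda_1|$ makes the leading term dominate, so $\langle \delta_x\bv \cT^k \mathbf{1}\rangle^{1/k}\to \lambda_0$; combined with the identity $R_0 = \lim_k \langle \delta_x\bv \cT^k \mathbf{1}\rangle^{1/k}$ proved in Section~\ref{sec_r0_spec}, this gives $R_0 = \lambda_0$. The main delicacy is conceptual rather than computational: one must be sure the finite-dimensional reduction truly captures the full nonzero spectrum even though $\cT$ may send bounded functions to unbounded ones, so that no standard compact-operator machinery is available — this is precisely what the rank-two image property and Lemma~\ref{Proj} secure, after which the positivity and ordering claims rest entirely on Perron--Frobenius for the positive matrix $\Wmat$.
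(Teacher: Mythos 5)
Your proposal is correct and follows essentially the same route as the paper's proof: reduce to the positive $2\times 2$ matrix $\Wmat$ via the rank-two image of $\cT$ and Lemma~\ref{Proj}, apply Perron--Frobenius, transport the left/right eigenvectors to the eigenmeasures/eigenfunctions of $\cT$ with the stated normalizations, and obtain \eqref{SDT} from the spectral decomposition of $\Wmat^{k-1}$. The only cosmetic differences are that you make the discriminant computation and the "image lies in the span" argument explicit where the paper instead argues by contradiction on the finiteness of $\langle\mathfrak{q}\bv K_{\cdot,\rA}\rangle$ and $\langle\nu_{\cdot,\rA}\bv\mathfrak{h}\rangle$; the substance is identical.
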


Before we proceed with the proof of Proposition \ref{Spec},
let us interpret the quantities $h_0$ and $\mathfrak{q}_0$.

It follows from Proposition \ref{Spec} 
that $\bE_x[X_k]=\langle \delta_x \bv \cT^k  \mathbf{1}\rangle$ is asymptotically equivalent 
to $(R_0)^k\cdot h_0(x)$.
This property is 
why it is reasonable to call $h_0$ the survival capacity 
and what makes $h_0$ relevant 
as an indicator of network centrality, 
see Section \ref{EigC}.
Furthermore, it motivates to consider the eigenmeasure $\mathfrak{q}_0$  because $\mathfrak{q}_0$ is involved in the normalization condition on $h_0$.

Actually, the density of $\mathfrak{q}_0$ at value $x$
can be interpreted 
as the likelihood for a city
randomly chosen 
among the infected cities at generation $g$
to have size $x$,
for large $g$ in the branching approximation.
More precisely, 
for any initially infected city $u$
and on the event of survival of the branching approximation,
we can demonstrate that 
the sequence of normalized random measures
$M^{(u; g)}(dx)
/\langle M^{(u; g)}\vert \mathbf{1}\rangle$
converges to $\mathfrak{q}_0(dx)$
as $g$ tends to infinity.
In the above expression, we recall that $M^{(u; g)}$ is
the point measure on $\bR_+$
which points (counted with multiplicity) 
correspond to the sizes 
of the cities that get infected from city $u$ 
after $g$ generations.

Indeed, 
the city sizes at  generation $g$
are prescribed by independent sampling with distributions $\nu_{I, \rA}$ and $\nu_{O, \rA}$
resp.,
conditionally on the numbers of cities infected from the inside and of those infected from the outside, at generation
$g$.
According to Proposition \ref{V2},
these two numbers can be inferred by studying the process $\textbf{V}^2$.
In such setting of a discrete-type Galton-Watson processes,
\cite{KS66} provides a description of the relative proportions of the different types.
In our case, it implies that 
asymptotically a proportion $q_0^I$ of cities are infected from the inside.
We also recall that on the event of survival, the number of infected cities at generation $g$ tends a.s. to infinity with $g$.
We then 
account for the next sampling of cities sizes,
and recall that 
$\mathfrak{q}_0(dx)
= q_0^I \nu_{I, \rA}(dx)
+ (1-q_0^I) \nu_{O, \rA}(dx)$
to deduce the above claim of convergence thanks to the law of large number.\\

The study of the backward-in-time process is analogous.
Recall that the
eigenvalues of an adjoint operator are complex conjugates of the eigenvalues of the original operator.
In our case, the eigenvalues of the backward operator $\hat{\cT}$
simply coincide with the ones of $\cT$,
namely $R_0=\lambda_0$ and $\lambda_1$.
On the other hand, the QSDs and the eigenvectors do not generally
coincide, except in the specific case of the  strategy~$P$.

\paragraph{Proof of Proposition \ref{Spec}}
Since $\Wmat$ has positive entries,
the Perron-Frobenius theorem  ensures that 
$\Wmat$ has two distinct real eigenvalues $\lambda_0, \lambda_1$
of the form $\lambda_0>\max\{\lambda_1, 0\}$.
Also, the leading left and right eigenvectors have necessarily entries of the same sign, 
contrary to the corresponding eigenvectors of the second eigenvalue.

Assume that $\mathfrak{q}$ 
is an eigenmeasure of $\cT$
with eigenvalue $\lambda$
(when $\cT$ is treated as an adjoint operator on measures).

Under $\cT$, any non-negative measure $\mu$ such that
$\langle \mu \bv K_{O, \rA}\rangle$
and $\langle \mu \bv K_{I, \rA}\rangle$
are both finite
is mapped to a measure 
uniquely prescribed as a linear combination of $\nu_{O, \rA}$ and $\nu_{I, \rA}$.
If either $\langle \mathfrak{q} \bv K_{I, \rA}\rangle$ or $\langle \mathfrak{q} \bv K_{O, \rA}\rangle$
would be infinite, then $\mathfrak{q} \cT = \infty$,
which would contradict the fact that $\mathfrak{q}$ 
is an eigenmeasure of $\cT$. This implies that $\mathfrak{q}$ 
can necessarily be expressed as follows:
\begin{align*}
	\mathfrak{q}= \begin{pmatrix}
		q^O & q^I
	\end{pmatrix}
	\cdot \begin{pmatrix}
		\nu_{O, \rA}\\ \nu_I
	\end{pmatrix}.
\end{align*}
From this representation
and since $\nu_{O, \rA}$ and $\nu_{I, \rA}$ are not colinear, 
with Lemma~\ref{Proj} it follows that   
$\mathfrak{q} \cT = \lambda \mathfrak{q}$ is equivalent to 
$\begin{pmatrix}
	q^O& q^I
\end{pmatrix}$
being a left eigenvector of $\Wmat$ with eigenvalue $\lambda$. 
In particular,
recalling the Perron-Frobenius theorem
the leading left eigenvector
$\begin{pmatrix}
	q^O& q^I
\end{pmatrix}$ of $W$
and the leading eigenmeasure $\mathfrak{q}_0$ of $\cT$
(with leading eigenvalue $\lambda_0$)
can be chosen to be non-negative.
By assuming further that
$q_0^O$ and  $q_0^I$ sum up to one and that $\mathfrak{q}_0$
is a probability measure
and since $\nu_{O, \rA}$ and $\nu_{I, \rA}$
are probability measures,
relationship \eqref{qMFq}
is fulfilled.

Similarly, 
under $\cT$, any non-negative function $f$ such that
$\langle \nu_{O, \rA} \bv f\rangle$
and $\langle \nu_{I,\rA} \bv f\rangle$
are both finite
is mapped to a function 
uniquely prescribed as a linear combination of $K_{O, \rA}$ and $K_{I,\rA}$.
Let $\mathfrak{h}$ be an eigenfunction of $\cT$.
If either $\langle \nu_{O, \rA} \bv \mathfrak{h}\rangle$ or $\langle \nu_{I, \rA} \bv \mathfrak{h}\rangle$
would be infinite, then $ \cT \mathfrak{h} = \infty$
which would contradict the fact that $\mathfrak{h}$ 
is an eigenfunction of $\cT$. This implies that $\mathfrak{h}$ 
can necessarily be expressed as follows:
\begin{align*}
	\mathfrak{h}= \begin{pmatrix}
		h^O & h^I
	\end{pmatrix}
	\cdot \begin{pmatrix}
		K_{O, \rA}\\ K_{I, \rA}
	\end{pmatrix}.
\end{align*}
From this representation
and since $K_{O, \rA}$ and $K_{I, \rA}$ are not colinear, 
with Lemma~\ref{Proj} it follows that   
$ \cT \mathfrak{h} = \lambda \mathfrak{h}$ is equivalent to 
$\begin{pmatrix}
	h^O & h^I
\end{pmatrix}^T$ 
being a right eigenvector of $\Wmat$ with eigenvalue $\lambda$. 
Recalling Perron-Frobenius theorem, 
the leading eigenfunction $h_0$ of $\cT$ 
and the leading right eigenvector $\begin{pmatrix}
	h_0^O & h_0^I
\end{pmatrix}^T$ 
can be chosen non-negative 
and such that 
$\langle \mathfrak{q}_0 \bv h_0\rangle =1$
and $q_0^O h_0^O+ q_0^I h_0^I =1$.
Let us denote by $c>0$ the constant such that $h_0(x) = c \cdot (h_0^O K_{O, \rA}(x) + h_0^I K_{I, \rA}(x))$.
Then, with Lemma \ref{Proj},
$\langle \mathfrak{q}_0 \bv h_0\rangle =1$
translates into 
\begin{equation*}
	c \cdot
	\begin{pmatrix}
		q_0^O & q_0^I
	\end{pmatrix}
	\cdot W \cdot 
	\begin{pmatrix}
		h_0^O\\ h_0^I
	\end{pmatrix}
	= 1.
\end{equation*}
Recalling that $\begin{pmatrix}
	h_0^O & h_0^I
\end{pmatrix}^T$ 
is a right eigenvector of $\Wmat$ with eigenvalue $\lambda_0$ and that $q_0^O h_0^O+ q_0^I h_0^I =1$,
it implies that $c = 1/\lambda_0$
so that relation \eqref{hMFh}
fulfilled. 

Finally, let $\begin{pmatrix}
	q^O_1& q^I_1
\end{pmatrix}$
be a left eigenvector of $\cT$
corresponding to the eigenvalue $\lambda_1$.
It must satisfy that $q^O_1\cdot h^O_0 + q^I_1\cdot h^I_0 = 0$, because when eigenvectors of the different eigenvalues $\lambda_0>\lambda_1$ are considered, we have 
\begin{align*} \lambda_1 \cdot (q^O_1\cdot h^O_0 + q^I_1\cdot h^I_0) 
	&=
	\begin{pmatrix}
		q^O_1& q^I_1
	\end{pmatrix}
	\cdot \Wmat\cdot
	\begin{pmatrix}
		h^O_0\\ h^I_0
	\end{pmatrix}
	\\&= \lambda_0 \cdot (q^O_1\cdot h^O_0 + q^I_1\cdot h^I_0).
\end{align*}
Since $h^O_0$ and $h^I_0$ are positive,
the signs of $q^O_1$ and $q^I_1$ are necessarily different.
Similarly, any right eigenvector of $\Wmat$ 
with eigenvalue $\lambda_1$
has entries of opposite signs.
Therefore, by rescaling appropriately, we can define
$\begin{pmatrix}
	h^O_1& h^I_1
\end{pmatrix}^T$
as the unique left eigenvector of $\Wmat$ such that 
$q^O_1\cdot h^O_1 + q^I_1\cdot h^I_1 = 1$.
Define similarly as for the leading eigenvectors:
\begin{equation}
	\begin{split}
		\mathfrak{q}_1
		&:= q^O_1\cdot \nu_{O, \rA} + q^I_1\cdot \nu_{I,\rA}
		\\
		h_1
		&:= \frac{h^O_1}{\lambda_1}\cdot K_{O, \rA} + \frac{h^I_1}{\lambda_1}\cdot K_{I, \rA}.
	\end{split}
	\label{mqMFh}
\end{equation}

The spectral decomposition of $\Wmat$ implies that for any reals $\ell_O, \ell_I, f_O, f_I$
and $k\in \mathbb N$:
\begin{align*}
	& \begin{pmatrix}
		\ell_O & \ell_I
	\end{pmatrix}
	\cdot \Wmat^k
	\cdot \begin{pmatrix}
		f_O\\ f_I
	\end{pmatrix}
	\\&\quad
	= \Big[ (\lambda_0)^k
	\cdot (\ell_O h_0^O + \ell_I h_0^I)
	\cdot (q_0^O f_O + q_0^I f_I)
	\\&\qquad + \lambda_1^k
	\cdot (\ell_O h_1^O + \ell_I h_1^I)
	\cdot (q_1^O f_O + q_1^I f_I)
	\Big].
\end{align*}
Combined with Lemma \ref{Proj}, \eqref{qMFq}, \eqref{hMFh} and \eqref{mqMFh},
it directly entails \eqref{SDT}
and concludes the proof of Proposition \ref{Spec}
with $C:= \max\{q_1^O, q_1^I, h_1^O, h_1^O\}$.
$\hfill \square$

\subsection{Eigenvector centrality}
\label{sec_eig_cent}

\subsubsection{Notion of eigenvector centrality}
\label{EigC}

In terms of the approximation by the forward in time branching process,
one can get explicit formulas 
for a classical notion of  centrality
for epidemics evolving on a network, 
namely eigenvector centrality, which assigns to each city size $x$ the value of the leading eigenfunction $h_0(x)$, see \cite[Chapter 7]{N10}, \cite{Bo87}.   It is a measure for the number of infections that are triggered when node $x$ gets infected and hence, allows to compare the relative importance of the different cities during the course of an epidemic. 
The value is primarily of interest in situations where the goal of containing the epidemic is no longer within reach and the aim is instead to delay its progression. By targeting strict measures on cities with a high eigenvector centrality value, one wishes to target cities with a high potential for additional infections.

According to Proposition \ref{Spec}, the leading eigenfunction $h_0$
takes the following form:
\begin{equation}
	h_0(x) = (R_0)^{-1}\cdot 
	(K_{O, \rA}(x)\cdot h_0^O + K_{I, \rA}(x)\cdot h_0^I),
	\label{hX}
\end{equation}
where the $2\times 1$ vector $(h_0^O\; h_0^I)^T$
is the eigenvector of the $2\times 2$ transmission matrix $\Wmat$. Since $\Wmat_{O, O} = \Wmat_{I, I}$ we arrive at the following expression for 
$(h_0^O\; h_0^I)^T$
(with similar expressions for  $(\mathfrak{q}_0^O\; \mathfrak{q}_0^I)$, the left eigenvector of $\Wmat$):
\begin{equation}
	\begin{split}
		h_0^O &:= \dfrac{\sqrt{\Wmat_{O, I}}+ \sqrt{\Wmat_{I, O}}}{2\sqrt{\Wmat_{I, O}}},
		\\
		h_0^I &:= \dfrac{\sqrt{\Wmat_{O, I}}+ \sqrt{\Wmat_{I, O}}}{2\sqrt{\Wmat_{O, I}}},
	\end{split}
	\label{hAhB}
\end{equation}
With these expressions one observes
that the normalisation is such that $(h_0^O\; h_0^I)^T$
is actually independent of the value of the thresholds
$p_\vee$ and $L_\vee$ under strategy~\((P)\) and \((U)\), respectively.

The eigenvector centrality value $h_0(x)$ can be decomposed
into the two factors $f_O(x)=(h_0^O/R_0)\cdot 
K_{O, \rA}(x)$ and  $f_I(x)= (h_0^I/R_0) \cdot K_{I, \rA}(x)$.
Since by definition $K_{O, \rA}$ and $K_{I,\rA}$ (cf~\eqref{Knu}) depend polynomially on $x$, both factors are on a log scale linear functions in the city size.

\begin{figure}
	\begin{center}
		Eigenvector centrality
	\end{center}
	\begin{tabular}{lcc}
		&\quad Strategy \((P)\) & \quad Strategy \((U)\)\\
		\rotatebox{90}{\hcm{1.6} France}
		&  \includegraphics[draft = false, width = 0.44\textwidth]{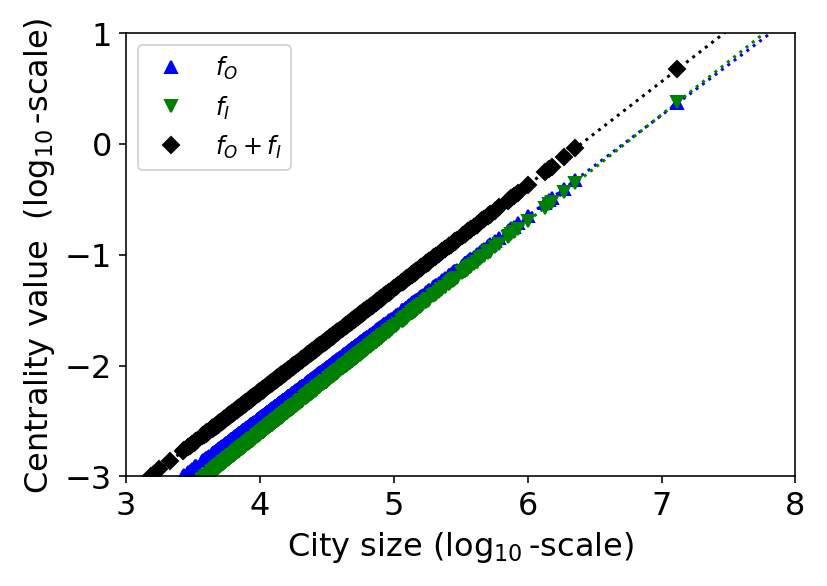}   
		& \includegraphics[draft = false, width = 0.44\textwidth]{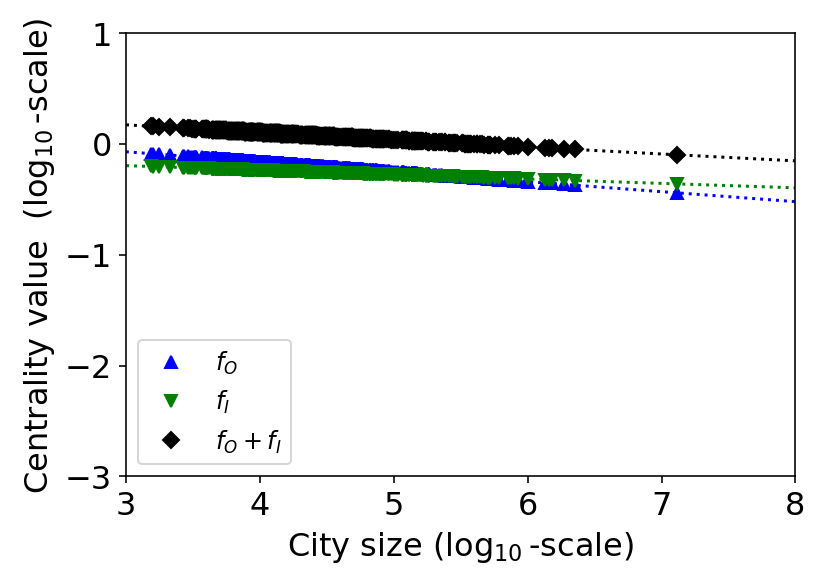} \\
		\rotatebox{90}{\hcm{1.6} Poland}
		&\includegraphics[draft = false, width = 0.44\textwidth]{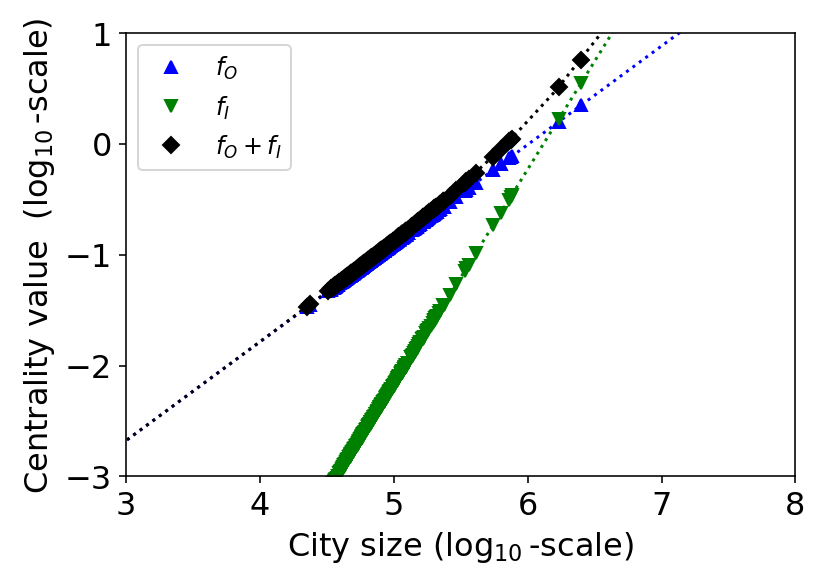} 
		& \includegraphics[draft = false, width = 0.44\textwidth]{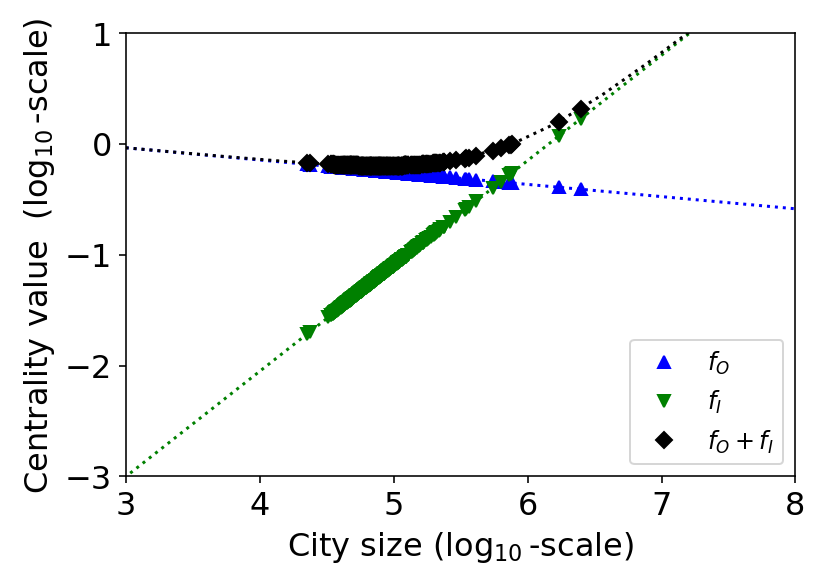}\\
		\rotatebox{90}{\hcm{1.6} Japan}
		&\includegraphics[draft = false, width = 0.44\textwidth]{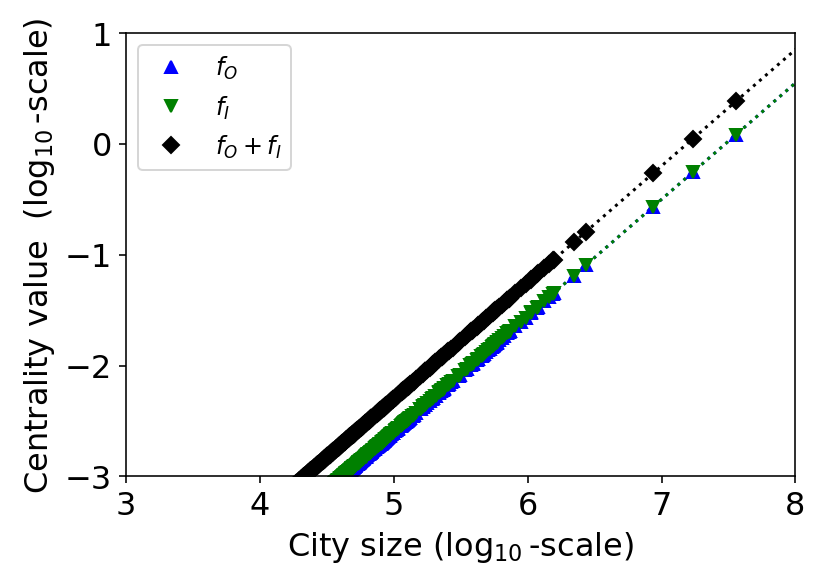} 
		& \includegraphics[draft = false, width = 0.44\textwidth]{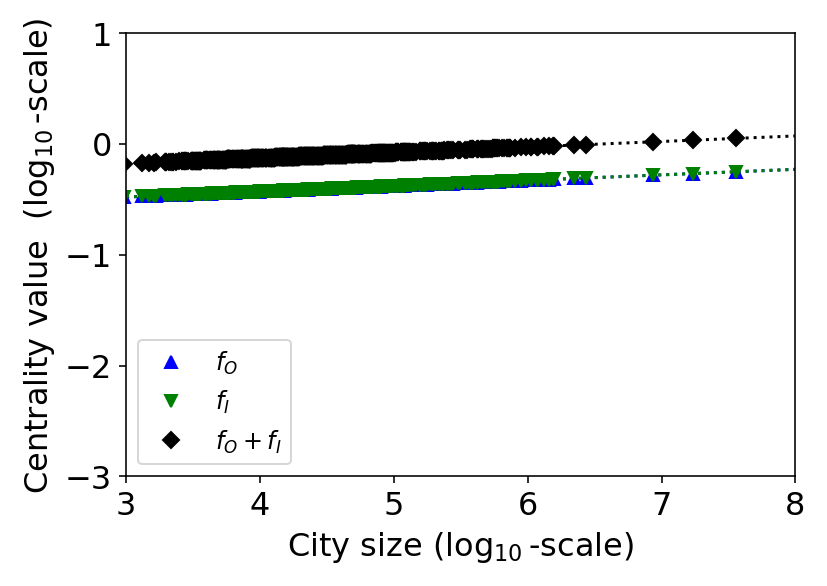}
	\end{tabular}
	\caption{
		Decomposition of eigenvector centrality
		into the additive components $f_O$ and $f_I$
		depending on city size
		in log-log plot,
		on the left under strategy \((P)\)
		and on the right-panel under strategy \((U)\),
		for data from France, Poland and Japan.}
	\label{fig:EigenvectorCentrality}
\end{figure}

\subsubsection{Estimation of eigenvector centrality}
\label{sec_est_eigC}
Complementary to the outbreak probability,
the eigenvector centrality value  provides 
additional insight 
into which cities mainly drive the epidemic.  
For large city sizes
and large enough infection rate,
outbreak probabilities cannot be distinguished.
Eigenvector centralities of these cities are nonetheless still different, 
especially under strategy~\((P)\),
and higher centrality values actually correspond to higher risk factors (figures not shown). This lends support to the incentive 
to restrict the cities according to the full order of size. 

Admittedly, the largest cities are in any case most likely 
to become infected early without strong prior regulation, 
so their contribution to the growth rate should not be sustained for long.  
However, this is all the more reason 
to put in place preventive restrictions 
in order to avoid secondary cases coming out of these cities.

In Section \ref{EigC}, we argued that in our analytical model 
(with a kernel of the form \eqref{Knu}) the eigenvector centrality value is composed of the two factors $f_O$ and $f_I$ which depend on a log scale linearly on the city size. 
In the case of the French best fit kernel, both factors are nearly of equal magnitude,
see Figure~\ref{fig:EigenvectorCentrality}. In contrast, for the Polish best fit kernel, the eigenvector centrality value for large cities is predominantly influenced by the factor \( f_I \), i.e. by infections imported by residents of large cities traveling to other areas and bringing back the disease. Conversely, for small Polish cities (those with fewer than \( 10^5 \) inhabitants), the eigenvector centrality value is primarily driven by the factor \( f_O \), i.e. by infections imported into these smaller cities by visitors arriving from other locations. 

Concerning Japan, the lower-panel displays very similar patterns as for France.
Due to the lower quality of the kernel graph approximation for Japan, 
the exact values are not very reliable, yet the general trend still deserves to be remarked.

\FloatBarrier

As expected,
the heterogeneity in eigenvector centrality 
is much more pronounced for strategy~\((P)\)
than for strategy~\((U)\),
where the centrality value is close to a constant for Poland, slightly decreasing with population size for France,
and slightly increasing for Japan. Since in France the estimated value of $a$ is less than 1 and of $b$ is less than 0,
both attractiveness and emissivity
reduce the role of large cities. Furthermore, under strategy~\((U)\)
the number of infected citizens does
not increase with the city size.

As noted in Subsection \ref{EigC},
eigenvector centrality values are independent of the threshold values,
as would be outdegree with any natural normalisation.
This is in stark contrast with outbreak probability, 
for which the inflection point 
(where the sigmoid function changes  from being concave to convex) 
moves up with increasing threshold values (figures not shown).
Contrary to outdegrees and eigenvector centrality values,
outbreak probability makes visible the distinction 
between cities that are not likely to produce any outbreak
and the others.
On the other hand, 
it might be misleading in suggesting 
that all cities well above the inflection point 
are contributing equally.
For strategy~\((P)\),
this suggestion is efficiently corrected by other measures like
eigenvector centrality or outdegree.

\subsection{Further examination of goodness of approximations and estimates}
\label{sec_factors}

In this section, we investigate the impact of several factors influencing the goodness of the applied approximations and estimates.
In Section~\ref{sec_R0_est}, we discuss how the number of cities, size heterogeneity, and graph structures impact the estimation of the basic reproduction number. Section~\ref{sec_KB_errors} addresses the quality of KB analytical estimations of infection and outbreak probabilities, while Section~\ref{sec_data_filter} is focused on the crucial role of data filtering methods.

\subsubsection{$R_0$ estimation: number of cities, size heterogeneity, graph structures}
\label{sec_R0_est}

Recall from Section~\ref{sec_val_R0}
that the $R_0$ value could not get reliably estimated
from the dynamics in the number of cities infected at each generation (even when we average over replications of outbreaks).
So we investigate the effects 
responsible for this bad quality of the $R_0$ estimation
possibly due to $(i)$ the too small number of cities,
$(ii)$ a tail of city size distributions that is too heavy and
$(iii)$ the specific graph structure.
We check the validity of our proposed procedure for a range of power-law distributions.

\begin{figure}[t]
	\begin{center}
		$R_0$ estimation for extended city size distributions
	\end{center}
	\begin{center}
		\begin{tabular}{lcc}
			&Strategy \((P)\) & Strategy \((U)\)\\
			\rotatebox{90}{\hcm{0.4} Kernel Graph}
			&\includegraphics[width = 0.42\textwidth]{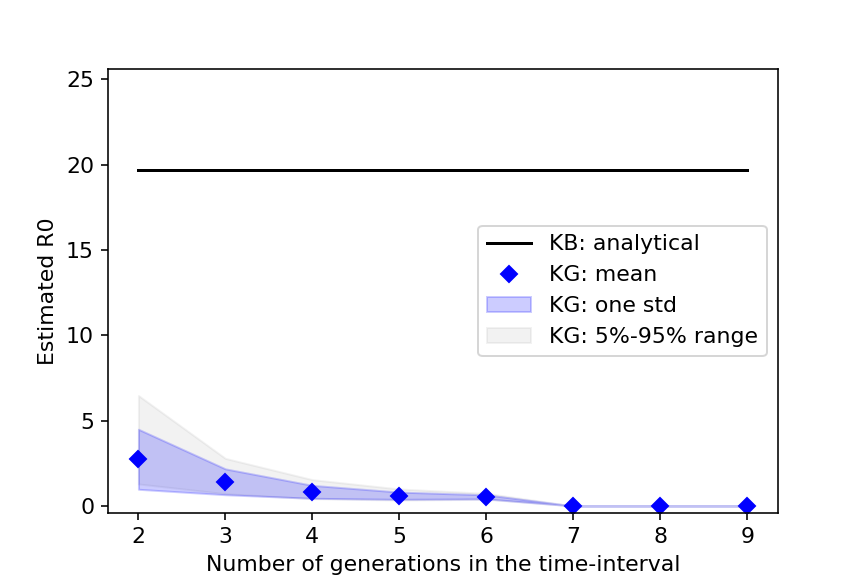}
			&
			\includegraphics[width = 0.42\textwidth]{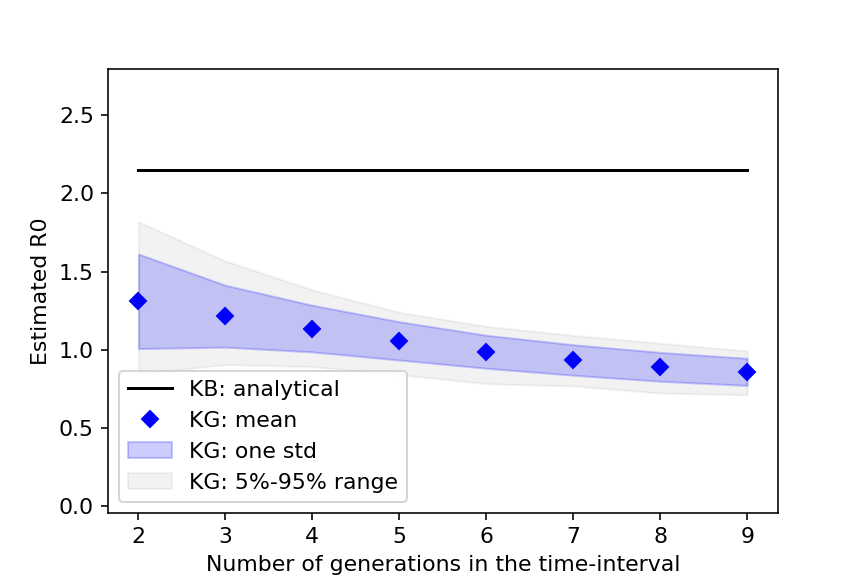}
			\\
			\rotatebox{90}{$10 \times$ Kernel Graph}
			&\includegraphics[width = 0.42\textwidth]{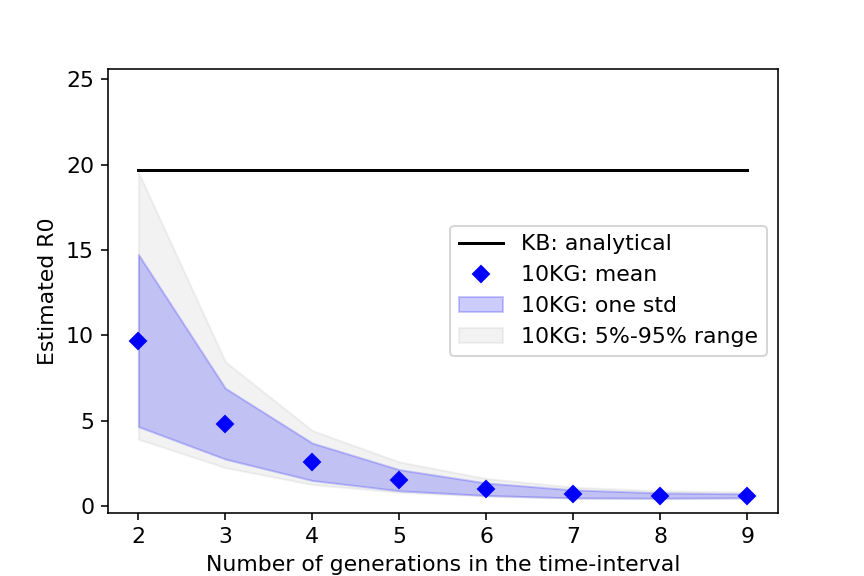}
			&
			\includegraphics[width = 0.42\textwidth]{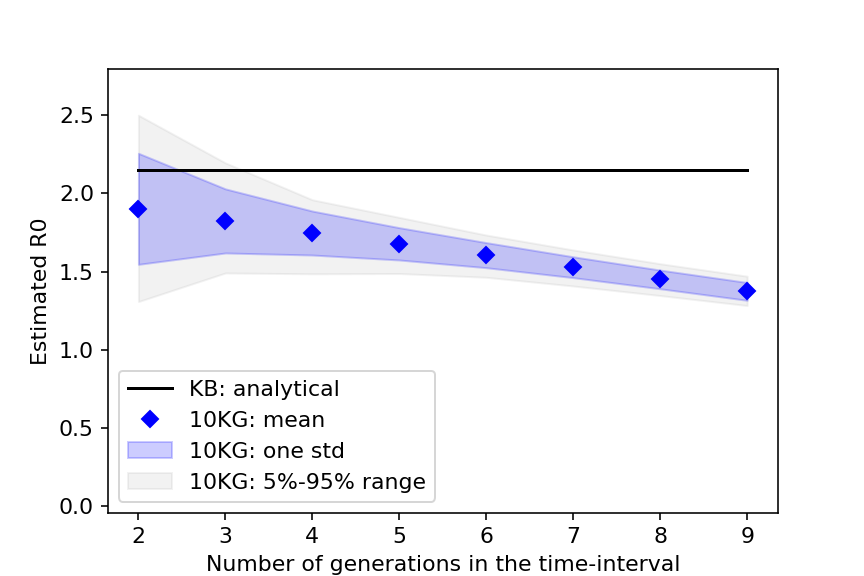}
		\end{tabular}
	\end{center}
	\caption{$R_0$ estimation for the kernel graphs derived from the values of $a$ and $b$ of France (D30+) with the associated city size distributions (KG, top tow) 
		compared to the graph where these vertices are duplicated 10 times
		(10KG, bottom row),  on the left for strategy~\((P)\) and on the right for strategy~\((U)\).}
	\label{R0_10KG}
\end{figure}

	\begin{figure}[t!]
		\begin{center}
			$R_0$ estimation for various power-law distributions
		\end{center}
		\begin{center}
			\begin{tabular}{lcc}
				&Strategy \((P)\) & Strategy \((U)\)\\
				\rotatebox{90}{ {\footnotesize After 10 infected cities}}
				&\includegraphics[width = 0.42\textwidth]{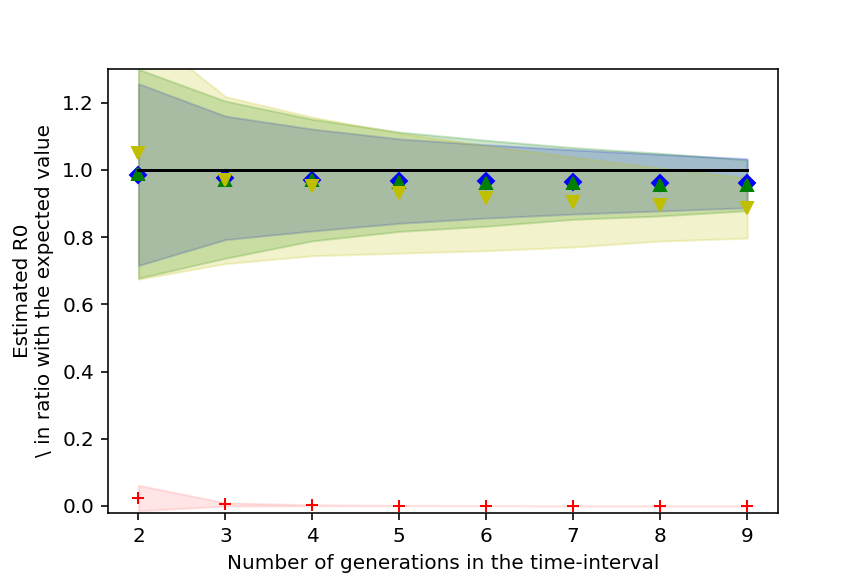}
				&
				\includegraphics[width = 0.42\textwidth]{R0_KG_pI_strP_Comp_Est_varying_phi_T10}
				\\
				\rotatebox{90}{{\footnotesize After 100 infected cities}}
				&\includegraphics[width = 0.42\textwidth]
				{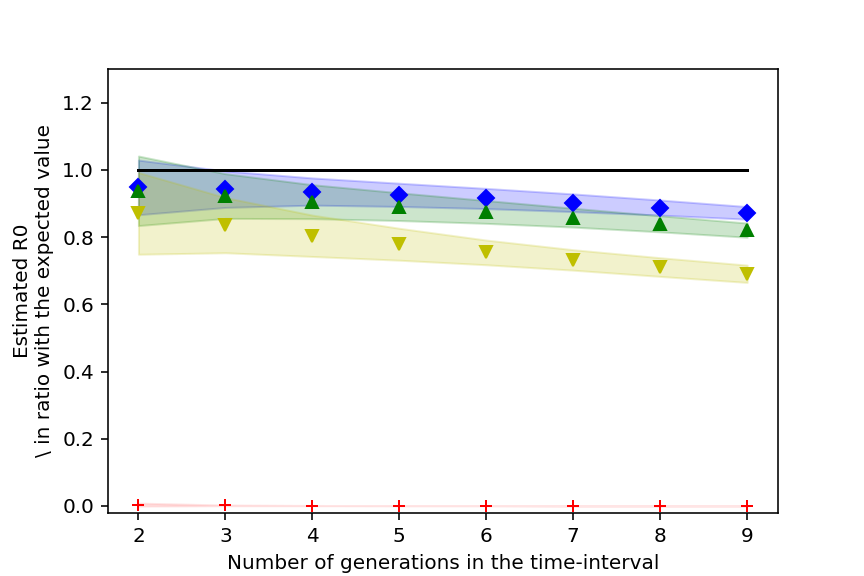}
				&\includegraphics[width = 0.42\textwidth]
				{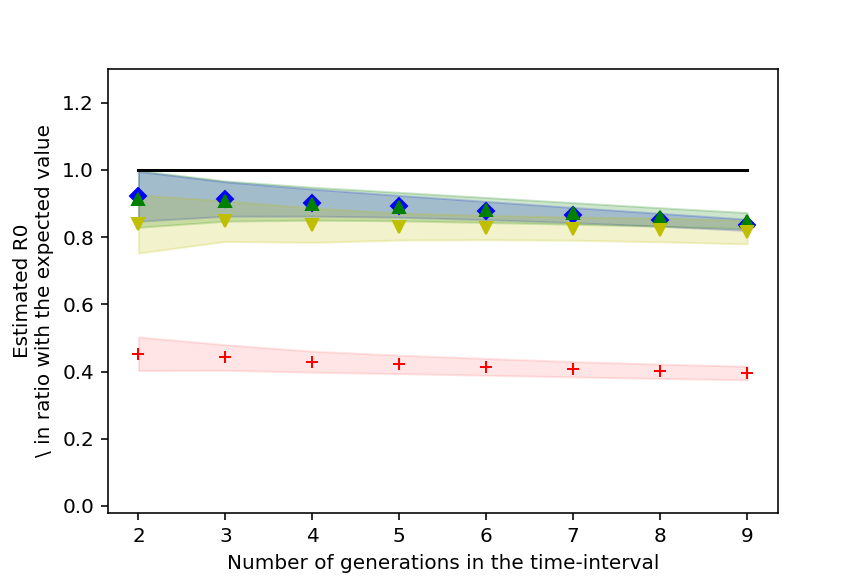}\\
				&\multicolumn{2}{c}{\includegraphics[draft = false, width = 0.80\textwidth]{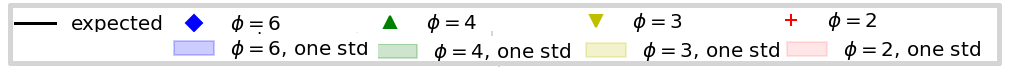}}
			\end{tabular}
		\end{center}
		\caption{$R_0$ estimation for the kernel graphs derived from the values of $a$ and $b$ of France (D30+) comparing city size distributions of $10^4$ cities (of the order of 10KG) given with power-laws of coefficient 2, 3 4 and 6,  on the left for strategy~\((P)\) and on the right for strategy~(U).
			For the top row, the original threshold of 10 cities currently infected is considered, while for the bottom row, this threshold
			is set to 100 cities.}
		\label{R0_varPhi}
	\end{figure}

%\afterpage{\clearpage}

Concerning (i), we anticipate that large cities are quickly hit and then isolated. This strongly reduces the number of secondary infections. To reduce this effect,
we increase the number of cities by a factor 10 (without changing the city size  distribution),
by replicating each city size ten times. 
For both clarity and computational efficiency, 
we produce a kernel graph structure on this extended dataset,
which we abbreviate as ``10KG'',
and compare in Figure~\ref{R0_10KG}
the results of the $R_0$ estimation between the original kernel graph
and the extended kernel graph.
The $R_0$ values 
is in both cases adjusted such that the theoretical infection probability is 0.5 for cities of size $10^5$, see Section \ref{sec:infection probability},
which leads to identical values between these two graphs 
(and for the $k_B$ values as well).
There is a clear improvement with the extended graph,
yet the estimations come close to the theoretical value
only under strategy~\((U)\) for a few generations (then still with large variations between runs).
Furthermore, 
the decline of the estimated value with the number of generations considered
is more strikingly visible. 
This confirms that the rapid establishment of immunity 
hinders the estimation of $R_0$.

Concerning (i) and (ii) we then consider other size distributions 
of $10,000$ cities (of the order associated to $10KG$)
that are randomly sampled according to power-law distributions with various coefficients, namely $\phi= 2, 3, 4$ and $5$.
In Figure~\ref{R0_varPhi},
the corresponding estimations of $R_0$ are compared
between these different distributions, both under strategy~\((P)\) (on the left)
and under strategy~\((U)\) (on the right-panel),
in the upper-panel 
with the original threshold of $10$ infected cities
(in a single generation, for the start of the estimation interval)
and in the lower-panel with a larger threshold of $100$
infected cities.
As in Figure~\ref{R0_10KG}, the $R_0$ values 
are adjusted such that the theoretical infection probability is 0.5
 for cities of size $10^5$.
Since the corresponding values differ between the various sampled distributions,
the estimations are rescaled by the expected value to ease the comparison.
Furthermore, we set the estimated $R_0$-value to 0 for those generations
for which less than 20 replicates (out of the 200 produced)
keep a persistent outbreak over the whole interval of estimation.

We observe that the quality of $R_0$-estimation 
is also very poor with the most heterogeneous distribution, sampled according to a power-law coefficient of $\phi = 2$,
yet much more suitable for the other distributions.
The fluctuations between replicates are still very large when estimation starts with 10  infected cities. With the larger threshold of 100 infected cities, 
these fluctuations are largely reduced, which improves the quality of estimation 
though the estimator is significantly biased downwards.
The increase in the number of generations produces similar effects as the one of the threshold,
though not as large for the considered values.

Finally, concerning (iii),
the role of the specific graph structure does not appear 
to be as significant
as compared to the crucial role
of the tail distribution,
that we evaluated directly in the simplified kernel graph structure.
Recall that
simplifying the graph structure 
does not lead to a significant improvement 
of the estimation, as can be seen in the comparison between the transportation 
graph and the kernel graph (see Figure~\ref{Fig_R0_est_FC}).

\subsubsection{Quality of the analytical estimations
	of infection and outbreak probabilities}
\label{sec_KB_errors}

We present in Figure~\ref{Fig:Convergence}
the convergence rate of the iterated procedures
towards the infection/outbreak probabilities, see Section~\ref{sec:probability trigger outbreak} and \ref{sec:infection probability}.

\begin{figure}[t]
	\begin{center}
		Convergence of the analytical parameters
	\end{center}
	\begin{center}
		\begin{tabular}{lcc}
			&\quad Strategy \((P)\) & \quad Strategy \((U)\)\\
			\rotatebox{90}{{\small Outbreak probability}} \rotatebox{90}{\hcm{0.7} $\eta^k_O,\; \eta^k_I$}
			&\includegraphics[height = 0.17\textheight]{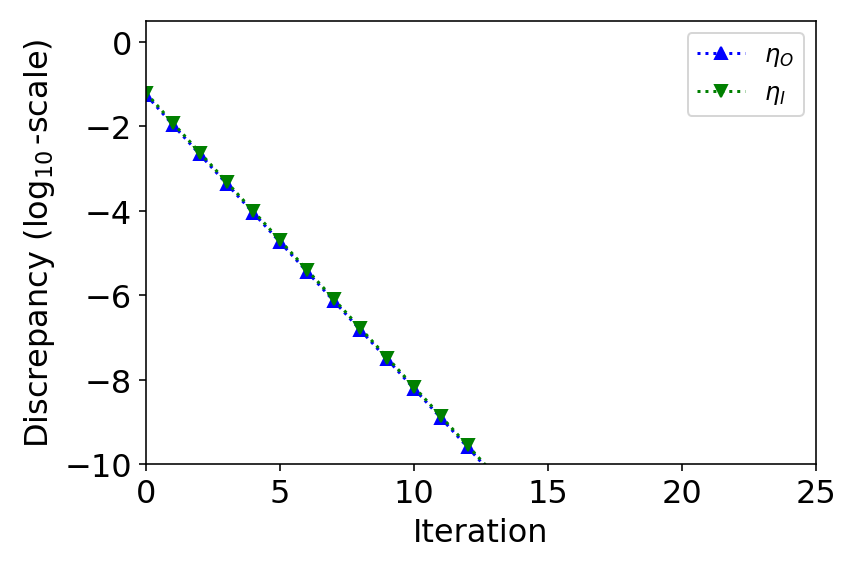}
			&\includegraphics[height = 0.17\textheight]{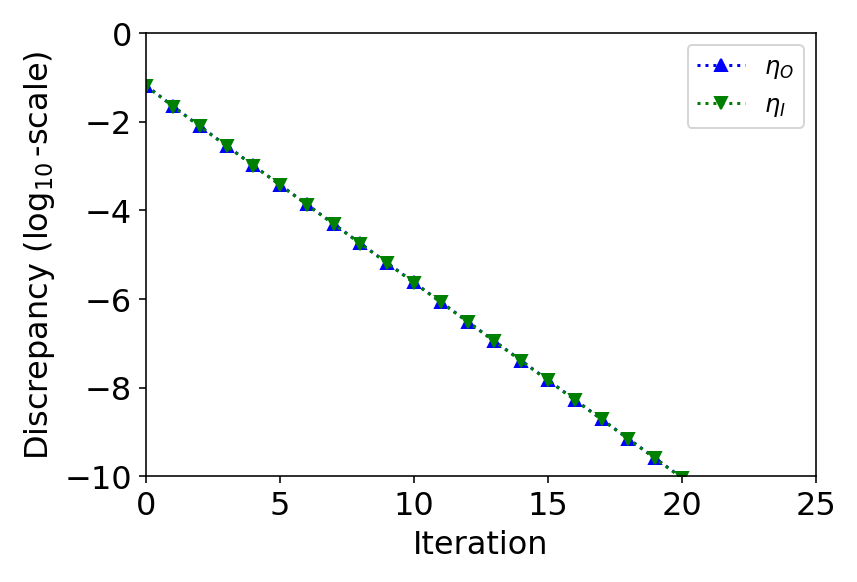}
			\\
			\rotatebox{90}{{\small Infection probability}}
			
			\rotatebox{90}{ \hcm{0.7} $\pi^k_O,\;  \pi^k_I$}
			&\includegraphics[height = 0.17\textheight]{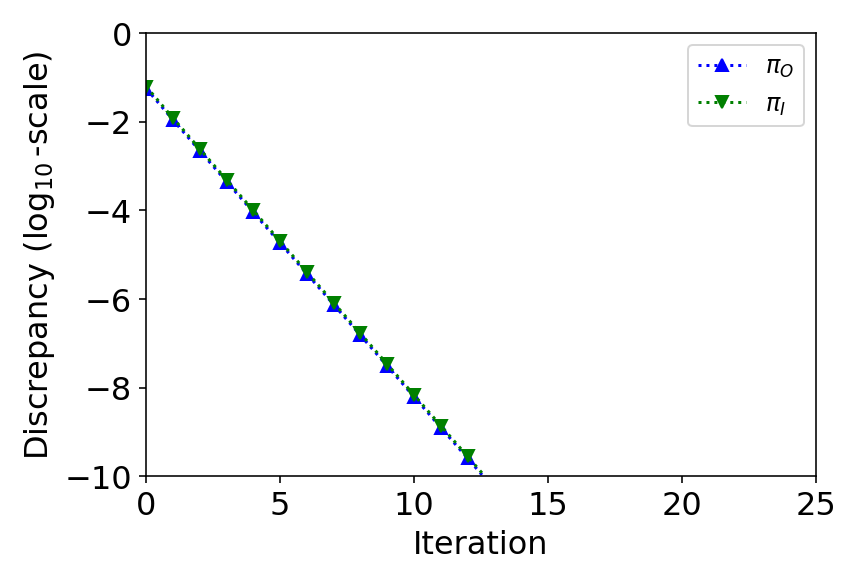}
			&\includegraphics[height = 0.17\textheight]{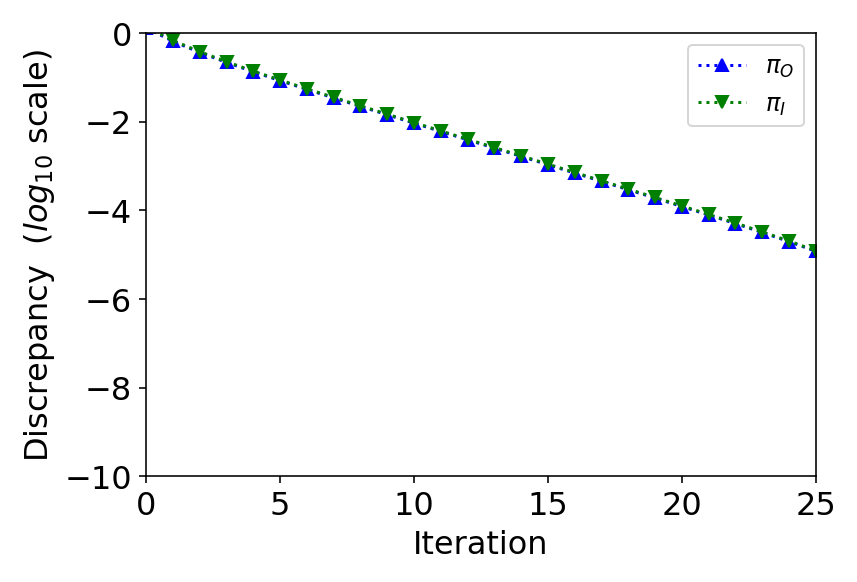}
		\end{tabular}
	\end{center}
	\caption{
		Convergence of $(\eta_O^{k}), (\eta_I^k )$, see \eqref{eta_k_A}, towards $\eta_O$ and $ \eta_I $, see \eqref{eta_A}, as well as of (the analogously defined quantities) $(\pi_O^k)$ and $(\pi_I^k)$ towards  $\pi_O $ and $\pi_I $, see \eqref{piDef}, 
		under strategy~\((P)\) (on the left)
		and under strategy~\((U)\) (on the right-panel). As a city size distribution $\beta$ the empirical size distribution from France (D30+) is chosen.  The upper-panel shows $\log_{10}(|\eta^k_O - \eta^{30}_O|/\eta^{30}_O)$
		as well as the analogous quantity
		relative to $\eta_I $ at the different iterations $k=0,..., 25$, the lower-panel shows the analogous values for $\pi_O$ and $\pi_I$. The constant $k_B$ is chosen such that the theoretical infection probability  (approximated at a large enough iteration) is 0.5. }
	\label{Fig:Convergence}
\end{figure}

According to Lemma 5.6 in \cite{BJR06}, we could estimate $(\eta(x))_x$ iteratively by setting
\begin{equation}\label{eta_k}
	\eta ^{k+1}(x) 
= 1- \exp\Big[  - K_{I,\rA}(x)  \int_0^\infty\eta^{k}(y) \nu_I(dy)
-  K_{O, \rA}(x)\int_0^\infty\eta^{k}(y) \nu_{O, \rA}(dy) \Big]
\end{equation}
and starting with $\eta^0 \equiv 1$.
The functional on the right-hand side is a global contraction 
if certain integrability  conditions are fulfilled, as shown in \cite{BJR06}.

The expression for $\eta$ 
on the r.h.s. of \eqref{survivalprobab} 
is a function of the two unknown parameters
\begin{align}\label{eta_A}
\eta_{I}= \int_0^\infty\eta(y) \nu_{I, \rA}(dy),
\qquad \eta_{O}=\int_0^\infty\eta(y) \nu_{O, \rA}(dy).
\end{align}

We check the performance of the iterative procedure  through the convergence of the associated sequences $(\eta_I^k)_k $ and $(\eta_O^k)_k$
with
\begin{align}
\label{eta_k_A}
    \eta_I^k=\int_0^\infty\eta^k(y) \nu_{I, \rA}(dy),
    \qquad \eta_O^k=\int_0^\infty\eta^k(y) \nu_{O, \rA}(dy),
\end{align} which fulfill the recursions
\begin{align*}
&	\eta^{k+1}_I
	= 1 - \int_0^\infty \exp\Big[  - K_{I, \rA}(x) \cdot \eta^{k}_I
	-  K_{O, \rA}(x) \cdot \eta^{k}_{O}\Big] \nu_{I,\rA}(dx),
	\\
	&\eta^{k+1}_O
	= 1 - \int_0^\infty \exp\Big[  - K_{I, \rA}(x) \cdot \eta^{k}_I
	-  K_{O, \rA}(x) \cdot \eta^{k}_O\Big] \nu_{O, \rA}(dx).
\end{align*}
%For our data these two sequences appear to converge quickly, see Figure~\ref{Fig:Convergence}.

We also estimate $\pi(x)$ iteratively, in analogy to the estimation procedure for $\eta(x)$.
The performance of the iterative procedure 
is again checked through the convergence of the two summary parameters $\pi_O$ and $\pi_I$, 
i.e. defined with $\pi^{0}_O = \pi^{0}_I = 1$ and at step $k\ge 0$:
\begin{align*}
&	\pi^{k+1}_O
	:= 1- \int_0^\infty \exp\Big[  - K_{O, \lA}(x) \cdot \pi^{k}_O
	-  K_{I, \lA}(x) \cdot \pi^{k}_I\Big] \nu_{O,\lA}(dx) ,
	\\
&	\pi^{k+1}_I
	:= 1- \int_0^\infty \exp\Big[  - K_{O, \lA}(x) \cdot \pi^{k}_O
	-  K_{I, \lA}(x) \cdot \pi^{k}_I\Big] \nu_{I, \lA}(dx).
\end{align*}
%In a few steps, it provides a very accurate estimation of $\pi$ (see Figure~\ref{Fig:Convergence}).

For both strategies \((U)\) and \((P)\),
the plots display a very clear trend of convergence (linear in log-scale)
for both $(\eta_O^{k}), (\eta_I^k )$  towards $\eta_O$ and $ \eta_I $, see 
\eqref{eta_k_A} and \eqref{eta_A}, as well as for both (the analogously defined quantities) $(\pi_O^k)$ and $(\pi_I^k)$ towards  $\pi_O $ and $\pi_I $,
see \eqref{piDef}.

\clearpage

\subsubsection{Role of the
	methods for adjusting the virulence parameter and for filtering the datasets}
\label{sec_data_filter}

\begin{figure}
	\begin{center}
		
	\end{center}
	\begin{tabular}{lc@{ \hskip -0.08in }c}
		&\multicolumn{2}{c}{    Mean proportion of people and cities under isolation}\\
		&\quad Strategy \((P)\) & \quad Strategy \((U)\)\\
		\rotatebox{90}{\hcm{1.6} France}
		&  \includegraphics[draft = false, width = 0.44\textwidth]{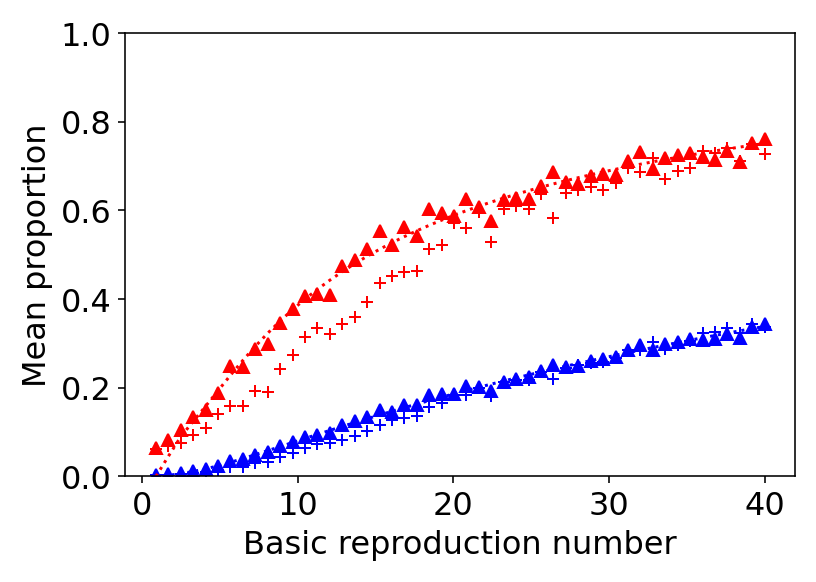}   
		& \includegraphics[draft = false, width = 0.44\textwidth]{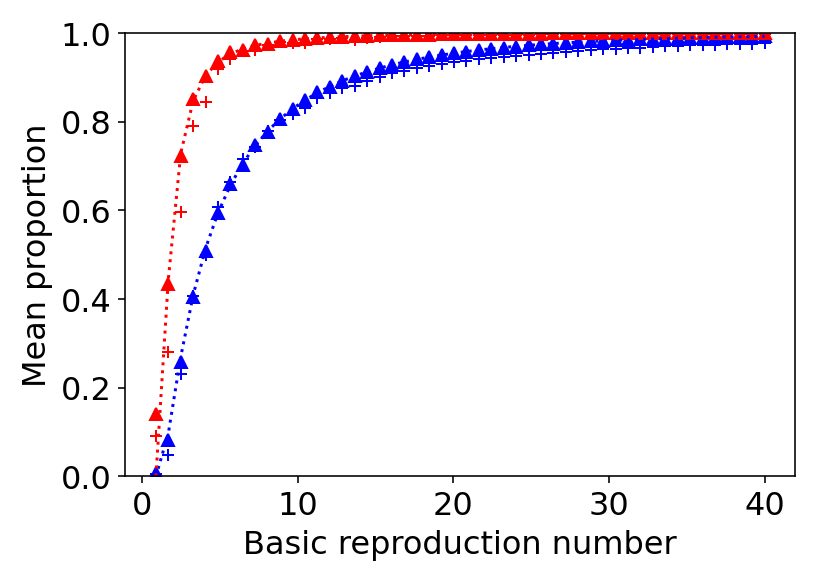} \\
		\rotatebox{90}{\hcm{1.6} Poland}
		&\includegraphics[draft = false, width = 0.44\textwidth]{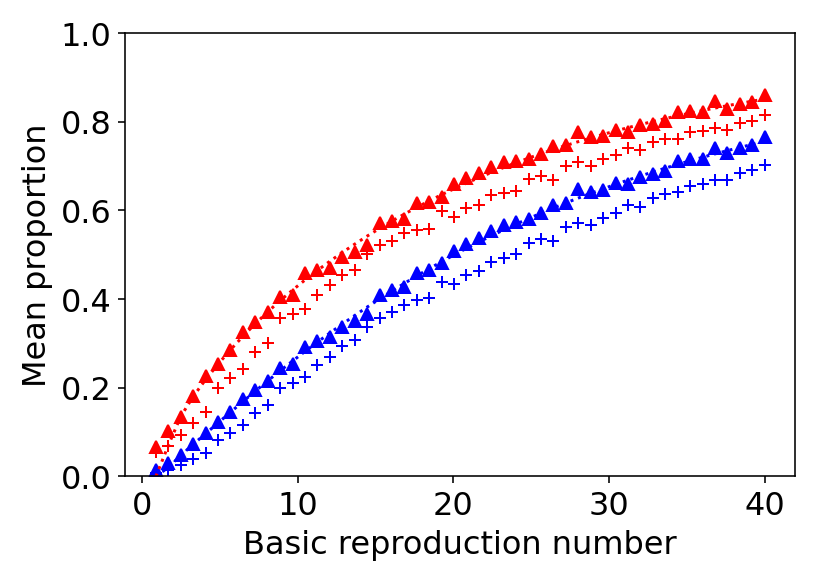} 
		& \includegraphics[draft = false, width = 0.44\textwidth]{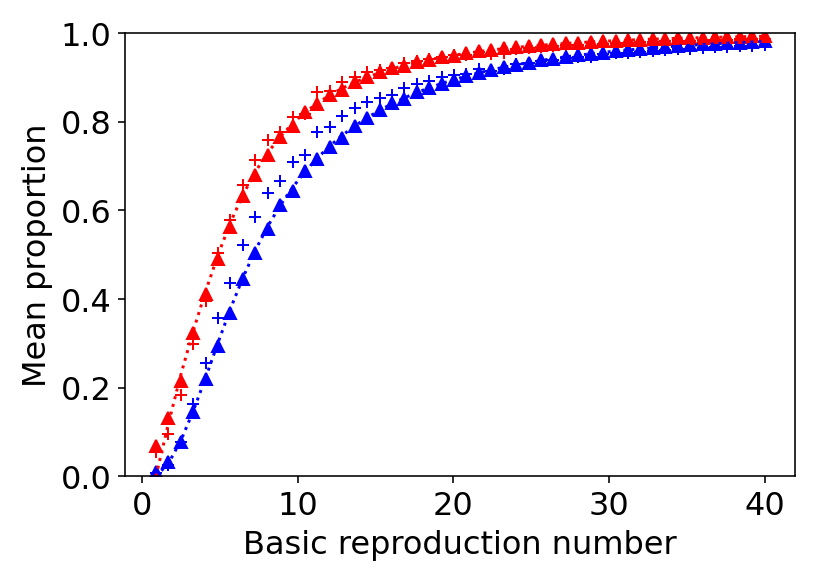}\\
		\rotatebox{90}{\hcm{1.6} Japan}
		&\includegraphics[draft = false, width = 0.44\textwidth]{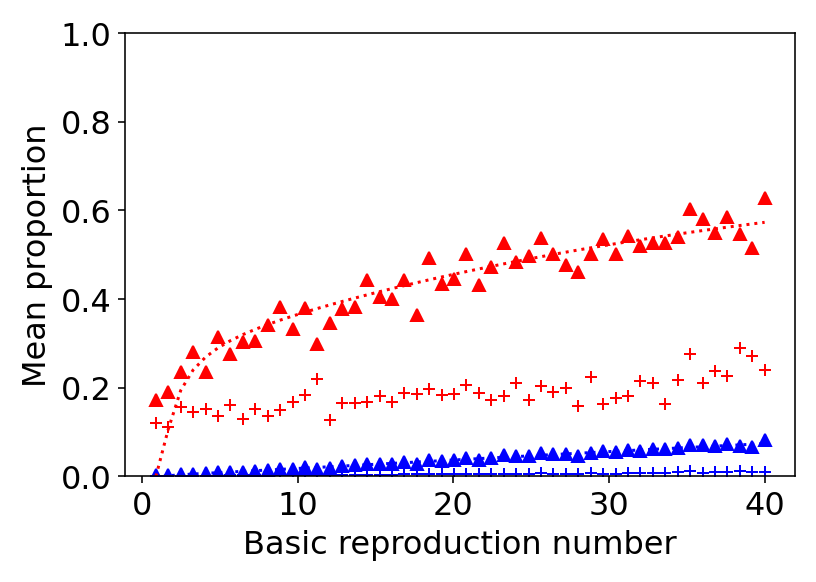} 
		& \includegraphics[draft = false, width = 0.44\textwidth]{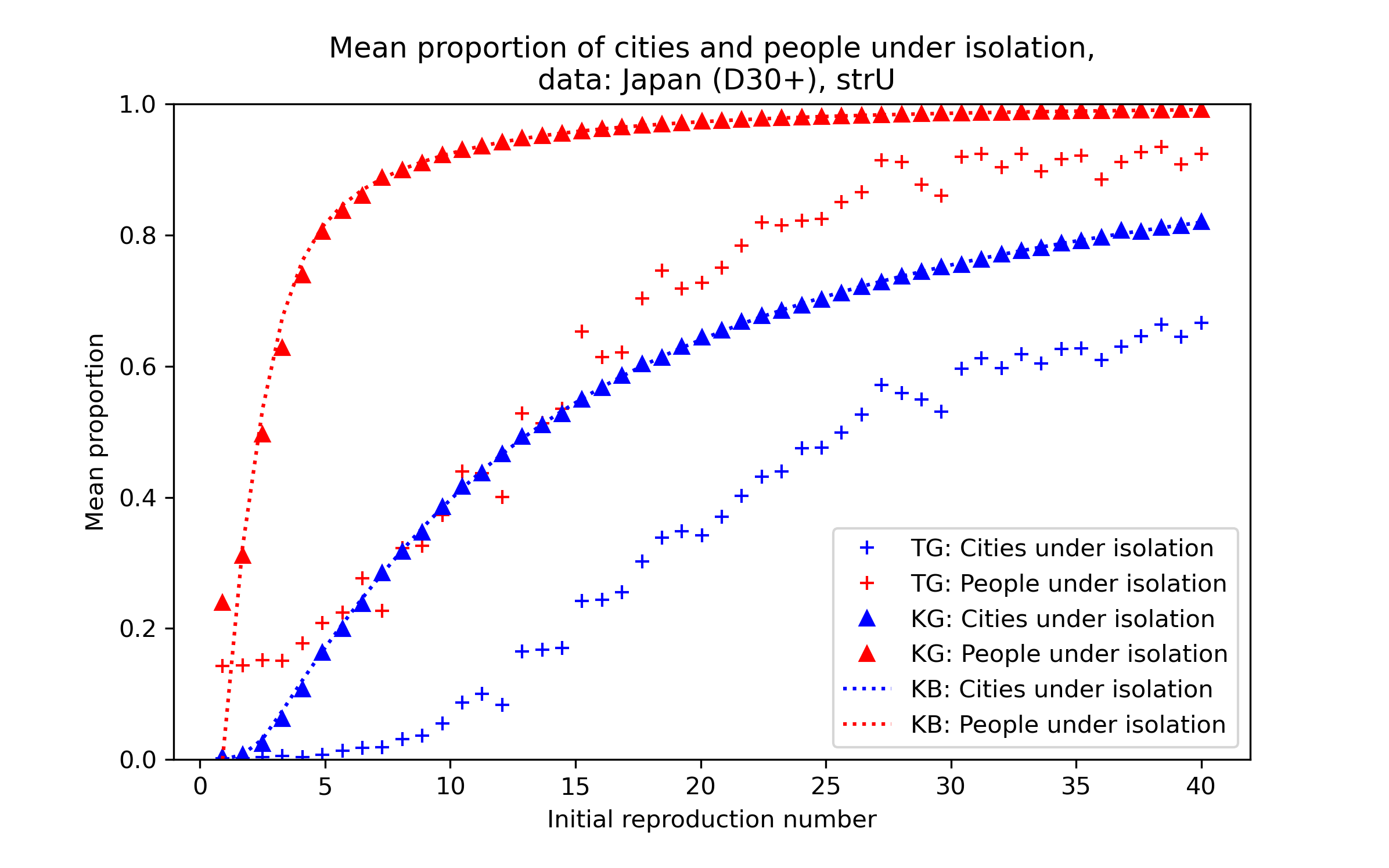}\\
		& \multicolumn{2}{c}{\includegraphics[draft = false, width = 0.80\textwidth]{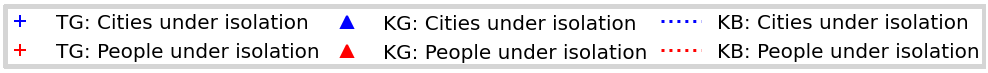}}	
	\end{tabular}
	\caption{
		Comparison of the theoretical and simulated proportion of infected cities and isolated persons depending on the basic reproduction number for mobility data from France, Poland and Japan
		in the upper, middle and lower row, resp.}
	\label{Fig_Final_Incidence}
\end{figure}

Firstly, figure~\ref{Fig_Final_Incidence}
is an alternative way of presenting 
the curves displayed in Figures~\ref{Fig_incidence_ppl} and \ref{Fig_incidence_cit}
by treating each country in a separate panel.

\begin{figure}
	\begin{center}
		
	\end{center}
	\begin{tabular}{l@{\hskip 0.8mm}cc}
		&\multicolumn{2}{c}{ Infection probability}\\
		&\quad Strategy \((P)\) & \quad Strategy \((U)\)\\
		\rotatebox{90}{\hcm{1.6} France}
		&  \includegraphics[draft = false, width = 0.44\textwidth]{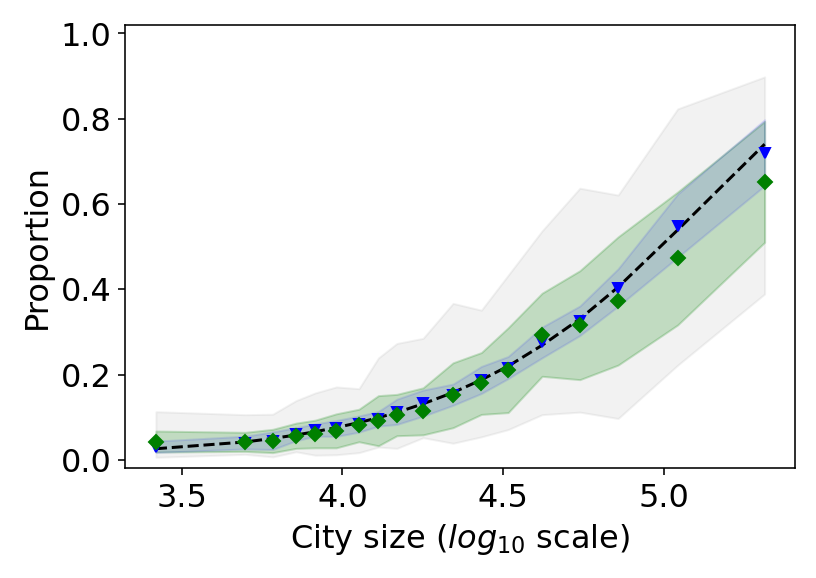}   
		& \includegraphics[draft = false, width = 0.44\textwidth]{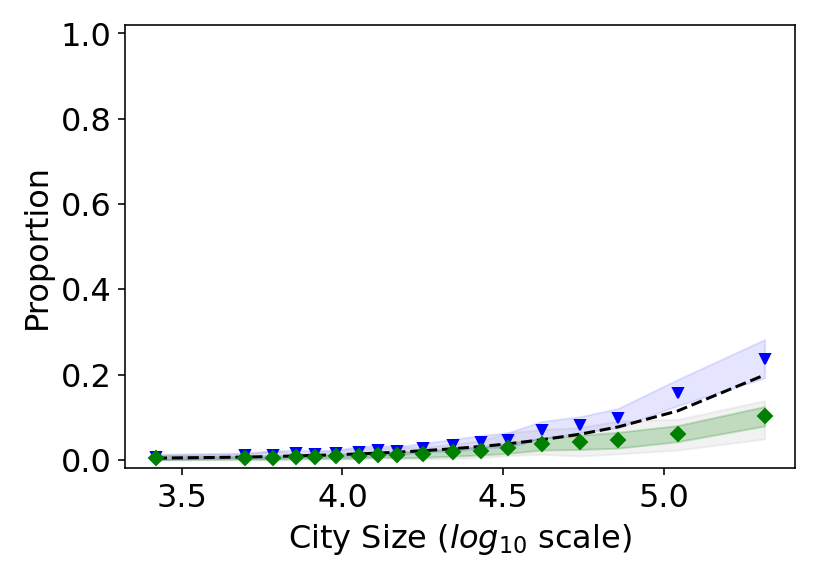} \\
		\rotatebox{90}{\hcm{1.6} Poland}
		&\includegraphics[draft = false, width = 0.44\textwidth]{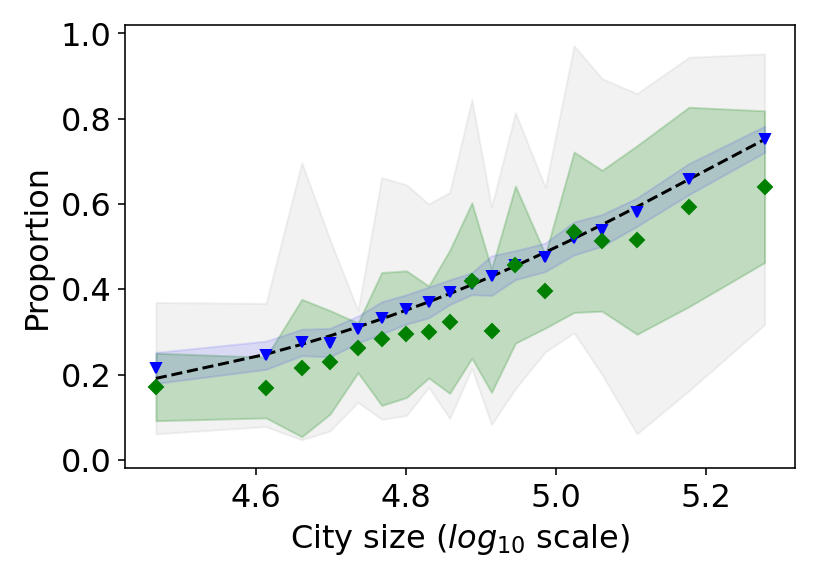} 
		& \includegraphics[draft = false, width = 0.44\textwidth]{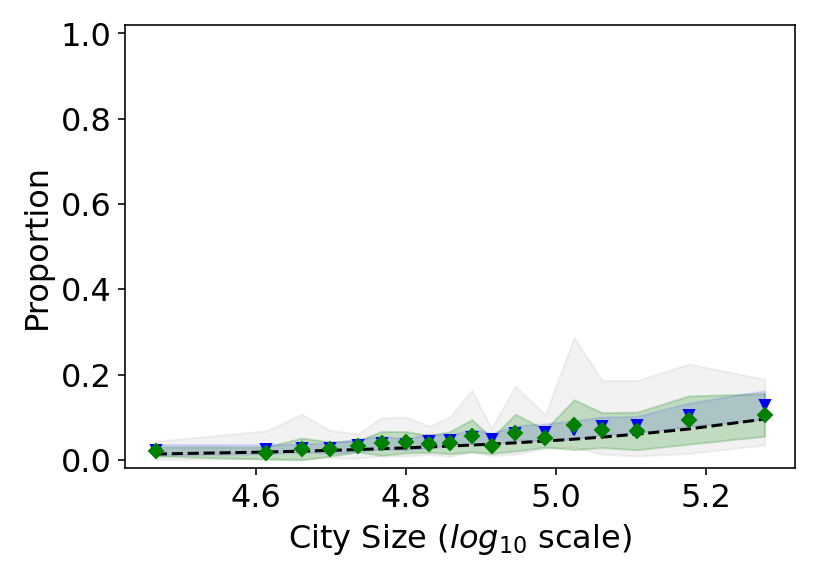}\\
		\rotatebox{90}{\hcm{1.6} Japan}
		&\includegraphics[draft = false, width = 0.44\textwidth]{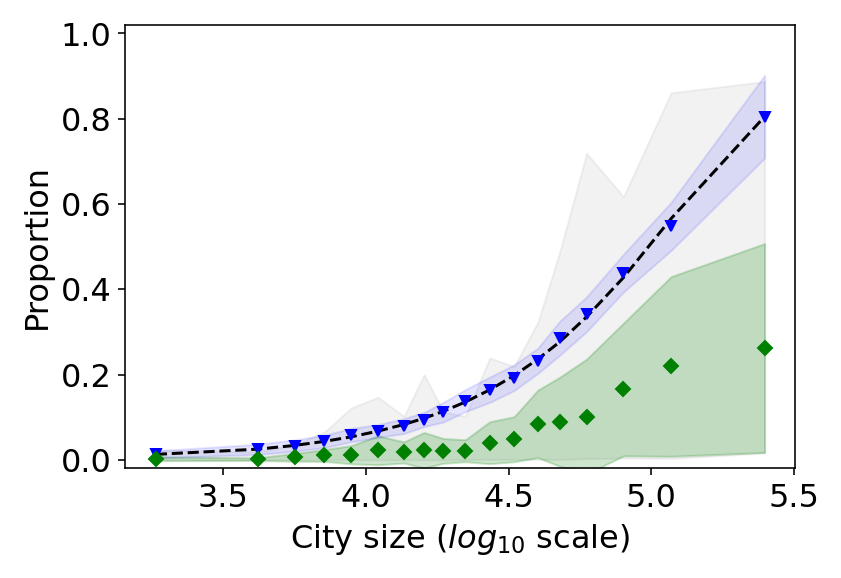} 
		& \includegraphics[draft = false, width = 0.44\textwidth]{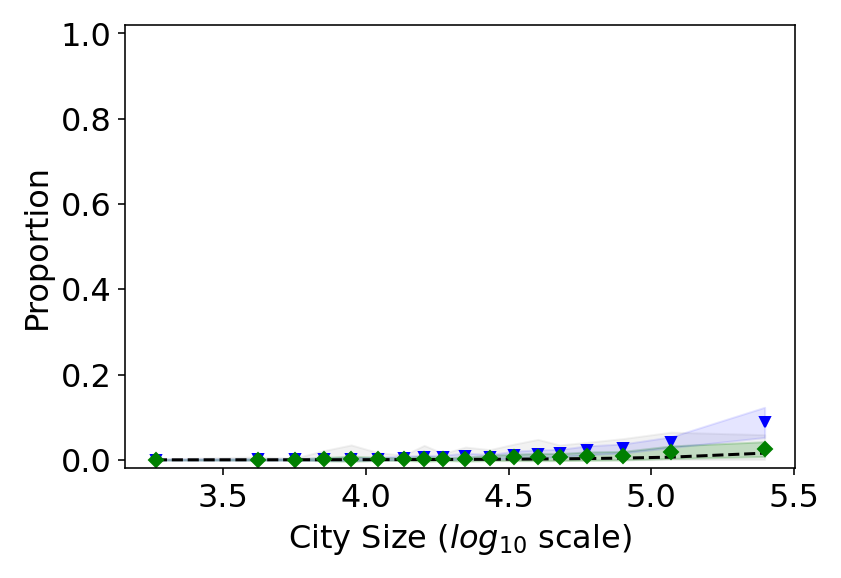}\\
		&\multicolumn{2}{c}{\includegraphics[draft = false, width = 0.80\textwidth]{legend_S2_variation_infected_TKG.png}}	
	\end{tabular}
	\caption{
		Comparison of simulated and theoretical infection probabilities, on the left for strategy~\((P)\) and on the right for strategy~\((U)\) for mobility data from France, Poland and Japan in the upper, middle and lower row, resp. The $R_0$ value is adjusted such that  the theoretical outbreak probability is 0.5
         for cities of size $10^5$, see Section \ref{sec_simulated_infection_and_outbreak_prob}.}
	\label{fig_comp_infection_prob_pO}
\end{figure}

\begin{figure}
	
	\begin{tabular}{l@{\hskip 0.8mm}cc}
		&\multicolumn{2}{c}{Outbreak probability}\\
		&\quad Strategy \((P)\) & \quad Strategy \((U)\)\\
		\rotatebox{90}{\hcm{1.6} France}
		&  \includegraphics[draft = false, width = 0.44\textwidth]{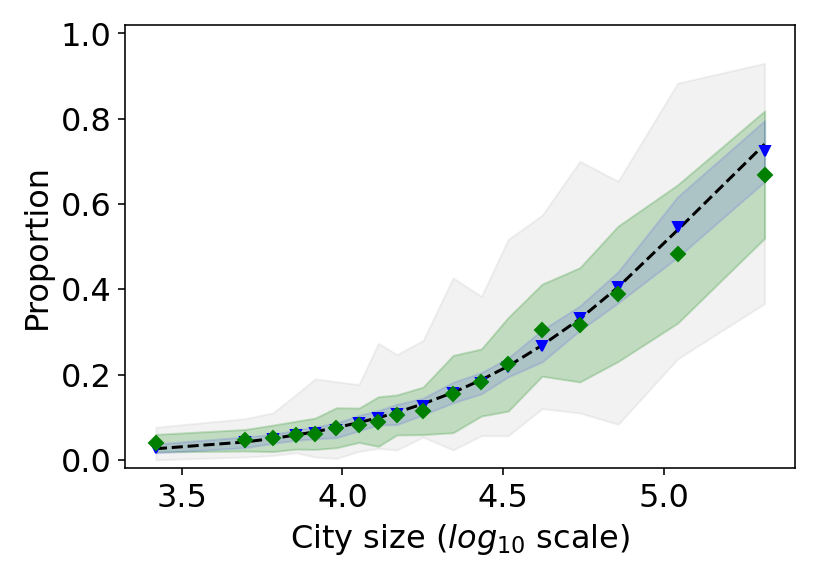}   
		& \includegraphics[draft = false, width = 0.44\textwidth]{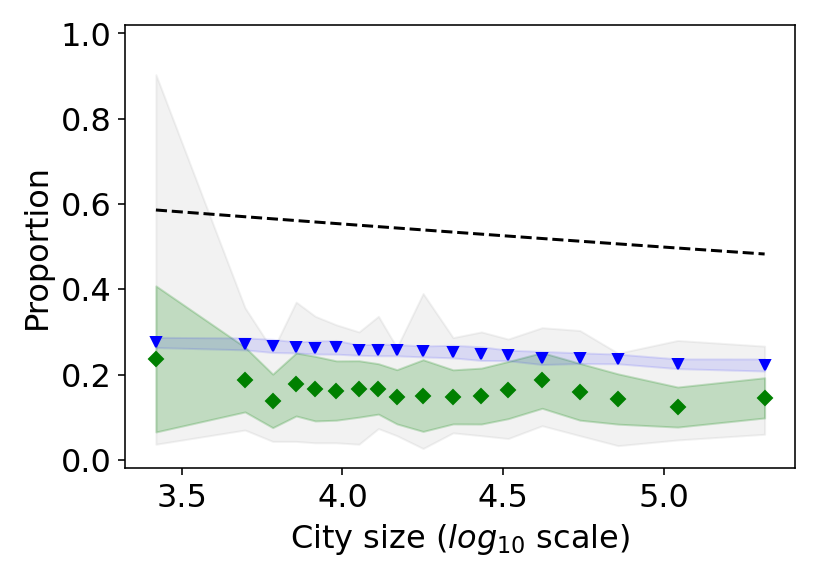} \\
		\rotatebox{90}{\hcm{1.6} Poland}
		&\includegraphics[draft = false, width = 0.44\textwidth]{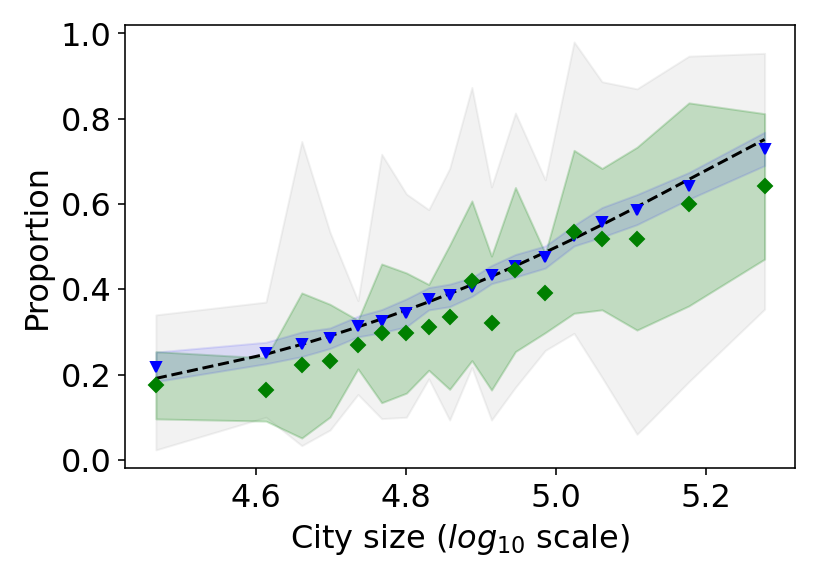} 
		& \includegraphics[draft = false, width = 0.44\textwidth]{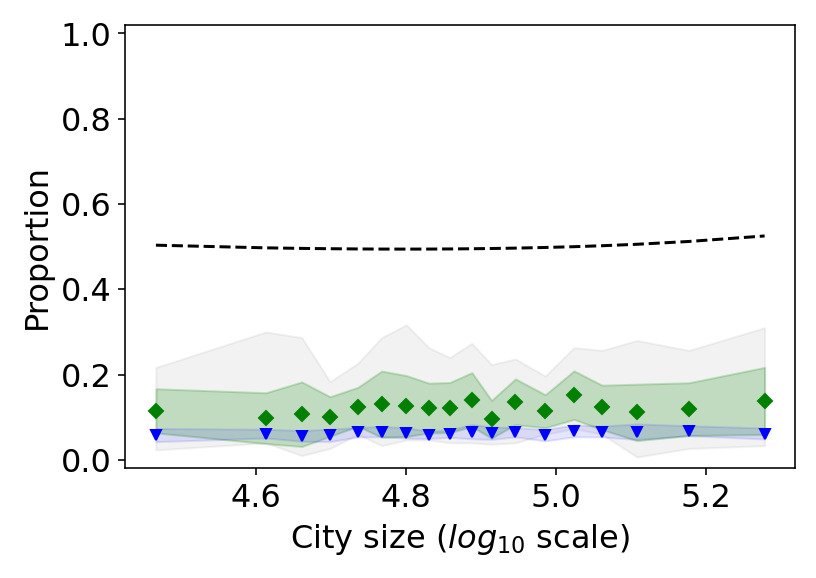}\\
		\rotatebox{90}{\hcm{1.6} Japan}
		&\includegraphics[draft = false, width = 0.44\textwidth]{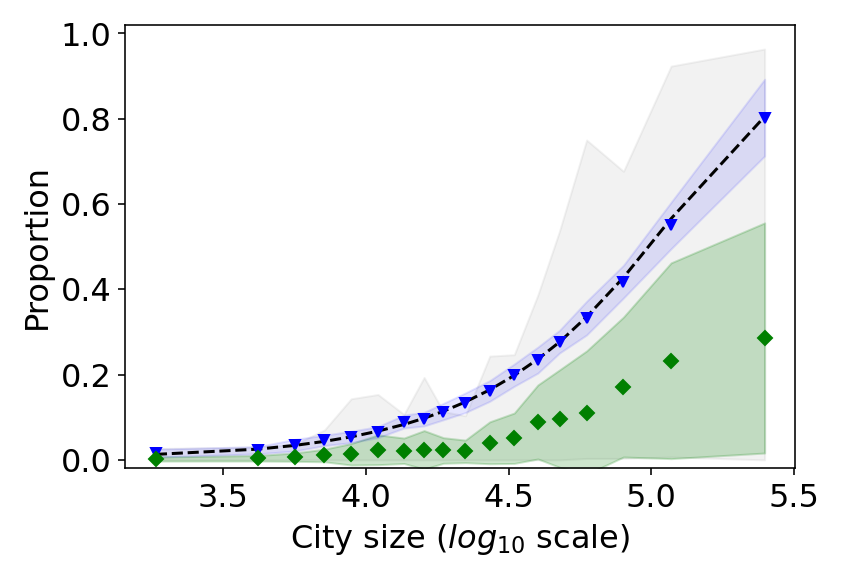} 
		& \includegraphics[draft = false, width = 0.44\textwidth]{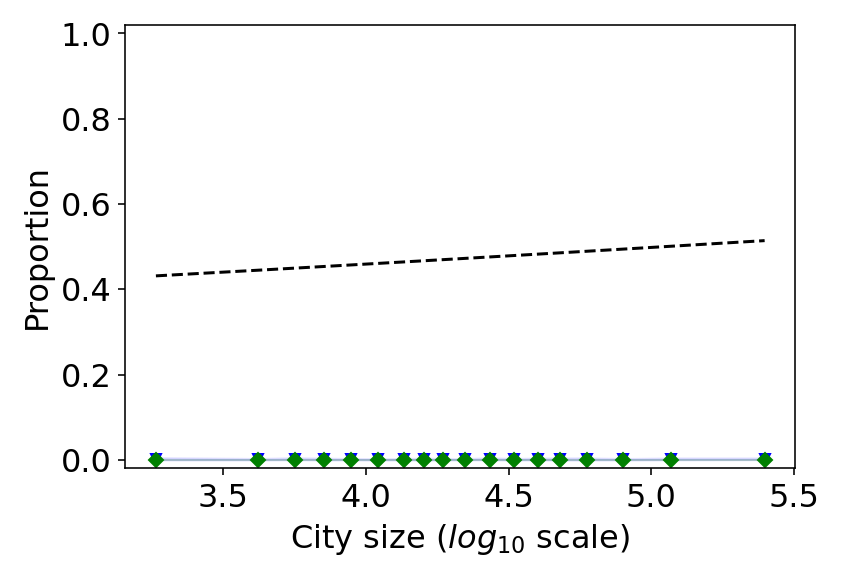}\\
		&\multicolumn{2}{c}{\includegraphics[draft = false, width = 0.80\textwidth]{legend_S2_variation_infected_TKG.png}}	
	\end{tabular}
	\caption{Comparison of simulated and theoretical outbreak probabilities, on the left for strategy~\((P)\) and on the right for strategy~\((U)\) for mobility data from France, Poland and Japan in the upper, middle and lower row, resp. The $R_0$ value is adjusted such that  the theoretical outbreak probability is 0.5
     for cities of size $10^5$, see Section \ref{sec_simulated_infection_and_outbreak_prob}}
	\label{fig_comp_outbreak_prob_pO}
\end{figure}

Figures~\ref{fig_comp_infection_prob_pO}
and \ref{fig_comp_outbreak_prob_pO}
complement 
Figures~\ref{Fig_Compare_Analytical_PI_Empirical_PI} and 
\ref{Fig_Compare_Analytical_PO_Empirical_PO}
with a focus on the infection and the outbreak probabilities, respectively.
Both strategies \((U)\) and \((P)\)  and the three countries France, Poland and Japan are again considered with the distance restriction D30+, recall Section~\ref{Sec:MobilityData}.
The difference to Figures~\ref{Fig_Compare_Analytical_PI_Empirical_PI} and 
\ref{Fig_Compare_Analytical_PO_Empirical_PO}
is that the $R_0$ value,
through the parameter $k_B$,
is adjusted according to the theoretical outbreak probability (rule~$pO$)
instead of the theoretical infection probability (rule~$pI$) (the probability is set equal to $0.5$ 
for a city of size $10^5$). 

We 
make the same observations for these Figures~\ref{fig_comp_infection_prob_pO}
and~\ref{fig_comp_outbreak_prob_pO}
as for the Figures~\ref{Fig_Compare_Analytical_PI_Empirical_PI} and~
\ref{Fig_Compare_Analytical_PO_Empirical_PO}:
(i) there is a relatively good fit for Poland except for the outbreak probability under strategy~\((U)\),
consistent though less optimal compared to 
France;
(ii)
the fit is
poor for Japan, simply less visible under the rule~$pO$ for the infection probability under strategy~\((U)\) because the corresponding values are very close to 0.

However, even though the fit is similar under rule $pO$ and $pI$, we would recommend a fit according to rule $pI$, since the outbreak probability is not fitted well under strategy~\((U)\).
\medskip 

In the following Figures~\ref{Fig:Compare PI v4}-\ref{Fig:Compare PI v9}
we show
how the restriction on the distance affects 
the fits of simulated and theoretical
infection and outbreak probabilities.
The plots associated with datasets
where no distance restrictions were applied (i.e.~with the datasets D1+)
are shown 
in Figures~\ref{Fig:Compare PI v4}, \ref{Fig:Compare PI v5}
and \ref{Fig:Compare PI v6},
while 
those with a distance restriction below 50km
(i.e. with the datasets D50+)
are shown in Figures~\ref{Fig:Compare PI v7}, \ref{Fig:Compare PI v8}
and \ref{Fig:Compare PI v9},
for respectively France, Poland and Japan.
For reasons explained just above, we present only the results obtained under rule~$pI$, and actually we see already under this rule that the 
theoretical outbreak probability
is not fitted well  either under strategy~\((U)\)
with these datasets D1+ and D50+.

Considering the datasets D50+ 
in comparison to the datasets D30+ does not lead qualitatively to different conclusions.
Especially for France and for Poland, good fits are preserved (with only a slight deterioration) 
%for the fits
for scenarios that have already shown  
good performances for D30+,
recall Figures~\ref{Fig_Compare_Analytical_PI_Empirical_PI} and~\ref{Fig_Compare_Analytical_PO_Empirical_PO}.

For Japan,
the fit is 
similarly poor as for D30+.
Of note are also the levels of fluctuations, which are significantly larger than for the corresponding kernel graphs,
similarly for D50+ as for D30+.

Considering the datasets D1+ instead of the datasets D30+ 
do generally lead to fits that are not satisfying. 
For France and Poland the general trend is preserved with an however much poorer fit especially for large cities.

This observation confirms our original expectation that the restriction on relatively long-distance travels is essential for infection dynamics not to be blurred by spatially correlated short-distance travels and to characterize the potential attraction effects of the largest cities.
The comparison with the case of the D1+ dataset exemplifies that a very specific structure of the transportation graph is needed for the prediction with the kernel graph to inform about this spatial model. The heterogeneity in the city size distribution is not the only factor involved. The restriction on distant infection events between cities in this synthetic model isolates a very specific contribution of migrations that is captured both by the D30+ and D50+ datasets. We did not check 
larger values of distance restrictions because we do not see any specific interpretation of such choices, contrary to around 30-50km distances. The reduction in the migration frequency beyond these distances was previously noted in studies of work-mobility \cite{CommuterDistance}.

The difference between D30+ and D50+ on the one hand and D1+ on the other hand is also a reminder that the infection probability that is inferred (especially for D30+ and D50+) only corresponds to the direct aftermath of the first wave, that is to be followed and strengthened through much more local spread of the disease.
Note that the available transportation matrix (even for D1+) is not that reliable at predicting short-distance travels, while we expect the long-distance travels to be more accurately reflected by this work-related mobility. This distinction in the reliability of the data is even more pronounced in times where isolation strategies are applied preventively, due to the spontaneous change in migration behavior that is likely to happen.

There are several reasons 
that lead to a much worse fit of D1+ data in comparison to D30+ or even D50+ data.
Notably, we observe a significant change of the attractiveness coefficient $a$ as well as the emissivity coefficient $b$ for D1+ data for France and Poland (while it is conserved for Japan). For short-distance travels, proximity of cities plays a stronger role than the sizes of the cities. The kernel graph model and in particular the coefficients $a$ and $b$ do therefore
not accurately capture the dynamics of local spread.
Nonetheless regarding the infection probability,
the main observation is that the probability 
derived from the transportation graph (TG)
is in general lower than the estimate derived from the analytical formula,
the latter being similar to the one 
derived from the kernel graph (KG).
This discrepancy could actually be largely compensated by making 
the $R_0$ value larger for TG, 
then with possibly very similar relations between infection probabilities and city sizes
(figures not shown for both strategy~\((U)\) and~\((P)\), for Poland
and to a lesser extent for France).
This hints at the fact 
that  a dumping effect 
is produced
due to the spatial correlations induced by short-distance infection events,
as compared to the branching approximation.
In practice, the effective $R_0$ value appears to be reduced 
when comparing the estimations derived from TG  
to the analytical ones.

For Japan, 
the fit is not good for any of the considered distance restrictions,
which makes it more delicate to interpret the observed discrepancies.
Nonetheless, 
we can say that the spread is much reduced on the transportation graph as compared to the branching graph
(agreeing well with the kernel graph situation  also for Japanese data).
Regarding the relation between the mean proportion of people under isolation as a function of the basic reproduction number
in Figure~\ref{Fig:Compare PI v6},
there is presumably a significant contribution of
the choice of the initially infected city 
(according to the distribution $\nu_{O, \rA}$).
Given that the largest unit in Japan comprises around 28\% of the whole population and is likely to be this first choice, 
the relatively high value of people under isolation 
is reasonable even when only a tiny fraction of cities is under isolation.

Since we observe that the proportion of people under isolation 
follows a similar curve as the proportion of cities under isolation with a relatively constant difference under strategy~\((U)\) and~\((P)\) for the D1+ dataset,
we conjecture that besides this initially infected city,
the infection probability is not depending much on the city size.
This hints at a very strong effect of the spatial correlations
in the epidemics derived from this Japanese dataset. 
This is in contrast to the prediction derived from the branching approximation,
where we observe a much larger initial slope for the proportion of people under isolation than for the one of cities under isolation.

\FloatBarrier

\begin{figure}
	\begin{center}
		France D1+
	\end{center}
	\begin{tabular}{l@{\hskip 0.8mm}cc}
		&Strategy \((P)\)& Strategy \((U)\)\\
		\rotatebox{90}{ \hcm{0.9} {\small Proportion}} \rotatebox{90}{ \hcm{0.6} {\small  under isolation}}
		&\includegraphics[width = 0.43\textwidth]{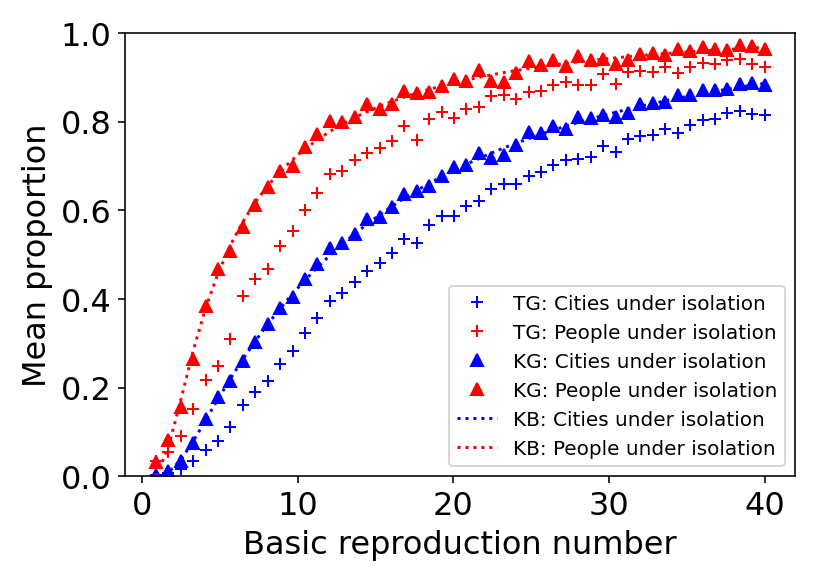}
		&\includegraphics[width = 0.43\textwidth]{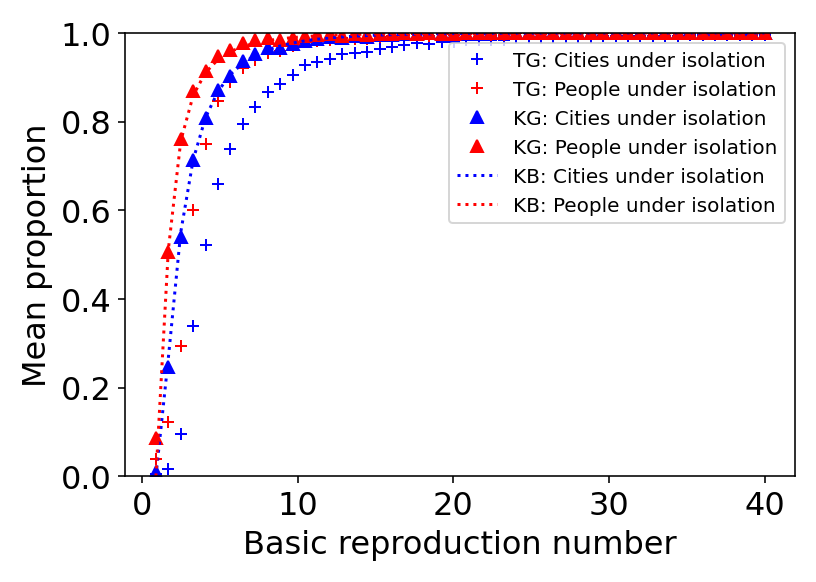}\\
		\rotatebox{90}{\quad {\small Infection probability}}
		&\includegraphics[width = 0.43\textwidth]{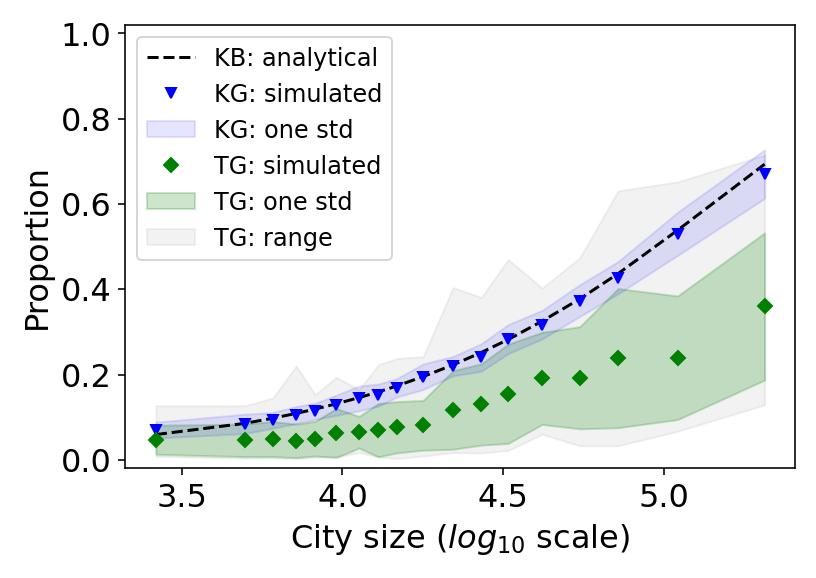}
		&\includegraphics[width = 0.43\textwidth]{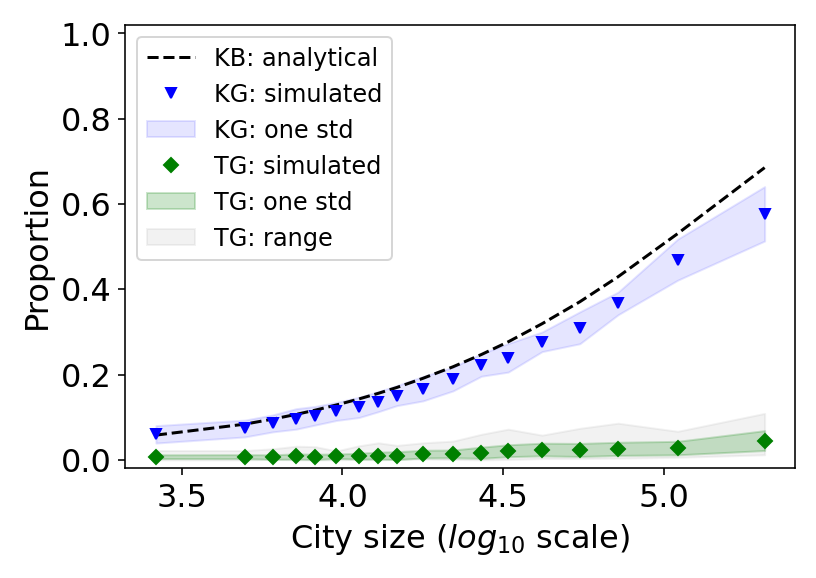}
		\\
		\rotatebox{90}{\quad {\small Outbreak probability}}
		&
		\includegraphics[width = 0.43\textwidth]{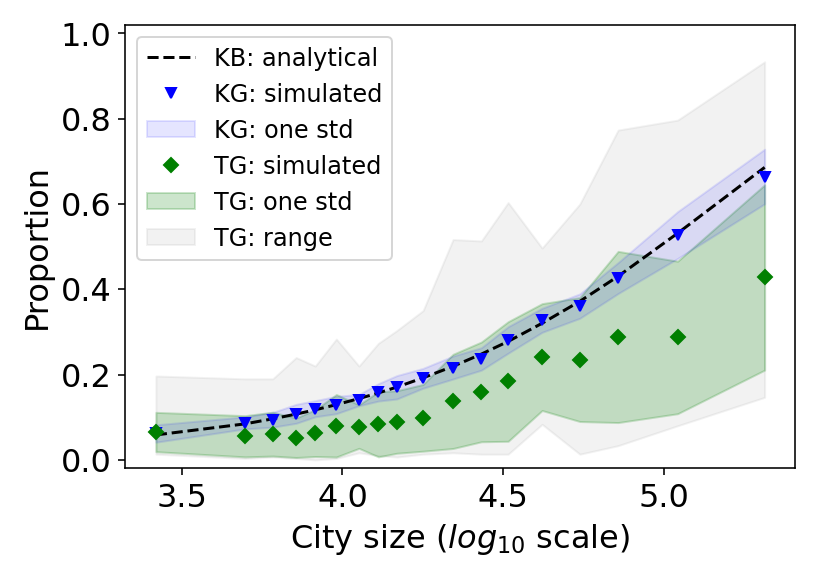}
		&\includegraphics[width = 0.43\textwidth]{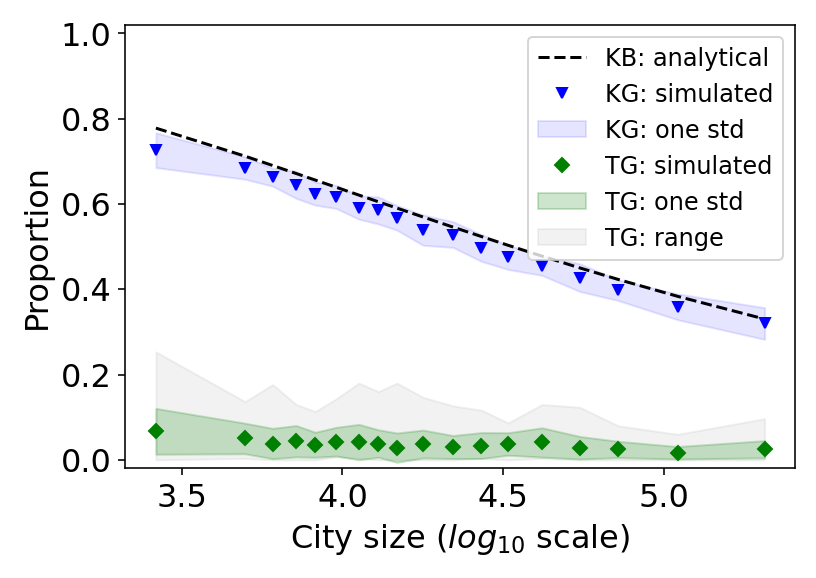}
	\end{tabular}
	\caption{Comparison of infected cities and isolated cities as a function of the theoretical $R_0$ value (top row)
		of simulated and theoretical infection (middle row) and outbreak probabilities (bottom row), on the left for strategy~\((P)\) and on the right for strategy~\((U)\), for France
		without restriction on the distance (data abbreviated as D1+). The $R_0$ value is adjusted such that  the theoretical infection probability is 0.5  for cities of size $10^5$,  see Section \ref{sec_simulated_infection_and_outbreak_prob}.
	}\label{Fig:Compare PI v4}
\end{figure}

\begin{figure}
	\begin{center}
		France D50+
	\end{center}
	\begin{tabular}{l@{\hskip 0.8mm}cc}
		&Strategy \((P)\)& Strategy \((U)\)\\
		\rotatebox{90}{ \hcm{0.9} {\small Proportion}} \rotatebox{90}{ \hcm{0.6} {\small  under isolation}}
		&\includegraphics[width = 0.43\textwidth]{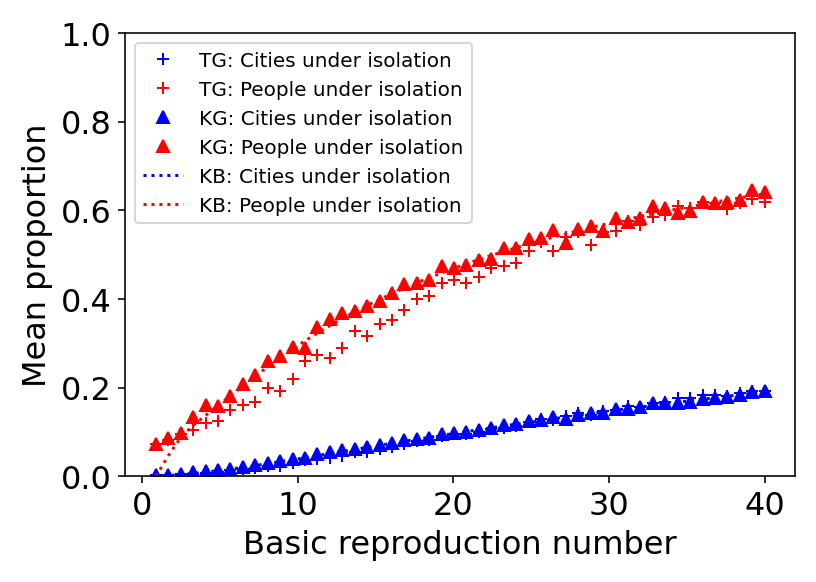}
		&\includegraphics[width = 0.43\textwidth]{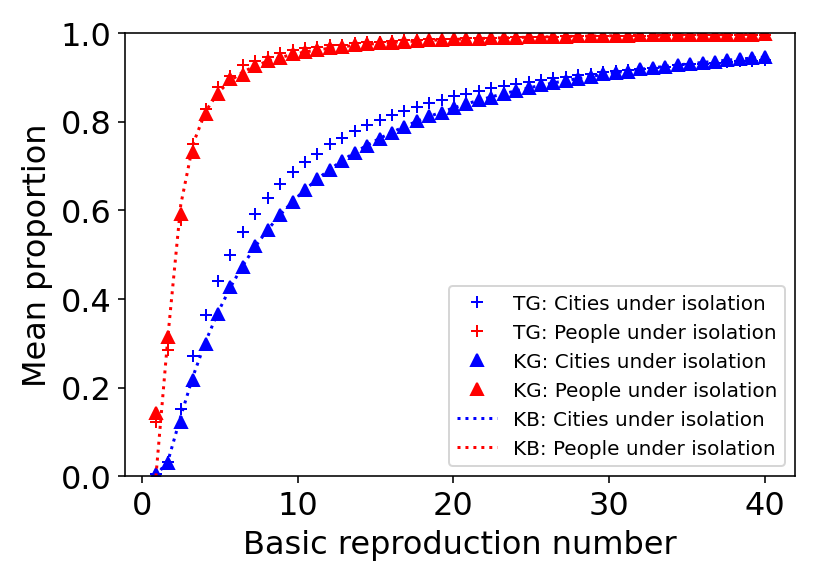}
		\\
		\rotatebox{90}{\quad {\small Infection probability}}
		&\includegraphics[width = 0.43\textwidth]{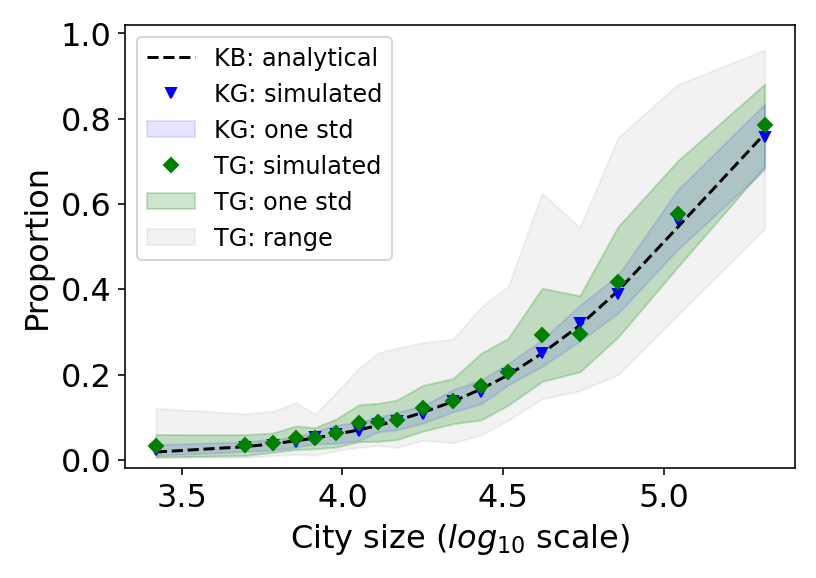}
		&\includegraphics[width = 0.43\textwidth]{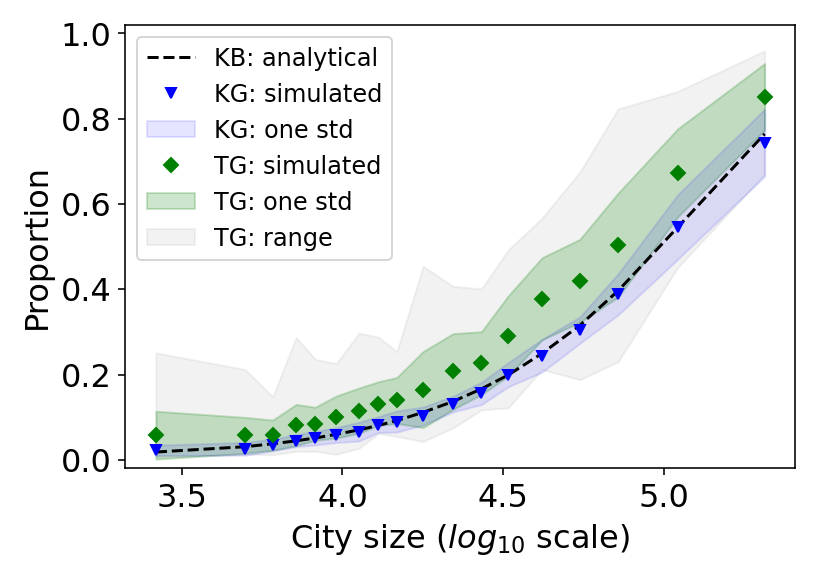}
		\\
		\rotatebox{90}{\quad {\small Outbreak probability}}
		&\includegraphics[width = 0.43\textwidth]{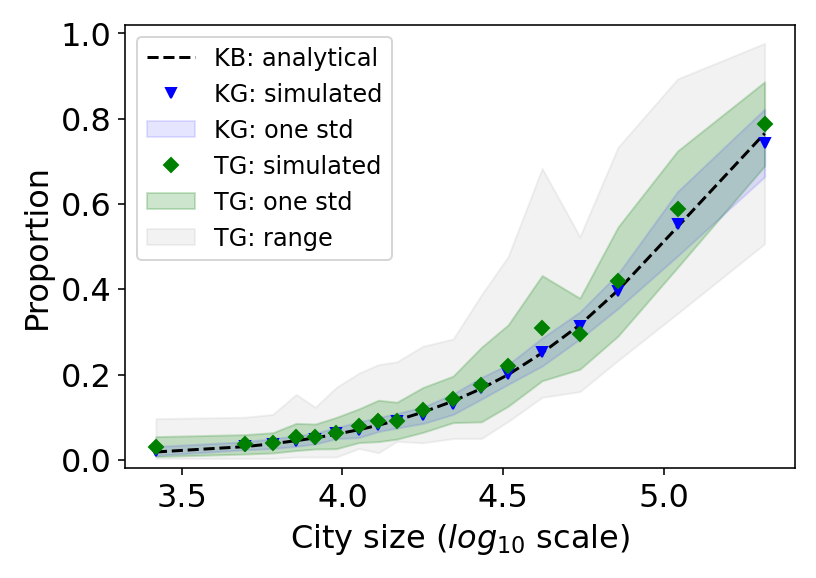}
		&\includegraphics[width = 0.43\textwidth]{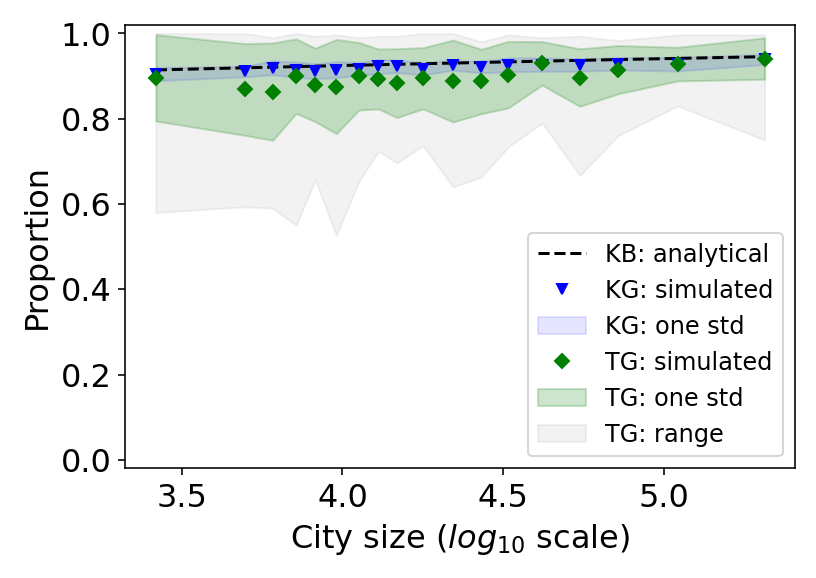}
	\end{tabular}
	\caption{Comparison of infected cities and isolated cities as a function of the theoretical $R_0$ value (top row)
		of simulated and theoretical infection (middle row) and outbreak probabilities (bottom row), on the left for strategy~\((P)\) and on the right for strategy~\((U)\), for France
		with a restriction on the distance of 50km (data abbreviated as D50+). The $R_0$ value is adjusted such that   the theoretical infection probability is 0.5
         for cities of size $10^5$,  see Section \ref{sec_simulated_infection_and_outbreak_prob}.
	}
	\label{Fig:Compare PI v7}
\end{figure}

\begin{figure}
	\begin{center}
		Poland D1+
	\end{center}
	\begin{tabular}{l@{\hskip 0.8mm}cc}
		&Strategy \((P)\)& Strategy \((U)\)\\
		\rotatebox{90}{ \hcm{0.9} {\small Proportion}} \rotatebox{90}{ \hcm{0.6} {\small  under isolation}}
		&\includegraphics[width = 0.43\textwidth]{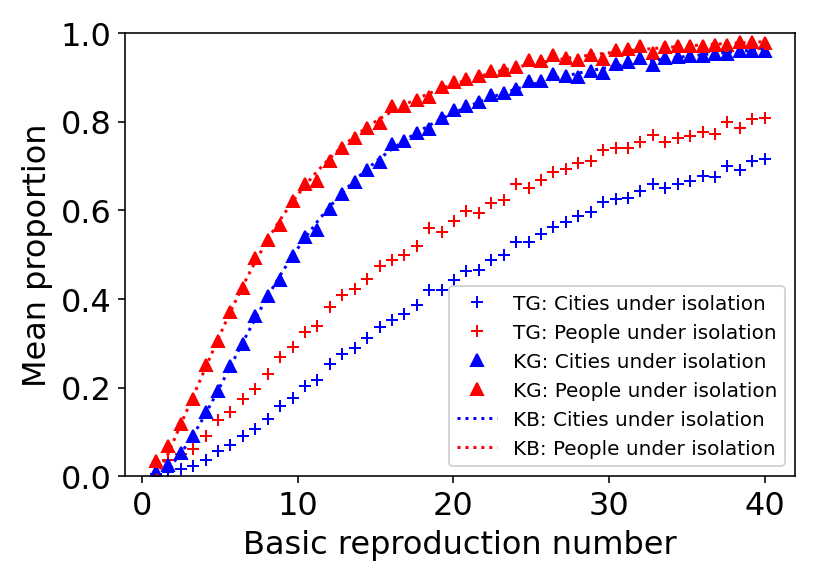}
		&\includegraphics[width = 0.43\textwidth]{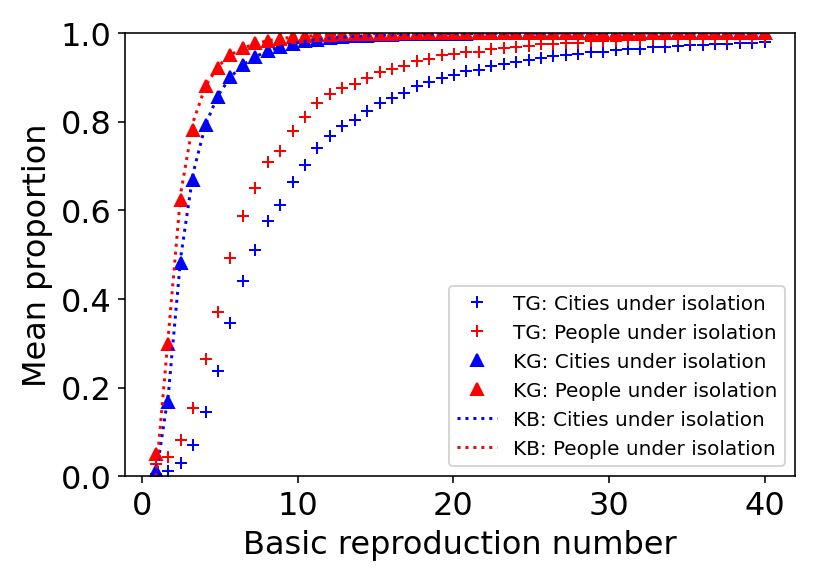}\\
		\rotatebox{90}{\quad {\small Infection probability}}
		&\includegraphics[width = 0.43\textwidth]{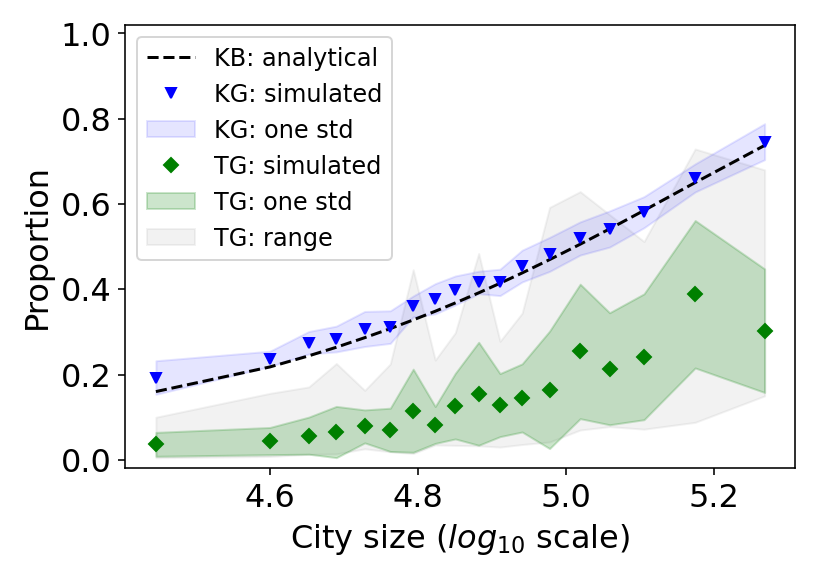}
		&\includegraphics[width = 0.43\textwidth]{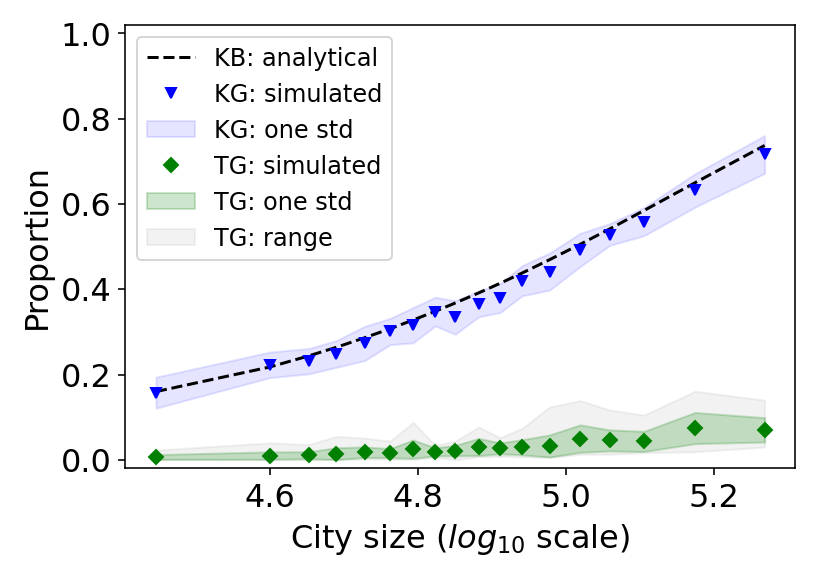}\\
		\rotatebox{90}{\quad {\small Outbreak probability}}
		&\includegraphics[width = 0.43\textwidth]{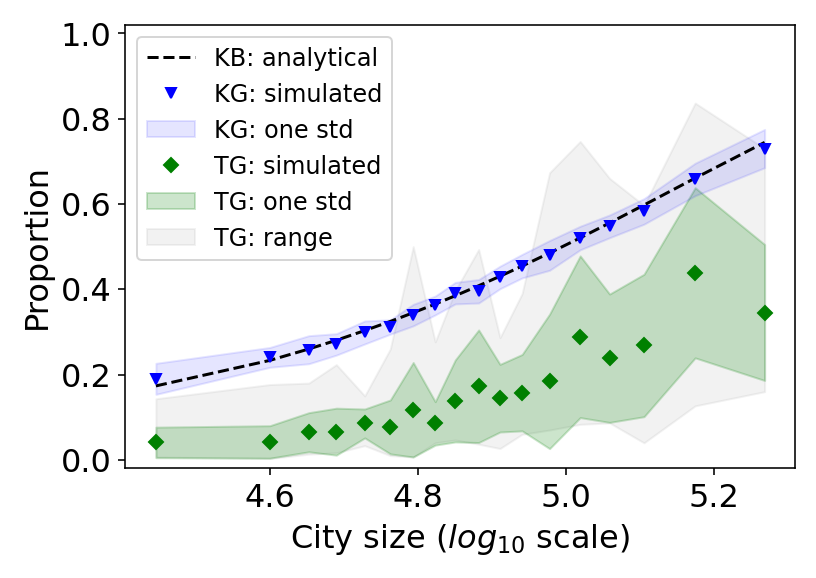}
		&\includegraphics[width = 0.43\textwidth]{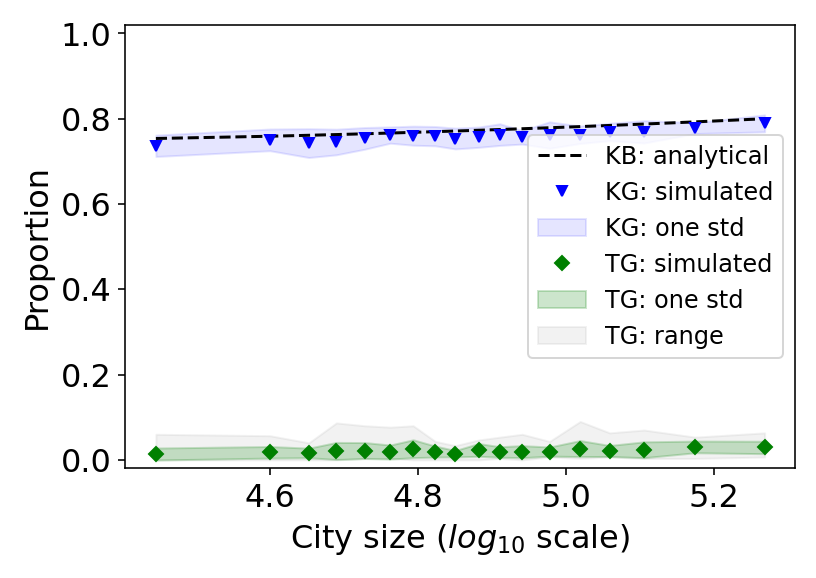}
	\end{tabular}
	\caption{Comparison of infected cities and isolated cities as a function of the theoretical $R_0$ value (top row)
		of simulated and theoretical infection (middle row) and outbreak probabilities (bottom row), on the left for strategy~\((P)\) and on the right for strategy~\((U)\), for Poland
		without restriction on the distance. The $R_0$ value is adjusted such that  the theoretical infection probability is 0.5
         for cities of size $10^5$, see Section \ref{sec_simulated_infection_and_outbreak_prob}.
	}\label{Fig:Compare PI v5}
\end{figure}

\begin{figure}
	\begin{center}
		Poland D50+
	\end{center}
	\begin{tabular}{l@{\hskip 0.8mm}cc}
		&Strategy \((P)\)& Strategy \((U)\)\\
		\rotatebox{90}{ \hcm{0.9} {\small Proportion}} \rotatebox{90}{ \hcm{0.6} {\small  under isolation}}
		&\includegraphics[width = 0.43\textwidth]{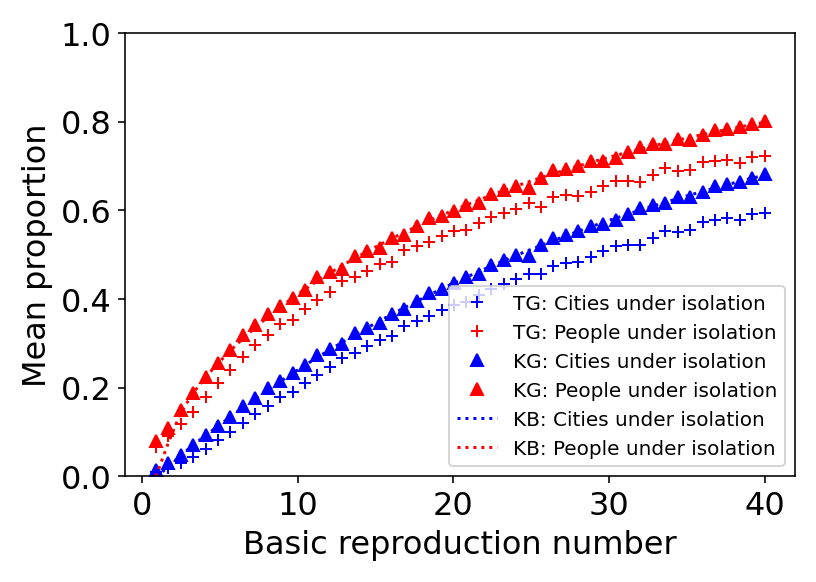}
		&\includegraphics[width = 0.43\textwidth]{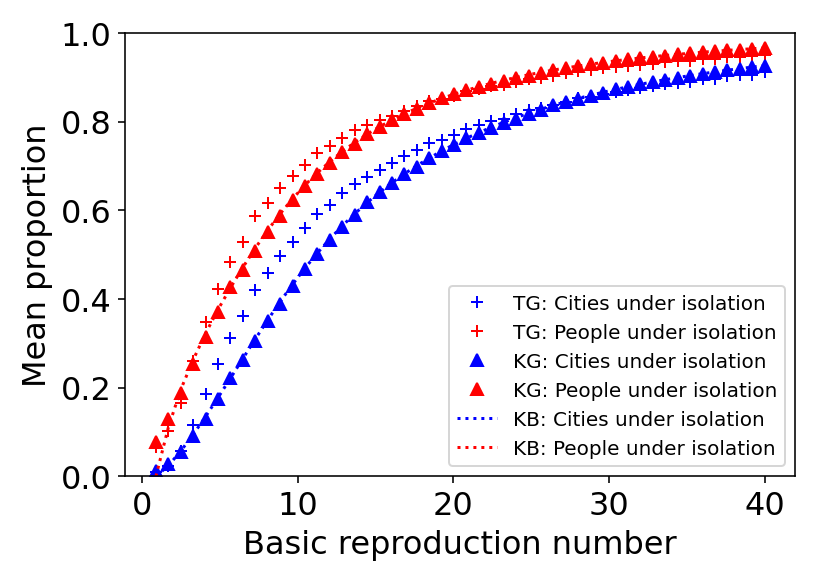}\\
		\rotatebox{90}{\quad {\small Infection probability}}
		&\includegraphics[width = 0.43\textwidth]{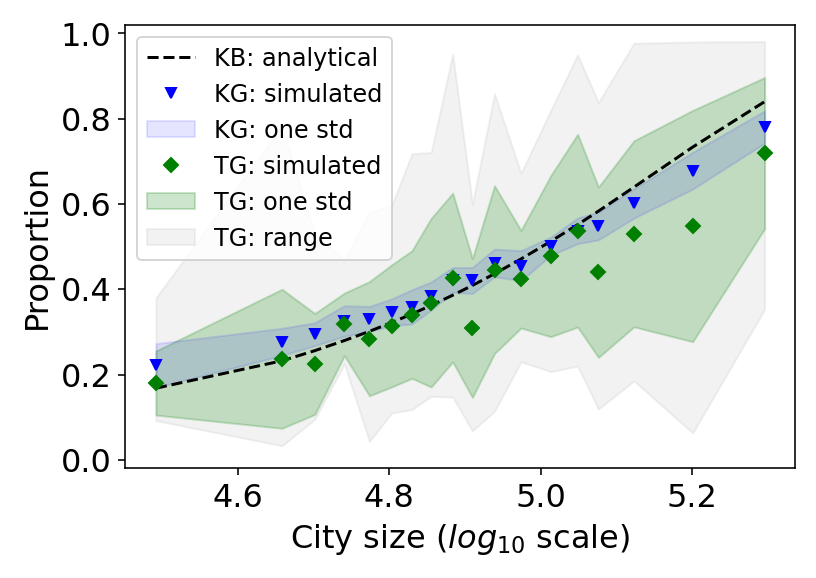}
		&\includegraphics[width = 0.43\textwidth]{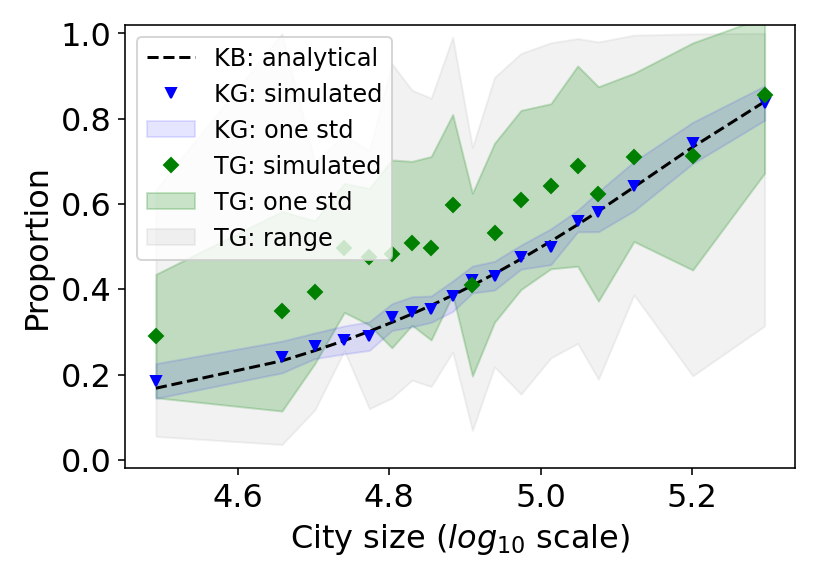}\\
		\rotatebox{90}{\quad {\small Outbreak probability}}
		&\includegraphics[width = 0.43\textwidth]{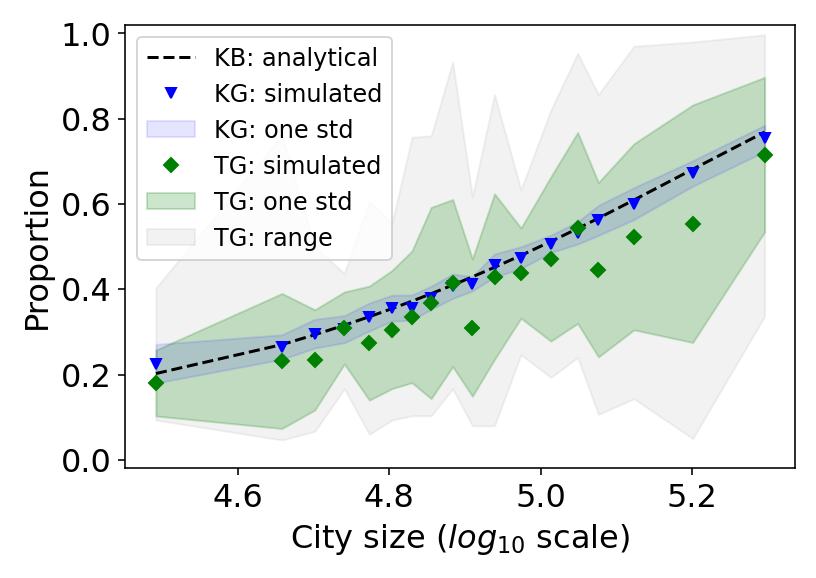}
		&\includegraphics[width = 0.43\textwidth]{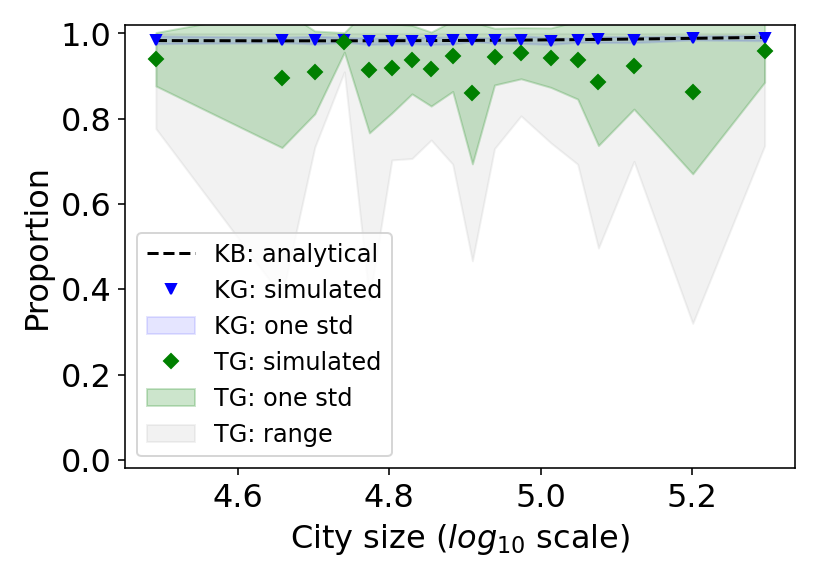}
	\end{tabular}
	\caption{Comparison of infected cities and isolated cities as a function of the theoretical $R_0$ value (top row)
		of simulated and theoretical infection (middle row) and outbreak probabilities (bottom row), on the left for strategy~\((P)\) and on the right for strategy~\((U)\), for Poland
		with a restriction on the distance of 50km. The $R_0$ value is adjusted such that the theoretical infection probability is 0.5
         for cities of size $10^5$, see Section \ref{sec_simulated_infection_and_outbreak_prob}.
	}\label{Fig:Compare PI v8}
\end{figure}

\begin{figure}
	\begin{center}
		Japan D1+
	\end{center}
	\begin{tabular}{l@{\hskip 0.8mm}cc}
		&Strategy \((P)\)& Strategy \((U)\)\\
		\rotatebox{90}{ \hcm{0.9} {\small Proportion}} \rotatebox{90}{ \hcm{0.6} {\small  under isolation}}
		&\includegraphics[width = 0.43\textwidth]{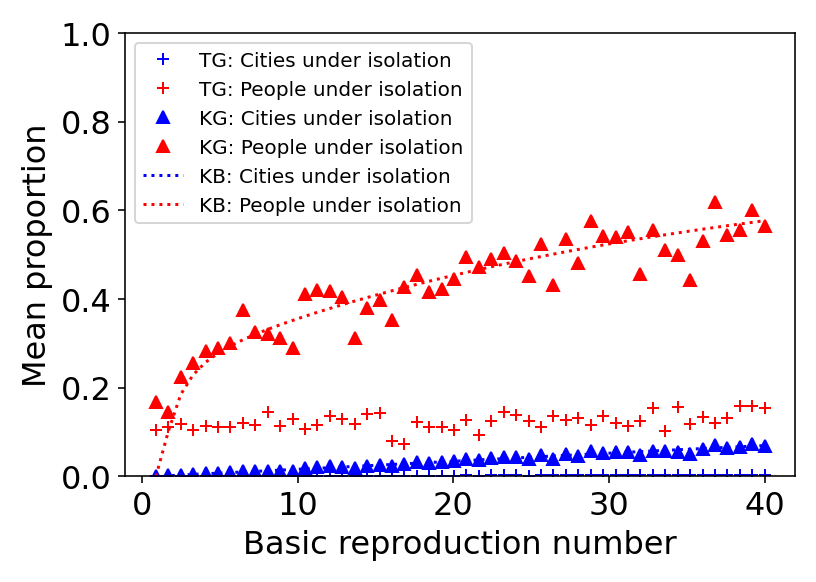}
		&\includegraphics[width = 0.43\textwidth]{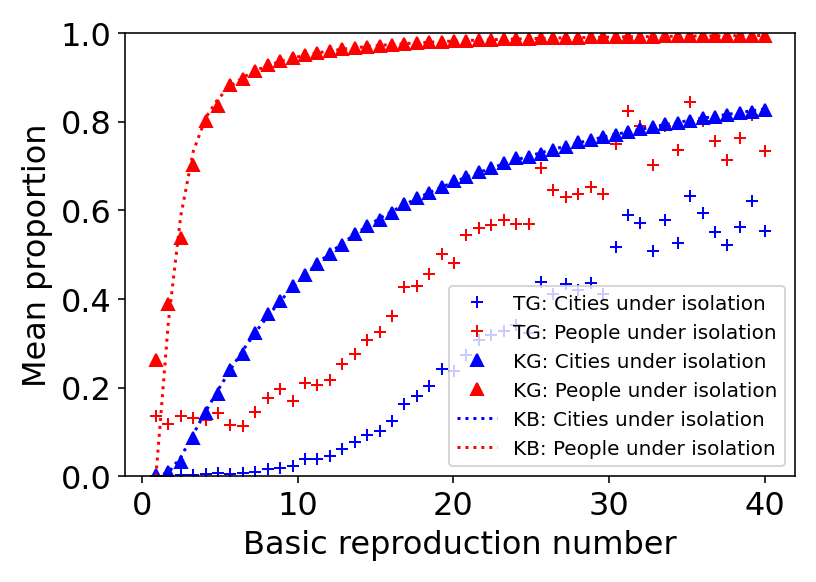}\\
		\rotatebox{90}{\quad {\small Infection probability}}
		&\includegraphics[width = 0.43\textwidth]{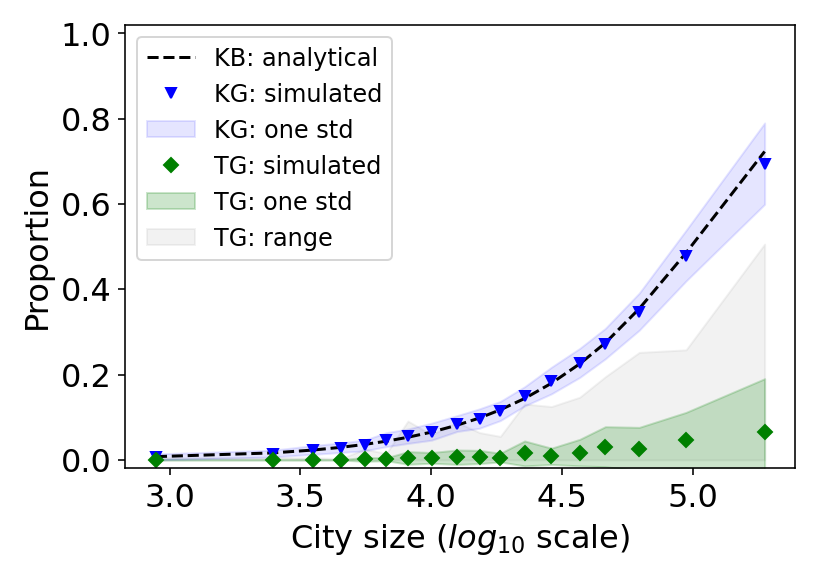}
		&\includegraphics[width = 0.43\textwidth]{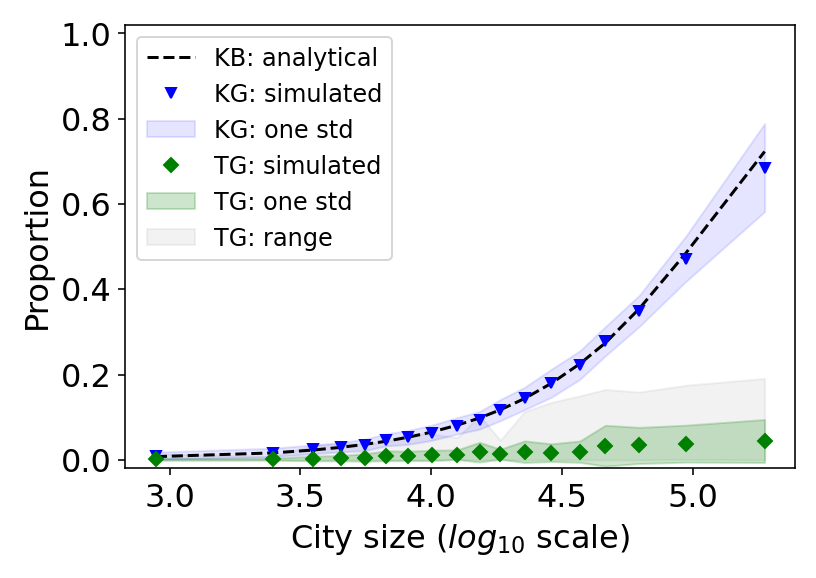}\\
		\rotatebox{90}{\quad {\small Outbreak probability}}
		&\includegraphics[width = 0.43\textwidth]{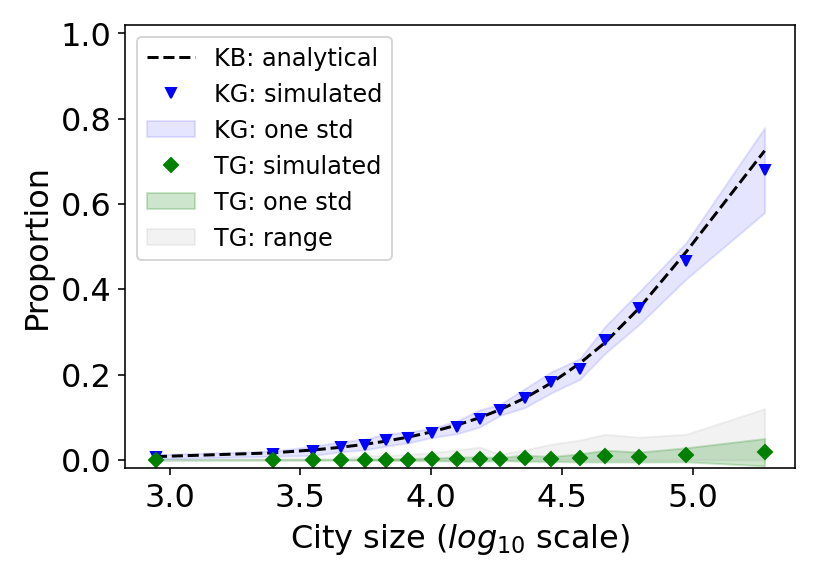}
		&\includegraphics[width = 0.43\textwidth]{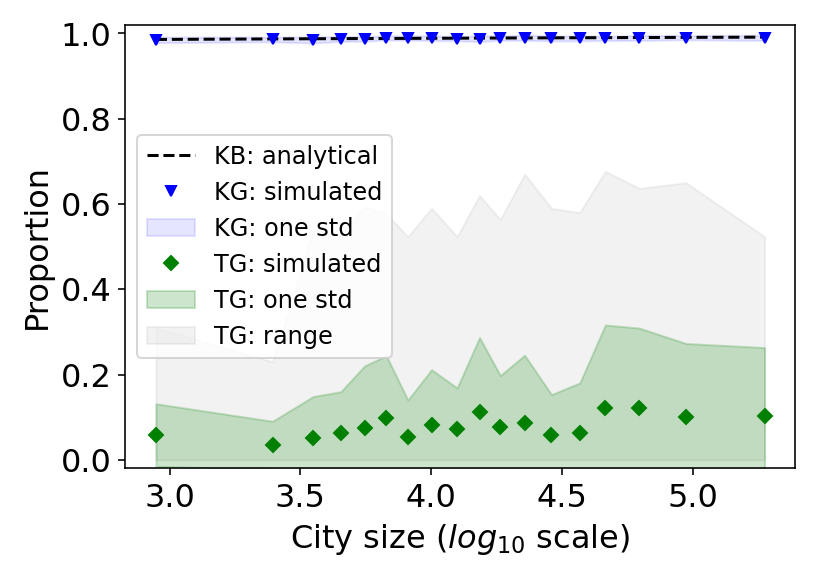}
	\end{tabular}
	\caption{Comparison of infected cities and isolated cities as a function of the theoretical $R_0$ value (top row)
		of simulated and theoretical infection (middle row) and outbreak probabilities (bottom row), on the left for strategy~\((P)\) and on the right for strategy~\((U)\), for Japan
		without restriction on the distance. The $R_0$ value is adjusted such that the theoretical infection probability is 0.5
         for cities of size $10^5$, see Section \ref{sec_simulated_infection_and_outbreak_prob}.
	}\label{Fig:Compare PI v6}
\end{figure}

\begin{figure}
	\begin{center}
		Japan D50+
	\end{center}
	\begin{tabular}{l@{\hskip 0.8mm}cc}
		&Strategy \((P)\)& Strategy \((U)\)\\
		\rotatebox{90}{ \hcm{0.9} {\small Proportion}} \rotatebox{90}{ \hcm{0.6} {\small  under isolation}}
		&\includegraphics[width = 0.43\textwidth]{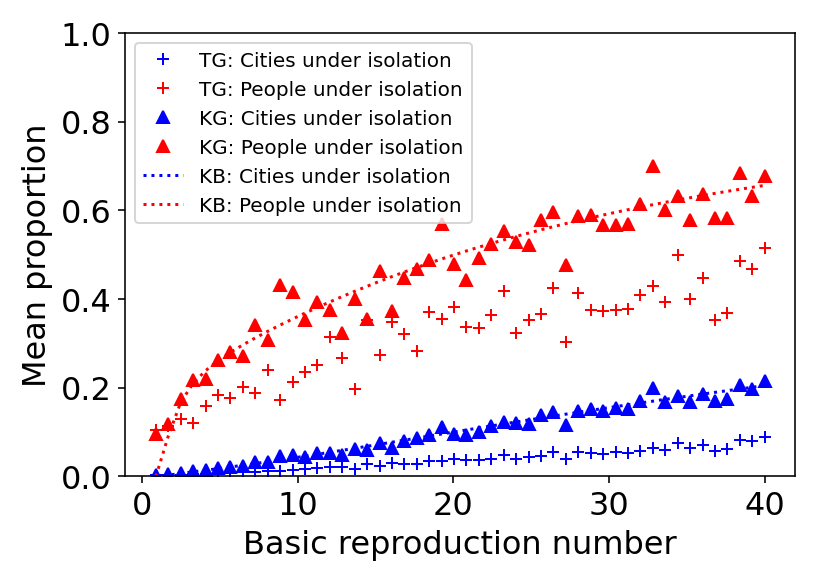}
		&\includegraphics[width = 0.43\textwidth]{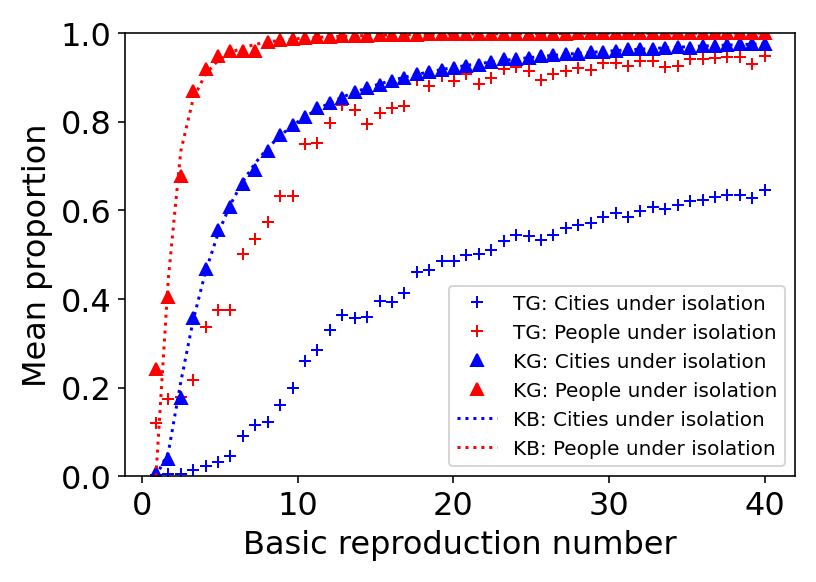}\\
		\rotatebox{90}{\quad {\small Infection probability}}
		&\includegraphics[width = 0.43\textwidth]{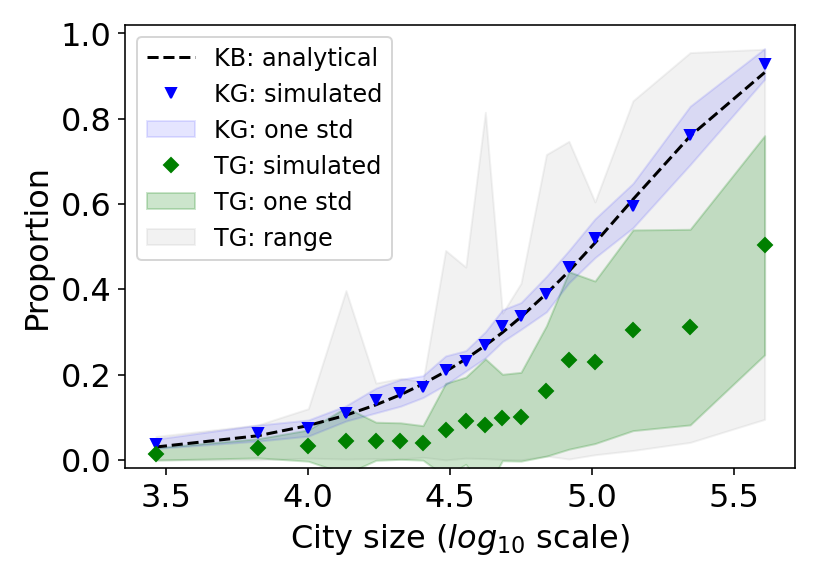}
		&\includegraphics[width = 0.43\textwidth]{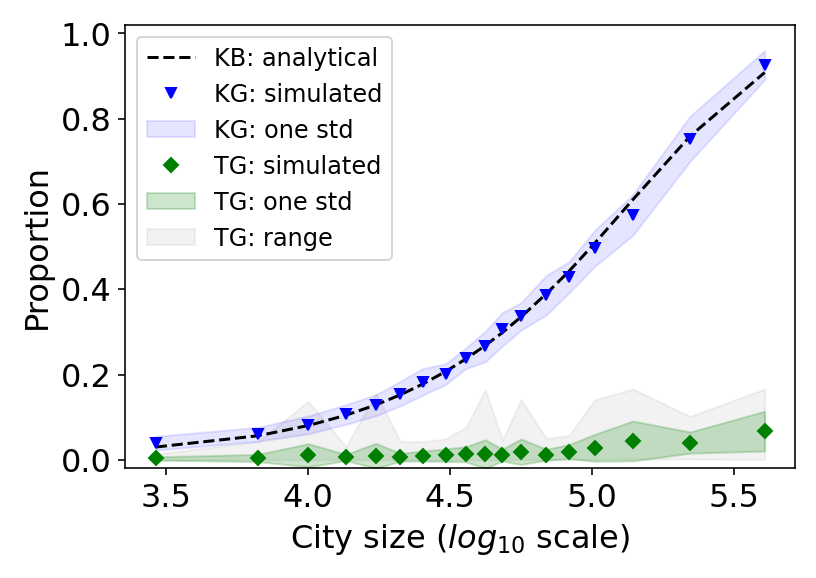}\\
		\rotatebox{90}{\quad {\small Outbreak probability}}
		&\includegraphics[width = 0.43\textwidth]{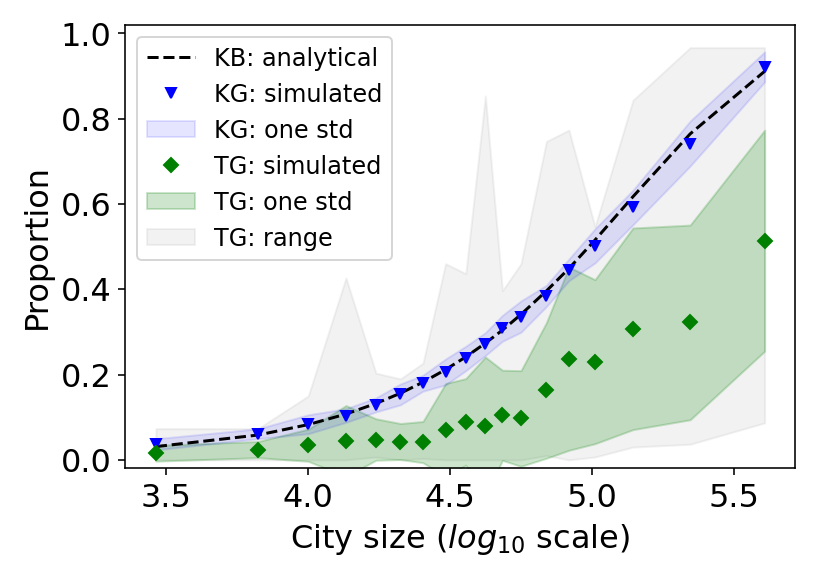}
		&\includegraphics[width = 0.43\textwidth]{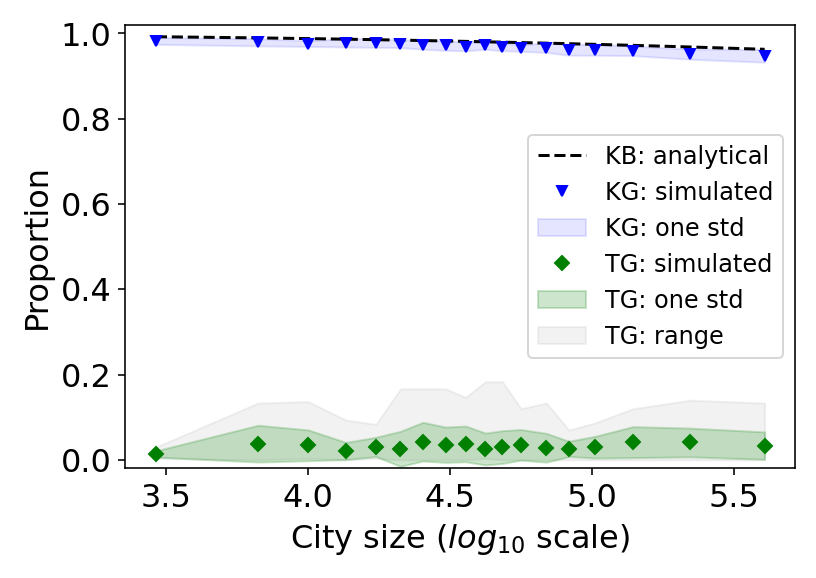}
	\end{tabular}
	\caption{Comparison of infected cities and isolated cities as a function of the theoretical $R_0$ value (top row)
		of simulated and theoretical infection (middle row) and outbreak probabilities (bottom row), on the left for strategy~\((P)\) and on the right for strategy~\((U)\), for Japan
		with a restriction on the distance of 50km. The $R_0$ value is adjusted such that the theoretical infection probability is 0.5
         for cities of size $10^5$, see Section \ref{sec_simulated_infection_and_outbreak_prob}.
	}\label{Fig:Compare PI v9}
\end{figure}

\end{document}